\newcommand{\doi}[1]{\textsc{doi}: \href{http://dx.doi.org/#1}{\nolinkurl{#1}}}
\begin{document}
\title{Model Checking Data Flows in Concurrent Network Updates (Full Version)\thanks{This work was supported by the German Research Foundation (DFG) Grant Petri Games (392735815) and the Collaborative Research Center “Foundations of Perspicuous Software Systems” (TRR 248, 389792660), and by the European Research Council (ERC) Grant OSARES (683300).}
}
\titlerunning{Model Checking Data Flows in Concurrent Network Updates (Full Version)}
%
\author{
Bernd Finkbeiner\inst{1} \and
Manuel Gieseking\inst{2} \and\\
Jesko Hecking-Harbusch\inst{1} \and
Ernst-R\"udiger Olderog\inst{2}
}

\authorrunning{
B.\ Finkbeiner et al.
} 
%
\institute{
Saarland University, Saarbr\"ucken, Germany\\
\and
University of Oldenburg, Oldenburg, Germany\\
}

\maketitle

\begin{abstract}
  We present a model checking approach for the verification of
  data flow correctness in networks during concurrent updates of the network configuration.
  This verification problem is of great importance
  for software-defined networking (SDN), where errors 
  can lead to packet loss, black holes, and security
  violations.  Our approach is based on a specification of temporal
  properties of individual data flows, such as the requirement that the flow
  is free of cycles.  We check whether these properties are
  simultaneously satisfied for all active data flows while the network configuration is updated. 
  To represent the behavior of the concurrent network controllers and the
  resulting evolutions of the configurations, we introduce an
  extension of Petri nets with a transit relation, which characterizes
  the data flow caused by each transition of the Petri net. For safe Petri
  nets with transits, we reduce the verification of temporal flow properties
  to a circuit model checking problem that can be solved with
  effective verification techniques like IC3, interpolation, and
  bounded model checking.  We report on encouraging experiments with a
  prototype implementation based on the hardware model checker ABC.
\end{abstract}

\section{Introduction}
\nocite{*}
\label{sec:intro}
Software-defined networking (SDN)~\cite{DBLP:journals/ccr/McKeownABPPRST08,DBLP:journals/cacm/CasadoFG14} is a networking technology that
separates the packet forwarding process, called the 
\emph{data plane}, from the routing process, called the \emph{control plane}.
Updates to the routing configuration can be initiated by a central
controller and are then implemented in a distributed manner in the 
network. The separation of data plane and control plane makes the
management of a software-defined network dramatically more efficient than a traditional
network.
The model checking of network configurations and concurrent updates between them is a serious challenge.
The distributed update process can cause issues like forwarding loops,
black holes, and incoherent routing which, from the perspective of the end-user, result in performance degradation, broken
connections, and security violations.
Correctness of concurrent network updates has previously
been addressed with restrictions like \emph{consistent
updates}~\cite{DBLP:conf/sigcomm/ReitblattFRSW12}:
every packet is guaranteed during its entire journey to either encounter the initial routing
configuration or the final routing
configuration, but never a mixture in the sense that some switches
still apply the old routing configuration and others already apply the new
routing configuration. 
Under these restrictions, updates to network configurations can be synthesized~\cite{DBLP:conf/cav/El-HassanyTVV17,DBLP:conf/cav/McClurgHC17}.
Ensuring consistent updates is slow and expensive: switches must
store multiple routing tables and messages must be tagged with
version numbers.

In this paper, we propose the \emph{verification} of network configurations and concurrent updates between them. 
We specify desired properties of the data flows in the network, such as the absence of
loops, and then automatically check, for a given initial routing configuration
and a
concurrent update, whether the specified
properties are simultaneously satisfied for all active data flows
while the routing configuration is updated. This allows us to check a
specific concurrent update and to thus only impose a sequential
order where this is strictly needed to avoid an erroneous configuration
during the update process.

Our approach is based on temporal logic and model checking. The
control plane of the network can naturally be specified as a Petri
net. Petri nets are convenient to differentiate between
sequential and parallel update steps. The data plane, however, is
more difficult to specify. The standard flow relation of a Petri net
does not describe which ingoing token of a transition
transits to which outgoing token. In theory, such a
connection could be made with \emph{colored} Petri nets~\cite{jensen92}, by using a
uniquely colored token for each data flow in the network. Since there is no
bound on the number of packets, this would 
require infinitely many tokens and colors to track the infinitely many data flows.
To avoid this problem, we develop an extension of Petri nets called \emph{Petri nets with
transits}, which augment standard Petri nets with a \emph{transit
relation}. This relation specifies the precise data flow between ingoing
and outgoing tokens of a transition. In Petri nets with transits, a single token can carry an unbounded number of data flows. 

We introduce a linear-time temporal logic called \emph{Flow-LTL} to
specify the correct data flows in Petri nets with transits.
The logic expresses requirements on several separate timelines:
global conditions, such as fairness, are expressed in terms of the
global timeline of the system run. 
Requirements on individual 
data flows, such as that the data flow does not enter a loop, on the other hand,
are expressed in terms of the timeline of that specific data flow. The next operator,
for example, refers to the next step taken by the particular data flow, independently
of the behavior of other, simultaneously active, data flows.

Concurrent updates of software-defined networks can be modeled as safe Petri nets
with transits. We show that the model checking problem of the infinite state space of Petri
nets with transits against a Flow-LTL formula can be reduced to the LTL model
checking problem for Petri nets with a finite number of tokens; and that this model checking
problem can in turn be reduced to checking a hardware circuit against
an LTL formula. This ultimately results in a
standard verification problem, for which highly efficient tools such
as ABC~\cite{abcAsTheyWantIt} exist.

\section{Motivating Example}
\label{sec:motivation}
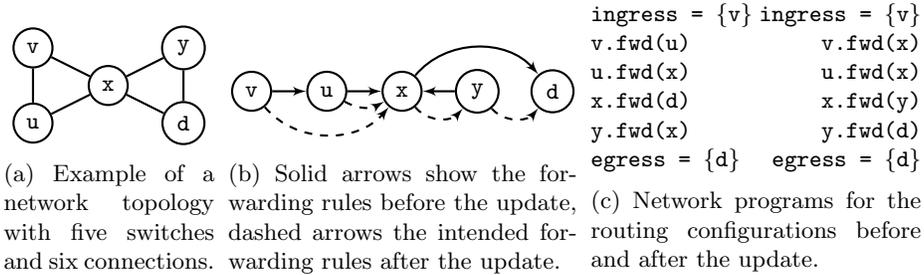
\begin{figure}[t]
\centering
	\begin{subfigure}[t]{0.23\textwidth}
		\centering
		\begin{tikzpicture}
			\tikzset{vertex/.style = {shape=circle, draw, minimum size=1.5em}}
			\tikzset{edge/.style = {- = latex'}}
			\node[vertex] (u) at  (0,0)  {\texttt{u}};
			\node[vertex] (v) at  (0,1)  {\texttt{v}};
			\node[vertex] (x) at  (1,0.5){\texttt{x}};
			\node[vertex] (y) at  (2,0)  {\texttt{d}};
			\node[vertex] (d) at  (2,1)  {\texttt{y}};
			\draw[edge] (v) to (u);
			\draw[edge] (v) to (x);
			\draw[edge] (u) to (x);
			\draw[edge] (y) to (d);
			\draw[edge] (y) to (x);
			\draw[edge] (d) to (x);
		\end{tikzpicture}
		\subcaption{Example of a network topology with five switches and six connections.}
		\label{fig:motivation_topology}
	\end{subfigure}%
~
	\begin{subfigure}[t]{0.38\textwidth}
		\centering
		\begin{tikzpicture}
			\tikzset{vertex/.style = {shape=circle, draw, minimum size=1.5em}}
			\tikzset{edge/.style = {->,> = latex'}}
			\node[vertex] (v) at  (0,0)  {\texttt{v}};
			\node[vertex] (u) at  (1,0)  {\texttt{u}};
			\node[vertex] (x) at  (2,0)  {\texttt{x}};
			\node[vertex] (y) at  (3,0)  {\texttt{y}};
			\node[vertex] (d) at  (4,0)  {\texttt{d}};
			\draw[edge] (v) to (u);
			\draw[edge] (u) to (x);
			\draw[edge] (y) to (x);
			\draw[edge, bend left=50] (x) to (d);
			\draw[edge, bend right=50, dashed] (v) to (x);
			\draw[edge, bend right=30, dashed] (u) to (x);
			\draw[edge, bend right=50, dashed] (y) to (d);
			\draw[edge, bend right=50, dashed] (x) to (y);
		\end{tikzpicture}
		\subcaption{Solid arrows show the forwarding rules before the update, dashed arrows the intended forwarding rules after the update.}
		\label{fig:motivation_update}
	\end{subfigure}%
~
	\begin{subfigure}[]{0.36\textwidth}
			\texttt{ingress = \{v\}} \texttt{ingress = \{v\}}
			\texttt{v.fwd(u)} \qquad\qquad \texttt{  v.fwd(x)}
			\texttt{u.fwd(x)} \qquad\qquad \texttt{  u.fwd(x)}
			\texttt{x.fwd(d)} \qquad\qquad \texttt{  x.fwd(y)}
			\texttt{y.fwd(x)} \qquad\qquad \texttt{  y.fwd(d)}
			\texttt{egress = \{d\}} \quad \texttt{egress = \{d\}}
		\subcaption{Network programs for the routing configurations before and after the update.}
		\label{fig:motivation_program}
	\end{subfigure}%
	\caption{Example (due to~\cite{DBLP:conf/networking/ForsterMW16}) of an update to a software-defined network.
	}
	\label{fig:motivation}
\end{figure}

We motivate our approach with a typical network update problem taken from the literature~\cite{DBLP:conf/networking/ForsterMW16}.
Consider the simple network topology shown in Fig.~\ref{fig:motivation_topology}. From the global point of view, our goal is to update
the network from the routing configuration shown with solid lines in Fig.~\ref{fig:motivation_update} to the routing configuration
shown with dashed lines. Such routing configurations are typically given as static NetCore~\cite{DBLP:conf/icfp/FosterHFMRSW11,DBLP:conf/nsdi/MonsantoRFRW13} programs like the ones
shown in Fig.~\ref{fig:motivation_program}. The \texttt{ingress} and \texttt{egress} sections define where packets enter and leave the network, respectively. 
Expressions of the form \texttt{v.fwd(u)} define that switch~\texttt{v} forwards packets to switch~\texttt{u}.

It is not straightforward to see how the update from \refFig{motivation_update} can be implemented in a distributed manner. 
If switch~\texttt{x} is updated to forward to switch~\texttt{y} \emph{before} \texttt{y} is updated to forward to switch~\texttt{d}, then any data
flow that reaches \texttt{x} is sent into a loop between \texttt{x} and
\texttt{y}. A correct update process
must thus ensure sequentiality between switch updates
\texttt{upd(y.fwd(d))} and \texttt{upd(x.fwd(y))}, in this order.
The only other switch with changing routing is switch \texttt{v}. This update
can occur in any order. A correct concurrent update would thus work as follows:
\[
	\texttt{(upd(y.fwd(d)) >> upd(x.fwd(y))) || upd(v.fwd(x))},
\]
where \texttt{>>} and \texttt{||} denote sequential and parallel composition, respectively.

\begin{figure}[t]
\centering
	\begin{tikzpicture}[node distance=1.25cm, on grid]%
		\node [envplace, tokens=1]						(s0)  [label=right:$\mathit{sw_\texttt{v}}$] {};
		\node [transition, above=of s0] 					(t0)  [label=above:$\mathit{fwd}_{\texttt{v}\rightarrow \texttt{u}}$] {};
		\node [envplace, left=of t0, tokens=1]				(s1)  [label=left:$\mathit{sw_\texttt{u}}$] {};
		\node [envplace, right=of t0, tokens=1]				(s2)  [label=above:\texttt{~~~~~v.fwd(u)}] {};
		
		\draw[thick,shorten >=1pt,<->,shorten <=1pt] (s0) -- (t0);
		\draw[thick,shorten >=1pt,<->,shorten <=1pt] (s1) -- (t0);
		\draw[thick,shorten >=1pt,<->,shorten <=1pt] (s2) -- (t0);
		
		\draw[thick,->,dotted, cdc_Blue, shorten >=1pt, shorten <=1pt] ([xshift=-0.2cm]s0.north) -- ([xshift=-0.2cm]t0.south);
		\draw[thick,->,dotted, cdc_Blue, shorten >=1pt, shorten <=1pt] ([yshift=-0.2cm]t0.west) -- ([yshift=-0.2cm]s1.east);
		\draw[thick,<->, dashed, cdc_Green, shorten >=1pt, shorten <=1pt] ([yshift=0.2cm]t0.west) -- ([yshift=0.2cm]s1.east);

		\node [transition, below=of s0] 								(t1)  [label=below:$\mathit{fwd}_{\texttt{v}\rightarrow \texttt{x}}$] 	{};
		\node [envplace, left=of t1, tokens=1]						(s3)  [label=left:$\mathit{sw_\texttt{x}}$]    {};
		\node [envplace, right=of t1]							(s4)  [label=below:\texttt{~~~~~v.fwd(x)}]    {};
		
		\draw[thick,shorten >=1pt,<->,shorten <=1pt] (s0) -- (t1);
		\draw[thick,shorten >=1pt,<->,shorten <=1pt] (s3) -- (t1);
		\draw[thick,shorten >=1pt,<->,shorten <=1pt] (s4) -- (t1);
			
		\node [transition, left of=s0]							(t00)  [label=above:$\mathit{ingress}_\texttt{v}$] 	{};
		\draw[thick,shorten >=1pt,<->,shorten <=1pt] (s0) -- (t00);
		
		\draw[thick,->,dotted, cdc_Blue, shorten >=1pt, shorten <=1pt] ([yshift=-0.2cm]t00.east) -- ([yshift=-0.2cm]s0.west);
		\draw[thick,<->,dashed, cdc_Green, shorten >=1pt, shorten <=1pt] ([yshift=0.2cm]t00.east) -- ([yshift=0.2cm]s0.west);
			
		\draw[thick,->,dotted, cdc_Blue, shorten >=1pt, shorten <=1pt] ([xshift=-0.2cm]s0.south) -- ([xshift=-0.2cm]t1.north);
		\draw[thick,->,dotted, cdc_Blue, shorten >=1pt, shorten <=1pt] ([yshift=-0.2cm]t1.west) -- ([yshift=-0.2cm]s3.east);
		\draw[thick,<->, dashed, cdc_Green, shorten >=1pt, shorten <=1pt] ([yshift=0.2cm]t1.west) -- ([yshift=0.2cm]s3.east);
		
		\node [envplace, tokens=1, right=6cm of s1]						(u0)  [label=above:$\mathit{update\_start}$] {};
		\node [envplace, left of=u0]								(u1)  [label=left:] {};
		\node [envplace, right of=u0]								(u2)  [label=left:] {};
		\node [envplace, below of=u1, below of=u1]								(u4)  [label=left:] {};
		\node [envplace, below of=u2, right of=u2]								(u5)  [label=left:] {};
		\node [envplace, below of=u2, below of=u2]								(u6)  [label=left:] {};
		
		\node [transition] 	at ($(u1)!0.5!(u4)$)							(t2)  [label=left:$\mathit{update}_\texttt{v}~$] 	{}
			edge [pre]  (s2)
			edge [pre]  (u1)
			edge [post]	(s4)
			edge [post]  (u4);
		
		\node [transition, below of=u0]		(U0)  [label={[align=left]below:$\mathit{concurrent}\_$\\~~~$\mathit{update}$}] {};
		\draw[thick,shorten >=1pt,->,shorten <=1pt] (u0) -- (U0);
		\draw[thick,shorten >=1pt,->,shorten <=1pt] (U0) -- (u1);
		\draw[thick,shorten >=1pt,->,shorten <=1pt] (U0) -- (u2);
		
		\node [transition, right of=u2]		(U1)  [label=right:$\mathit{update}_\texttt{y}$] {};
		\draw[thick,shorten >=1pt,->,shorten <=1pt] (u2) -- (U1);
		\draw[thick,shorten >=1pt,->,shorten <=1pt] (U1) -- (u5);
		
		\node [transition, below of=u5]		(U2)  [label=right:$\mathit{update}_\texttt{x}$] {};
		\draw[thick,shorten >=1pt,->,shorten <=1pt] (u5) -- (U2);
		\draw[thick,shorten >=1pt,->,shorten <=1pt] (U2) -- (u6);
		
		\node [envplace, left of=U2, below=0.75cm of U1, opacity=0]								(u66)  [label={[label distance=-8mm]left:$\ldots$}] {};
		\node [envplace, right of=U2, below=0.75cm of U1, opacity=0]								(u7)  [label={[label distance=-8mm]right:$\ldots$}] {};
		\draw[thick,shorten >=1pt,->,shorten <=1pt] (u66) -- (U1);
		\draw[thick,shorten >=1pt,->,shorten <=1pt] (U1) -- (u7);
		
		\node [envplace, right of=U2, above=0.75cm of U2, opacity=0]								(u8)  [label={[label distance=-8mm]right:$\ldots$}] {};
		\node [envplace, left of=U2, above=0.75cm of U2, opacity=0]								(u9)  [label={[label distance=-8mm]left:$\ldots$}] {};
		\draw[thick,shorten >=1pt,->,shorten <=1pt] (u9) -- (U2);
		\draw[thick,shorten >=1pt,->,shorten <=1pt] (U2) -- (u8);

	\end{tikzpicture}
	\caption{Example Petri net with transits encoding the data plane on the left and the control plane on the right. The standard flow relation, describing the flow of tokens, is depicted by solid black arrows, the transit relation by colored arrows. Colors that occur on both ingoing and outgoing arrows of a transition define that the transition extends the data flow. If an outgoing arrow has a color that does not appear on an ingoing arrow, a new data flow is initiated.
}
	\label{fig:SDN_singleupdate}
\end{figure}
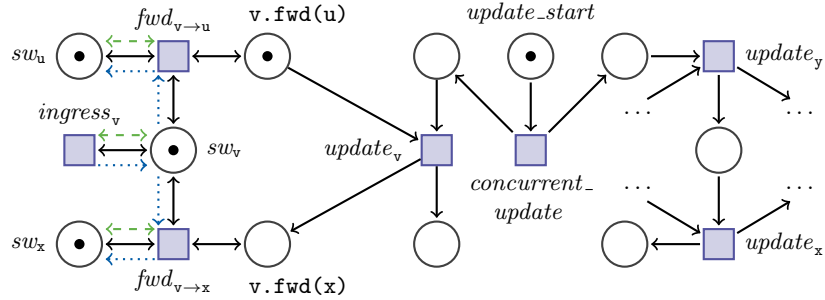

Figure~\ref{fig:SDN_singleupdate} shows a Petri net model for the network topology and the concurrent update from the initial to the final routing configuration from \refFig{motivation}. 
The right-hand side models the control plane, where, beginning in $\mathit{update\_start}$, the
update of \texttt{v} and, concurrently, the sequential update to \texttt{y} and then to
\texttt{x} is initiated. 
Each marking of the net represents a control state of the network. 
Changes to the control state are thus modeled by the standard flow relation.
Leaving out the control plane allows us to verify configurations of network topologies. 

On the left-hand side, we model the data plane by extending the Petri net with a transit relation. 
This new type of Petri nets will be defined formally in the next section.
We only depict the update to the data flow in and from switch~\texttt{v}.
Places $sw_\texttt{u}$, $sw_\texttt{v}$, and $sw_\texttt{x}$ represent the switches \texttt{u}, \texttt{v},  and \texttt{x}, respectively.
The data plane is modeled by the transit relation which indicates the extension of the data flows during each transition at the switches. 

The standard flow relation is depicted by solid black arrows and the transit relation by colored arrows. If an outgoing arrow has a color that does \emph{not} appear
on an ingoing arrow, then a new data flow is initiated. In our example, data flows are initiated by transition $\mathit{ingress}_\texttt{v}$ and the (dotted) blue arrow.
Colors that occur on both in- and outgoing arrows extend the data flow. In transition $\mathit{fwd}_{\texttt{v}\rightarrow \texttt{u}}$, the (dotted) blue arrows 
indicate the extension of the data flow from $sw_\texttt{v}$ to $sw_\texttt{u}$. The (dashed) green arrows between $\mathit{fwd}_{\texttt{v}\rightarrow \texttt{u}}$ and $sw_\texttt{u}$ indicate that,
in addition to the incoming data flow from $sw_\texttt{v}$, there may be data flows that have previously reached $sw_\texttt{u}$ and have not yet departed from  $sw_\texttt{u}$. These flows stay in $sw_\texttt{u}$. 

Notice that $\mathit{ingress}_\mathtt{v}$, $\mathit{fwd}_{\texttt{v}\rightarrow \texttt{u}}$, and $\mathit{fwd}_{\texttt{v}\rightarrow \texttt{x}}$ do not actually move tokens because of the double-headed arrows.
None of these transitions change the control state, they only model the data flow. 
As the switches  \texttt{u}, \texttt{v},  and \texttt{x} remain continuously active, their tokens in $sw_u$, $sw_v$, and $sw_x$ are never moved.
By contrast, $\mathit{update}_\texttt{v}$ moves the token from \texttt{v.fwd(u)} to \texttt{v.fwd(x)}, thus disabling the data flow from $sw_\texttt{v}$ to $sw_\texttt{u}$ and enabling the
data flow from $sw_\texttt{v}$ to $sw_\texttt{x}$. 
We specify the correctness of our update process with formulas of the temporal logic \emph{Flow-LTL}. 
The formula $\A \ltlEventually \texttt{d}$ expresses \emph{connectivity} requiring that
all data flows ($\A$) eventually ($\ltlEventually$) arrive at the egress switch~\texttt{d}.
Flow-LTL and the specification of data flow properties are discussed in more detail in \refSection{FlowLTL} and \refSection{applications}.
The general construction of the motivating example is formalized in App.~\ref{appendix:motivation}.

\section{Petri Nets with Transits}
\label{sec:petri-nets}
We give the formal definition of \emph{Petri nets with transits}. 
We assume some basic knowledge about standard Petri nets~\cite{DBLP:books/sp/Reisig85a}.
A safe \emph{Petri net with transits} (PNwT) is a structure $\petriNetFl$, where the set of \emph{places}~$\pl$, the set of \emph{transitions}~$\tr$, 
the (\emph{control}) \emph{flow relation}~$\fl \subseteq (\pl \times \tr) \cup (\tr \times \pl)$, and the \textit{initial marking}~$\init\subseteq\pl$ are as in \emph{safe} Petri nets.
In safe Petri nets, each reachable marking contains at most one token per place. 
We add the \emph{transit relation}~$\tokenflow$ of tokens for transitions to obtain Petri nets with transits.
For each transition $t \in \transitions$, we postulate that $\tokenflow(t)$ is a relation of type $\tokenflow(t) \subseteq (\pre{\pNet}{t}\cup\{\startfl\}) \times \post{\pNet}{t}$, written in infix notation, where the symbol $\startfl$ denotes a \emph{start}.
$p\ \tokenflow(t)\ q$ defines that the token in place~$p$ \emph{transits} via transition~$t$ to place~$q$ and $\startfl\ \tokenflow(t)\ q$ defines that the token in place~$q$ marks the start of a new data flow via transition~$t$.
The graphic representation of $\tokenflow(t)$ 
in Petri nets with transits uses a \emph{color coding} as can be seen in 
\refFig{SDN_singleupdate}.
Black arrows represent the usual \emph{control flow}.
Other matching colors per transition are used to represent the transits of tokens.
Transits allow us to specify which
data flows are moved forward, split, and merged, which data flows are
removed, and which data flows are newly created.

Data flows can be of infinite length
and can be created at any point in time.
Hence, the number of data flows existing in a place during an execution depends on the causal past of the place.
Therefore, we recall informally the notions of unfoldings and runs~\cite{DBLP:journals/acta/Engelfriet91,DBLP:series/eatcs/EsparzaH08}
and apply them to Petri nets with transits. 
In the unfolding of a Petri net $\pNet$, every transition stands for the unique occurrence
(instance) of a transition of $\pNet$ during an execution.
To this end, every loop in~$\pNet$ is unrolled and every  join of transitions in a place
is expanded by duplicating the place.
Forward branching, however, is preserved. Formally, an \emph{unfolding}
is a branching process $\beta^U = (\pNet^U, \lambda^U)$ 
consisting of an occurrence net~$\pNet^U$ and a 
homomorphism $\lambda^U$ that labels the places and transitions in $\pNet^U$
with the corresponding elements of $\pNet$.
The unfolding exhibits concurrency, causality, and nondeterminism (forward branching)
of the unique occurrences of the transitions in $\pNet$ during all possible
executions.
A \emph{run} of $\pNet$ is a subprocess $\runPN = (\pNet^R, \rho)$ of $\beta^U$,
where \(\forall p\in\pl^R:|\post{\pNet^R}{p}|\leq 1\) holds,
i.e., all nondeterminism has been resolved but concurrency is preserved.
Thus, a run formalizes one concurrent execution of $\pNet$.
We introduce the \emph{unfolding of Petri nets with transits} by lifting the transit relation to the
unfolding $\beta^U = (\pNet^U, \lambda^U)$ .
We define the relation $\tokenflow^U$ as follows:
For any $t \in \transitions^U$, the transit relation 
$\tokenflow^U(t) \subseteq (\pre{\pNet^U}{t} \cup \{\rhd\}) \times \post{\pNet^U}{t}$
is defined for all $p,q \in \places^U$ by $p\ \tokenflow^U(t)\ q \Leftrightarrow \lambda^U(p)\ \tokenflow(\lambda^U(t))\ \lambda^U(q)$.

We use the transit relation in the unfolding to introduce (data) flow chains.
A (\emph{data}) \emph{flow chain} in $\beta^U$ is a maximal sequence
$\xi =  p_0, t_0, p_1, t_1, p_2, \dots$
of places in $\places^U$ with connecting transitions in $\transitions^U$ such that
\begin{enumerate}
	\item $\exists t \in \transitions^U: \startfl\ \tfl^U(t)\ p_0$,
	
	\item if $\xi$ is infinite then for all $i \ge 0$ 
	      the transit relation 
	      $p_i\ \tfl^U(t_i)\  p_{i+1}$ holds,
  
  \item if $\xi$ is finite, say $p_0,t_0, \dots, t_{n-1}, p_n$ 
	      for some $n \ge 0$,
	      then for all $i$ with $0 \le i < n$ the transit relation 
	      $p_i\ \tfl^U(t_i)\  p_{i+1}$ holds, and
	      there is no place $q \in \pl^U$ and no transition $t \in \tr^U$ 
	      with $p_n\ \tokenflow^U(t)\ q$.
\end{enumerate}

\section{Flow-LTL for Petri Nets with Transits}
\label{sec:FlowLTL}
We recall LTL applied to Petri nets and
define our extension \emph{Flow-LTL} to specify the behavior of flow chains
in Petri nets with transits.
We fix a Petri net with transits $\petriNetFl$ throughout the section.

\subsection{Linear Temporal Logic for Petri nets}

We define $AP = \places \cup \transitions$ as the set of \emph{atomic propositions}. 
The set $\mathtt{LTL}$ of \emph{linear temporal logic}~(LTL) formulas over $AP$ has the following syntax $\phiLTL ::= true \mid a \mid \neg \phiLTL \mid \phiLTL_1 \land \phiLTL_2 \mid \ltlNext \phiLTL \mid \phiLTL_1 \U \phiLTL_2$,
where $a \in AP$. Here, $\ltlNext$ is the \emph{next} operator and {\sf U} is 
the \emph{until} operator. 
We use the abbreviated temporal operators $\ltlEventually$ (\emph{eventually}) and $\ltlAlways$ (\emph{always}) as usual.
A \emph{trace} is a mapping $\sigma: \mathbb{N} \longrightarrow 2^{AP}$.
The trace $\sigma^i: \mathbb{N} \longrightarrow 2^{AP}$, defined by
$\sigma^i(j) = \sigma(i+j)$ for all $j \in \mathbb{N}$, is the 
$i$th \emph{suffix} of \(\sigma\).

We define the traces of a Petri net based on its runs.
Consider a run $\runPN = (\mathcal N^R, \rho)$ of $\pNet$ and 
a finite or infinite firing sequence 
$\zeta = M_0\firable{t_0}M_1\firable{t_1}M_2\cdots $ of $\pNet^R$ with 
$M_0 = \initialmarking^{R}$. This sequence \emph{covers} $\runPN$ if 
\(
  (\forall p \in \places^{R}\ \exists i \in \mathbb{N}: p \in M_i)
  \land
  (\forall t \in \transitions^{R}\ \exists i \in \mathbb{N}: t = t_i),
\)
i.e., all places and transitions in $\mathcal N^R$ appear in $\zeta$. 
Note that several firing sequences may cover $\runPN$. 
To each firing sequence~$\zeta$ covering $\runPN$, we associate an
infinite trace $\sigma(\zeta): \mathbb{N} \longrightarrow 2^{AP}$.
If $\zeta$ is finite, say $\zeta = M_0\firable{t_0} \cdots \firable{t_{n-1}}M_n$ 
for some $n \ge 0$, we define
1.\ $\sigma(\zeta)(i) = \rho(M_i) \cup \{ \rho(t_i)\}$ for $0 \le i < n$ and
2.\ $\sigma(\zeta)(j) = \rho(M_n)$ for $j \ge n$.
Thus, we record for $0 \le i < n$ (case 1) all places of the original net $\mathcal{N}$ 
that label the places in the marking $M_i$ in $\pNet^R$ and the transition of $\mathcal{N}$
that labels the transition $t_i$ in $\pNet^R$ outgoing from $M_i$.
At the end (case 2), we \emph{stutter} by repeating the set of places recorded in $\sigma(\zeta)(n)$
from $n$ onwards, but repeat no transition. 
If $\zeta$ is infinite 
we apply case~1 for all $i \ge 0$ as no stuttering is needed to generate
an infinite trace $\sigma(\zeta)$.

We define the \emph{semantics} of $\mathtt{LTL}$ on Petri nets by
\(\pNet \models_\mathtt{LTL} \phiLTL\) iff for all runs 
\(\runPN \mbox{ of } \pNet:\ \runPN \models_\mathtt{LTL} \phiLTL\),
which means that
for all firing sequences \(\zeta \mbox{ covering } \runPN:\ \sigma(\zeta) \models_\mathtt{LTL} \phiLTL\),
where the latter refers to the usual binary satisfaction relation
$\models_\mathtt{LTL}$ between traces $\sigma$ and formulas $\phiLTL \in \mathtt{LTL}$
 defined by:
 \(\sigma \models_\mathtt{LTL}\ true\),
\(\sigma \models_\mathtt{LTL} a \  \mbox{iff}\ a \in \sigma(0)\),
\(\sigma \models_\mathtt{LTL} \neg \phiLTL \ \mbox{iff}\  \mbox{not}\ \sigma \models_\mathtt{LTL} \phiLTL\),
\(\sigma \models_\mathtt{LTL} {\phiLTL}_1 \land {\phiLTL}_2 \ \mbox{iff}\ \sigma \models_\mathtt{LTL} {\phiLTL}_1 \ \mbox{and}\  \sigma \models_\mathtt{LTL} {\phiLTL}_2\),
\(\sigma \models_\mathtt{LTL} \ltlNext\, \phiLTL \ \mbox{iff}\ \sigma^1 \models_\mathtt{LTL} \phiLTL \), 
\(\sigma \models_\mathtt{LTL} {\phiLTL}_1\, {\sf U}\, {\phiLTL}_2 \ \mbox{iff}\ \mbox{there exists a } j\ge 0 \mbox{ with } \sigma^j \models_\mathtt{LTL} {\phiLTL}_2 \mbox{ and } \mbox{for all }  i \mbox{ with } 0 \le i < j \mbox{ the following holds: } \sigma^i \models_\mathtt{LTL} {\phiLTL}_1\).

\subsection{Definition of Flow-LTL for Petri Nets with Transits}

For Petri nets with transits, we wish to express requirements on several separate timelines.
Based on the global timeline of the system run, global conditions like fairness and maximality can be expressed. 
Requirements on individual data flows, 
e.g., that the data flow does not enter a loop,
are expressed in terms of the timeline of that specific data flow. 
\emph{Flow-LTL} comprises of \emph{run formulas} $\phiRun$ specifying 
the usual LTL behavior on markings and \emph{data flow formulas} $\varphi_F$ specifying properties of flow chains inside runs:
\[
 \phiRun  ::= \phiLTL \mid \phiRun_1 \land \phiRun_2
                  \mid \phiRun_1 \lor \phiRun_2 
                  \mid \phiLTL \rightarrow \phiRun
                  \mid \varphi_F 
 \quad \text{and} \quad 
 \varphi_F  ::=  \mathbb{A}\ \phiLTL
\]
where formulas $\phiLTL \in \mathtt{LTL}$ may appear both inside $\phiRun$
and $\varphi_F$. 

To each flow chain $\xi$ in a run $\runPN$,
we associate an infinite \emph{flow trace}
$\sigma(\xi): \mathbb{N} \longrightarrow 2^{AP}$.
If $\xi$ is finite, say $\xi = p_0,t_0 \dots, t_{n-1}, p_n$ for some $n \ge 0$,
we define\\
	(1) $\sigma(\xi)(i) = \{\rho(p_i), \rho(t_{i})\} \text{ for } 0 \le i < n$ and 
	(2) $\sigma(\xi)(j) = \{\rho(p_n)\} \text{ for } j \ge n$.

Thus, we record for $0 \le i < n$ (case 1) the place and the transition of the original net $\mathcal{N}$ 
that label the place $p_i$ in $\pNet^R$ and the transition $t_{i}$ in $\pNet^R$ 
outgoing from $p_i$. 
At the end (case 2), we \emph{stutter} by repeating the place recorded in $\sigma(\xi)(n)$
infinitely often. 
No transition is repeated in this case. 

If $\xi$ is infinite 
we apply case 1 for all $i \ge 0$. Here, no stuttering is needed to generate
an infinite flow trace $\sigma(\xi)$ and each element of the trace consists of a place and a transition. 

A Petri net with transits $\pNet$ satisfies $\phiRun$,
abbr.\ $\pNet \models \phiRun$, if the following holds:
\[
\begin{array}{lll}
  \pNet \models \phiRun
     &\ \mbox{iff} \ & \mbox{for all runs } \runPN \mbox{ of } \pNet:
                \ \runPN \models \phiRun
  \\[1mm]
   \runPN \models \phiRun  
     &\ \mbox{iff} \ & \mbox{for all firing sequences } \zeta \mbox{ covering } \runPN: \ \runPN, \sigma(\zeta) \models \phiRun
  \\[1mm]
  \runPN, \sigma(\zeta) \models \phiLTL 
     &\ \mbox{iff} \ & \sigma(\zeta) \models_\mathtt{LTL} \phiLTL
  \\[1mm]
  \runPN, \sigma(\zeta)  \models \phiRun_1 \land \phiRun_2  
     &\ \mbox{iff} \ & \runPN, \sigma(\zeta) \models \phiRun_1 \ \mbox{and}\  
                \runPN, \sigma(\zeta)  \models \phiRun_2
  \\[1mm]
  \runPN, \sigma(\zeta)  \models \phiRun_1 \lor \phiRun_2  
     &\ \mbox{iff} \ & \runPN, \sigma(\zeta) \models \phiRun_1 \ \mbox{or}\  
                \runPN, \sigma(\zeta)  \models \phiRun_2
  \\[1mm]
  \runPN, \sigma(\zeta)  \models \phiLTL \rightarrow \phiRun
     &\ \mbox{iff} \ & \runPN, \sigma(\zeta) \models \phiLTL \ \mbox{implies}\  
                \runPN, \sigma(\zeta)  \models \phiRun 
  \\[1mm]
  \runPN, \sigma(\zeta) \models \mathbb{A}\ {\phiLTL}   
     &\ \mbox{iff} \ & \mbox{for all flow chains } \xi \mbox{ of } \runPN: 
                          \ \sigma(\xi) \models_\mathtt{LTL} \phiLTL
\end{array}
\]

\section{Example Specifications}
\label{sec:applications}
We illustrate Flow-LTL with examples from the literature on software-defined networking. 
Specifications on data flows like loop and drop freedom are encoded as data flow formulas.
Fairness assumptions for switches are given as run formulas.

\subsection{Data Flow Formulas}

We show how properties from the literature can be encoded as data flow formulas.
For a network topology, let $\mathit{Sw}$ be the set of all switches, $\mathit{Ingr} \subseteq \mathit{Sw}$ the ingress switches, and $\mathit{Egr} \subseteq \mathit{Sw}$ the egress switches with $\mathit{Ingr} \cap \mathit{Egr} = \emptyset$. 
The connections between switches are given by $\mathit{Con} \subseteq \mathit{Sw} \times \mathit{Sw}$.

\noindent
{\bf Loop freedom.} Loop freedom~\cite{DBLP:conf/sigcomm/LiuWZYWM13} requires that a data flow visits every switch at most once. 
In \refSection{motivation}, we outlined that arbitrarily ordered updates can lead to loops in the network.
The following data flow formula expresses that each data flow is required to not visit a non-egress switch anymore after it has been forwarded and therefore left that switch (realized via the $\U$-operator):
\[
\A \ltlAlways\, (\bigwedge_{\texttt{s}\in\mathit{Sw}\setminus\mathit{Egr}} \texttt{s} \rightarrow (\texttt{s} \U \ltlAlways \neg \texttt{s}))
\]

\begin{figure}[t]
\centering
	\begin{tikzpicture}[node distance=1cm, on grid]%
		\node [envplace, tokens=1]							(s0)  [label={[yshift=-1mm]above:$\mathit{sw_\texttt{w}}$}] {};
		\node [envplace] (d0) [opacity=0, right of=s0, right of=s0, above of=s0] {};
		\node [envplace] (d1) [opacity=0, right of=s0, right of=s0] {};
		\node [envplace] (d2) [opacity=0, right of=s0, right of=s0, below of=s0] {};
		\node [envplace, tokens=1]  	at ($(d0)!0.5!(d1)$)					(s1)  [label=above:$\mathit{sw}_\texttt{x}$] {};
		\node [envplace, tokens=1]  	at ($(d1)!0.5!(d2)$)					(s2)  [label=below:$\mathit{sw}_\texttt{y}$] {};
		\node [transition, left=of s1] 								(t1)  [] 	{};
		\draw[thick,shorten >=1pt,<->,shorten <=1pt] (s0) -- (t1);
		\draw[thick,shorten >=1pt,<->,shorten <=1pt] (s1) -- (t1);
		\node [transition, left=of s2] 								(t2)  [] 	{};
		\draw[thick,shorten >=1pt,<->,shorten <=1pt] (s0) -- (t2);
		\draw[thick,shorten >=1pt,<->,shorten <=1pt] (s2) -- (t2);

		\draw[thick,->,dotted, cdc_Blue, shorten >=2pt, shorten <=2pt] (s0.0) -- (t1.240);
		\draw[thick,->,dotted, cdc_Blue, shorten >=1pt, shorten <=1pt]   ([yshift=-0.2cm]t1.east) -- ([yshift=-0.2cm]s1.west);
		\draw[thick,<->, dashed, cdc_Green, shorten >=1pt, shorten <=1pt] ([yshift=0.2cm]t1.east) -- ([yshift=0.2cm]s1.west);
		
		\draw[thick,->,dotted, cdc_Blue, shorten >=2pt, shorten <=2pt] (s0.310) -- (t2.185);
		\draw[thick,->,dotted, cdc_Blue, shorten >=1pt, shorten <=1pt]   ([yshift=-0.2cm]t2.east) -- ([yshift=-0.2cm]s2.west);
		\draw[thick,<->, dashed, cdc_Green, shorten >=1pt, shorten <=1pt] ([yshift=0.2cm]t2.east) -- ([yshift=0.2cm]s2.west);
		
		\node [transition, left of=s0]							(t00)  [label=above:] 	{};
		\draw[thick,shorten >=1pt,<->,shorten <=1pt] (s0) -- (t00);
		
		\draw[thick, ->,dotted, cdc_Blue, shorten >=1pt, shorten <=1pt] ([yshift=-0.2cm]t00.east) -- ([yshift=-0.2cm]s0.west);
		\draw[thick,<->,dashed, cdc_Green, shorten >=1pt, shorten <=1pt] ([yshift=0.2cm]t00.east) -- ([yshift=0.2cm]s0.west);	
		
		\node [envplace, right of=s0, right of=s0, right of=s0, right of=s0, tokens=1]			(s3)  [label={[yshift=-1mm]above:$\mathit{sw_\texttt{z}}$}] {};
		\node [transition, right=of s1] 								(t3)  [] 	{};
		\draw[thick,shorten >=1pt,<->,shorten <=1pt] (s1) -- (t3);
		\draw[thick,shorten >=1pt,<->,shorten <=1pt] (s3) -- (t3);
		\node [transition, right=of s2] 								(t4)  [] 	{};
		\draw[thick,shorten >=1pt,<->,shorten <=1pt] (s2) -- (t4);
		\draw[thick,shorten >=1pt,<->,shorten <=1pt] (s3) -- (t4);
		
		\draw[thick,->,dotted, cdc_Blue, shorten >=1pt, shorten <=1pt] ([yshift=-0.2cm]s1.east) -- ([yshift=-0.2cm]t3.west);
		\draw[thick,->,dotted, cdc_Blue, shorten >=2pt, shorten <=2pt]   (t3.300) -- (s3.175);
		\draw[thick,<->, dashed, cdc_Green, shorten >=2pt, shorten <=2pt] (t3.0) -- (s3.135);
		
		\draw[thick,->,dotted, cdc_Blue, shorten >=1pt, shorten <=1pt] ([yshift=-0.2cm]s2.east) -- ([yshift=-0.2cm]t4.west);
		\draw[thick,->,dotted, cdc_Blue, shorten >=2pt, shorten <=2pt]   (t4.0) -- (s3.220);
		\draw[thick,<->, dashed, cdc_Green, shorten >=2pt, shorten <=2pt] (t4.60) -- (s3.180);
		
		\node [envplace, left of=t1, left of=t1, left of=t1, tokens=1, label=above:$\texttt{w.fwd(x)}$]					(u0)  [] {};
		\node [envplace, left of=t2, left of=t2, left of=t2 , label=below:$\texttt{w.fwd(y)}$]							(u1)  [] {};
		
		\node [envplace, right of=t3, right of=t3, tokens=1, label=above:$\texttt{x.fwd(z)}$]					(u2)  [] {};
		\node [envplace, right of=t4, right of=t4, label=below:$\texttt{y.fwd(z)}$]							(u3)  [] {};
		
		\draw[thick,shorten >=1pt,<->,shorten <=1pt] (u0) to [bend left=25] (t1);
		\draw[thick,shorten >=1pt,<->,shorten <=1pt] (u1) to [bend right=25] (t2);
		\draw[thick,shorten >=1pt,<->,shorten <=1pt] (u2) to [bend right] (t3);
		\draw[thick,shorten >=1pt,<->,shorten <=1pt] (u3) to [bend left] (t4);
		
		\node [transition, right of=s3, right of=s3] 								(t5)  [label=right:$\mathit{upd}2$] 	{}
			edge [pre]  (u2)
			edge [post] (u3);
		
		\node [transition, left of=t00, left of=t00] 								(t6)  [label=left:$\mathit{upd}1$] 	{}
			edge [pre]  (u0)
			edge [post] (u1);
	\end{tikzpicture}
	\caption{Concurrent network update that does not preserve drop freedom.
	}
	\label{fig:applications1}
\end{figure}
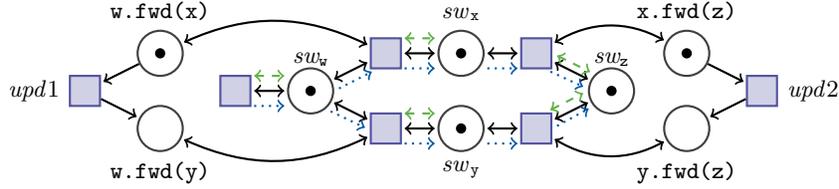

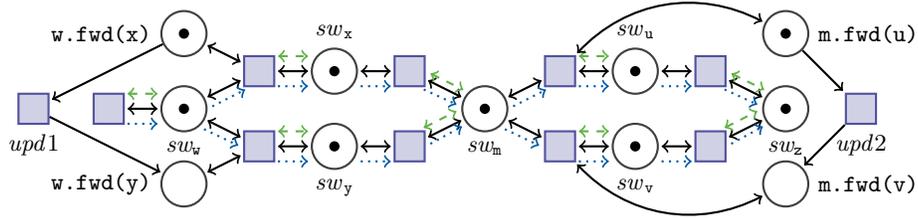
\begin{figure}[t]
\centering
	\begin{tikzpicture}[node distance=1cm, on grid]%
		\node [envplace, tokens=1]							(s0)  [label=below:$\mathit{sw_\texttt{w}}$] {};
		\node [envplace] (d0) [opacity=0, right of=s0, right of=s0, above of=s0] {};
		\node [envplace] (d1) [opacity=0, right of=s0, right of=s0] {};
		\node [envplace] (d2) [opacity=0, right of=s0, right of=s0, below of=s0] {};
		\node [envplace, tokens=1]  	at ($(d0)!0.5!(d1)$)					(s1)  [label=above:$\mathit{sw}_\texttt{x}$] {};
		\node [envplace, tokens=1]  	at ($(d1)!0.5!(d2)$)					(s2)  [label=below:$\mathit{sw}_\texttt{y}$] {};
		\node [transition, left=of s1] 								(t1)  [] 	{};
		\draw[thick,shorten >=1pt,<->,shorten <=1pt] (s0) -- (t1);
		\draw[thick,shorten >=1pt,<->,shorten <=1pt] (s1) -- (t1);
		\node [transition, left=of s2] 								(t2)  [] 	{};
		\draw[thick,shorten >=1pt,<->,shorten <=1pt] (s0) -- (t2);
		\draw[thick,shorten >=1pt,<->,shorten <=1pt] (s2) -- (t2);

		\draw[thick,->,dotted, cdc_Blue, shorten >=2pt, shorten <=2pt] (s0.0) -- (t1.240);
		\draw[thick,->,dotted, cdc_Blue, shorten >=1pt, shorten <=1pt]   ([yshift=-0.2cm]t1.east) -- ([yshift=-0.2cm]s1.west);
		\draw[thick,<->, dashed, cdc_Green, shorten >=1pt, shorten <=1pt] ([yshift=0.2cm]t1.east) -- ([yshift=0.2cm]s1.west);
		
		\draw[thick,->,dotted, cdc_Blue, shorten >=2pt, shorten <=2pt] (s0.310) -- (t2.185);
		\draw[thick,->,dotted, cdc_Blue, shorten >=1pt, shorten <=1pt]   ([yshift=-0.2cm]t2.east) -- ([yshift=-0.2cm]s2.west);
		\draw[thick,<->, dashed, cdc_Green, shorten >=1pt, shorten <=1pt] ([yshift=0.2cm]t2.east) -- ([yshift=0.2cm]s2.west);
		
		\node [envplace, tokens=1, right of=s0, right of=s0, right of=s0, right of=s0]							(s44)  [label=below:$\mathit{sw_\texttt{m}}$] {};
		\node [transition, right=of s1] 								(t11)  [] 	{};
		\draw[thick,shorten >=1pt,<->,shorten <=1pt] (s1) -- (t11);
		\draw[thick,shorten >=1pt,<->,shorten <=1pt] (t11) -- (s44);
		\node [transition, right=of s2] 								(t22)  [] 	{};
		\draw[thick,shorten >=1pt,<->,shorten <=1pt] (s2) -- (t22);
		\draw[thick,shorten >=1pt,<->,shorten <=1pt] (t22) -- (s44);
		
		\draw[thick,->,dotted, cdc_Blue, shorten >=1pt, shorten <=1pt] ([yshift=-0.2cm]s1.east) -- ([yshift=-0.2cm]t11.west);
		\draw[thick,->,dotted, cdc_Blue, shorten >=2pt, shorten <=2pt]   (t11.300) -- (s44.175);
		\draw[thick,<->, dashed, cdc_Green, shorten >=2pt, shorten <=2pt] (t11.0) -- (s44.135);
		
		\draw[thick,->,dotted, cdc_Blue, shorten >=1pt, shorten <=1pt] ([yshift=-0.2cm]s2.east) -- ([yshift=-0.2cm]t22.west);
		\draw[thick,->,dotted, cdc_Blue, shorten >=2pt, shorten <=2pt]   (t22.0) -- (s44.220);
		\draw[thick,<->, dashed, cdc_Green, shorten >=2pt, shorten <=2pt] (t22.60) -- (s44.180);
			
		\node [envplace, tokens=1, right=of t11, right=of t11, right of=t11, right of=t11]  			(s55)  [label=above:$\mathit{sw}_\texttt{u}$] {};
		\node [envplace, tokens=1, right=of t22, right=of t22, right of=t22, right of=t22]  			(s66)  [label=below:$\mathit{sw}_\texttt{v}$] {};	
		
		\node [envplace, right of=s0, right of=s0, right of=s0, right of=s0, right of=s0, right of=s0, right of=s0, right of=s0, tokens=1]							(s3)  [label=below:$\mathit{sw_\texttt{z}}$] {};
		
		\node [transition, right=of s55] 								(t55)  [] 	{};
		\draw[thick,shorten >=1pt,<->,shorten <=1pt] (s3) -- (t55);
		\draw[thick,shorten >=1pt,<->,shorten <=1pt] (t55) -- (s55);
		\node [transition, right=of s66] 								(t66)  [] 	{};
		\draw[thick,shorten >=1pt,<->,shorten <=1pt] (s3) -- (t66);
		\draw[thick,shorten >=1pt,<->,shorten <=1pt] (t66) -- (s66);
		
		\draw[thick,->,dotted, cdc_Blue, shorten >=1pt, shorten <=1pt] ([yshift=-0.2cm]s55.east) -- ([yshift=-0.2cm]t55.west);
		\draw[thick,->,dotted, cdc_Blue, shorten >=2pt, shorten <=2pt]   (t55.300) -- (s3.175);
		\draw[thick,<->, dashed, cdc_Green, shorten >=2pt, shorten <=2pt] (t55.0) -- (s3.135);
		
		\draw[thick,->,dotted, cdc_Blue, shorten >=1pt, shorten <=1pt] ([yshift=-0.2cm]s66.east) -- ([yshift=-0.2cm]t66.west);
		\draw[thick,->,dotted, cdc_Blue, shorten >=2pt, shorten <=2pt]   (t66.0) -- (s3.220);
		\draw[thick,<->, dashed, cdc_Green, shorten >=2pt, shorten <=2pt] (t66.60) -- (s3.180);
		
		\node [transition, left of=s0]							(t00)  [label=above:] 	{};
		\draw[thick,shorten >=1pt,<->,shorten <=1pt] (s0) -- (t00);
		
		\draw[thick,->,dotted, cdc_Blue, shorten >=1pt, shorten <=1pt] ([yshift=-0.2cm]t00.east) -- ([yshift=-0.2cm]s0.west);
		\draw[thick,<->,dashed, cdc_Green, shorten >=1pt, shorten <=1pt] ([yshift=0.2cm]t00.east) -- ([yshift=0.2cm]s0.west);	
		
		\node [transition, right of=s1, , right of=s1, , right of=s1] 								(t3)  [] 	{};
		\draw[thick,shorten >=1pt,<->,shorten <=1pt] (s44) -- (t3);
		\draw[thick,shorten >=1pt,<->,shorten <=1pt] (s55) -- (t3);
		\node [transition, right of=s2, right of=s2, right of=s2] 								(t4)  [] 	{};
		\draw[thick,shorten >=1pt,<->,shorten <=1pt] (s44) -- (t4);
		\draw[thick,shorten >=1pt,<->,shorten <=1pt] (s66) -- (t4);
		
		\draw[thick,->,dotted, cdc_Blue, shorten >=2pt, shorten <=2pt] (s44.0) -- (t3.240);
		\draw[thick,->,dotted, cdc_Blue, shorten >=1pt, shorten <=1pt]   ([yshift=-0.2cm]t3.east) -- ([yshift=-0.2cm]s55.west);
		\draw[thick,<->, dashed, cdc_Green, shorten >=1pt, shorten <=1pt] ([yshift=0.2cm]t3.east) -- ([yshift=0.2cm]s55.west);
		
		\draw[thick,->,dotted, cdc_Blue, shorten >=2pt, shorten <=2pt] (s44.310) -- (t4.185);
		\draw[thick,->,dotted, cdc_Blue, shorten >=1pt, shorten <=1pt]   ([yshift=-0.2cm]t4.east) -- ([yshift=-0.2cm]s66.west);
		\draw[thick,<->, dashed, cdc_Green, shorten >=1pt, shorten <=1pt] ([yshift=0.2cm]t4.east) -- ([yshift=0.2cm]s66.west);
		
		\node [envplace, above of=s0, tokens=1, label=left:$\texttt{w.fwd(x)}$]					(u0)  [] {};
		\node [envplace, below of=s0, label=left:$\texttt{w.fwd(y)}$]							(u1)  [] {};
		
		\node [envplace, above of=s3, tokens=1, label=right:$\texttt{m.fwd(u)}$]					(u2)  [] {};
		\node [envplace, below of=s3, label=right:$\texttt{m.fwd(v)}$]							(u3)  [] {};
		
		\draw[thick,shorten >=1pt,<->,shorten <=1pt] (u0) -- (t1);
		\draw[thick,shorten >=1pt,<->,shorten <=1pt] (u1) -- (t2);
		
		\draw[thick,shorten >=1pt,<->,shorten <=1pt] (u2) to [bend right=37] (t3);
		\draw[thick,shorten >=1pt,<->,shorten <=1pt] (u3) to [bend left=37] (t4);
		
		\node [transition, right=of s3] 								(t5)  [label=below:$\mathit{upd}2$] 	{}
			edge [pre]  (u2)
			edge [post] (u3);
		
		\node [transition, left=of t00] 								(t6)  [label=below:$\mathit{upd}1$] 	{}
			edge [pre]  (u0)
			edge [post] (u1);
	\end{tikzpicture}
	\caption{Concurrent network update that does not preserve packet coherence.
	}
	\label{fig:applications2}
\end{figure}

\noindent
{\bf Drop freedom.} Drop freedom~\cite{DBLP:conf/sigcomm/ReitblattCGF13} requires that no data packets are dropped. 
Packets are dropped by a switch if no forwarding is configured.
We specify that  all data flows not yet at the egress switches are extended by transitions from a set $\mathit{Fwd}$ encoding the connections~$\mathit{Con}$ between switches (details of the encoding can be found in App.~\ref{appendix:motivation-data-plane}).
We obtain the following data flow formula:
\[
\A\ltlAlways\, (\bigwedge_{\texttt{e}\in\mathit{Egr}}\neg \texttt{e} \rightarrow \bigvee_{f\in\mathit{Fwd}} f)
\]
\begin{example}
Figure~\ref{fig:applications1} shows an example update that violates drop freedom.
Packets are forwarded from switch~\texttt{w} to switch~\texttt{z} either via switch~\texttt{x} or via switch~\texttt{y}.
If the forwarding of \texttt{x} is deactivated by firing transition~$\mathit{upd}2$ \emph{before} the forwarding of switch~\texttt{w} is updated by firing~$\mathit{upd}1$, then all packets still forwarded from \texttt{w} to \texttt{x} are dropped as no outgoing transitions from \texttt{x} will be enabled.
\end{example}
\noindent
{\bf Packet coherence.}
Packet coherence~\cite{DBLP:conf/pldi/BallBGIKSSV14} requires that every data flow follows one of two paths: either the path according to the routing before the update or the path according to the routing after the update. 
The paths \(\textit{Path}_1\) and \(\textit{Path}_2\)
are defined as the sets of switches of the forwarding route before and after the update. 
This results in the following data flow formula:
\[
\A ( \ltlAlways\, ( \bigvee_{\texttt{s}\in\textit{Path}_1}\texttt{s}) \lor \ltlAlways\, (\bigvee_{\texttt{s}\in\textit{Path}_2}\texttt{s}) )
\]
\begin{example}
In \refFig{applications2}, the encoding of an update to a double-diamond network topology~\cite{DBLP:conf/wdag/CernyFJM16} is depicted as a simple example for a packet incoherent update.
Before firing the update transitions $\mathit{upd}1$ and $\mathit{upd}2$, packets are forwarded via switches $\texttt{x}$, $\texttt{m}$, and $\texttt{u}$, after the complete update, via switches $\texttt{y}$, $\texttt{m}$, and $\texttt{v}$.
If $\texttt{m}$ is updated by firing transition $\mathit{upd}2$ while packets have been forwarded to $\texttt{x}$ then these packets are forwarded along the incoherent path $\texttt{x}$, $\texttt{m}$, and $\texttt{v}$.
\end{example}
We note that loop and drop freedom are incomparable requirements.
Together, they imply that all packets reach one egress switch.
Connectivity, in turn, implies drop freedom but not loop freedom, because an update can allow some loops.

\subsection{Run Formulas}

Data flow formulas require behavior on the maximal flow of packets and
switches are assumed to forward packets in a fair manner. 
Both types of assumptions are expressed in Flow-LTL as run formulas. 
We typically consider implications between run formulas and data flow formulas.\\
\noindent
{\bf Maximality.} A run $\runPN$ is \emph{interleaving-maximal} if, whenever some transition is enabled, some transition will be taken:
 $\runPN \models \ltlAlways\,( \bigvee_{t \in \transitions} \pre{}{t} \rightarrow 
                        \bigvee_{t \in \transitions} t)$.\\
A run $\runPN$ is \emph{concurrency-maximal} if, when a transition $t$ is from a moment on always
enabled, infinitely often a transition $t'$ (including $t$ itself) sharing a precondition with $t$ 
is taken:
 $\runPN \models \bigwedge_{t \in \transitions}
  ( \ltlEventually \ltlAlways\, \pre{}{t} \rightarrow 
    \ltlAlways \ltlEventually \bigvee_{{\small\begin{array}{c}
                             p\in \pre{}{t},
                             t'\in \post{}{p}
                           \end{array}}} t')$.
\noindent
{\bf Fairness.}
A run $\runPN$ is \emph{weakly fair} w.r.t.\ a transition $t$ if, whenever $t$ is always enabled after some point, $t$ is taken infinitely often:
{\(\runPN \models \ltlEventually \ltlAlways\, \pre{}{t} \rightarrow 
    \ltlAlways \ltlEventually t\)}.\\
A run $\runPN$ is \emph{strongly fair} w.r.t.\ $t$ if, whenever $t$ is enabled infinitely often, $t$ is taken infinitely often:
{\(\runPN \models \ltlAlways \ltlEventually\, \pre{}{t} \rightarrow \ltlAlways \ltlEventually t\)}.

\section{Model Checking Flow-LTL on Petri Nets with Transits}
\label{sec:generalization}
We solve the model checking problem of a Flow-LTL formula \(\phiRun\) on a Petri net with transits \(\pNet\) in three steps:
\begin{enumerate}
	\item \(\pNet\) is encoded as a Petri net \(\pNetMC\) without transits obtained by composing
		suitably modified copies of \(\pNet\) such that each flow subformula in \(\phiRun\) can be checked for correctness using the corresponding copy.
	\item \(\phiRun\) is transformed to an LTL-formula \(\phiMC\) which skips the uninvolved composition copies when evaluating run and flow parts, respectively.
	\item \(\pNetMC\) and  \(\phiMC\) are encoded in a circuit and fair reachability is checked 
with a hardware model checker
to answer if \(\pNet\models\phiRun\) holds.
\end{enumerate}
Given a Petri net with transits \(\petriNetFl\)
and a Flow-LTL formula~\(\phiRun\)
with subformulas \(\varphi_{F_i}=\A\,\phiLTL_i\), 
where \(i=1,\ldots,n\) for some \(n\in\N\),
we produce a
Petri net \(\pNetMC=(\plMC,\trMC,\flMC,\inhibitorFlMC,\initMC)\)
with inhibitor arcs (denoted by \(\inhibitorFlMC\))
and an LTL formula \(\phiMC\).
An \emph{inhibitor arc} is a directed arc from a place \(p\) to a transition \(t\), which only 
enables \(t\) if \(p\) contains no token. 
Graphically, those arcs are depicted as arrows equipped with a circle on their arrow tail.

\subsection{From Petri Nets with Transits to P/T Petri nets}
\label{sec:pntfl2pn}
We informally introduce the construction of \(\pNetMC\)
and Fig.~\ref{fig:algoOverview} visualizes the process by an example.
Details for this and the following constructions,
as well as all proofs corresponding to \refSection{generalization}
can be found in App.~\ref{appendixB}.

The \emph{original part} of \(\pNetMC\) (denoted by \(\pNetMCO\))
is the original net \(\pNet\) without transit relation and is used to check the run part of the formula.
To \(\pNetMCO\), a \emph{subnet} for each subformula \(\A\,\phiLTL_i\) of \(\phiRun\)
is composed (denoted by \(\pNetMCSub{i}\), with places \(\plMC_i\) and transitions \(\trMC_i\)), 
which serves for checking the corresponding data flow part of the formula \(\phiRun\).
The subnet introduces the possibility to decide for the tracking of up to one specific flow chain
by introducing a copy \(\subnet{p}_i\) of each place \(p\in\pl\)
and transitions simulating the transits. The place \(\initsub_i\) serves for starting the tracking.
Each run of a subnet simulates one possible flow chain of \(\pNet\),
i.e., every firing sequence covering any run of \(\pNet\) yields a flow chain. 

\begin{figure}
\centering
\scalebox{0.85}{
\begin{tikzpicture}[node distance=1.25cm,>=stealth',bend angle=45,auto,scale=0.8, on grid]%
\tikzset{
inhibitorArc/.style = {o->}
}
\node [envplace, tokens=1]	 			(a) [label=right:$\mathit{in}$]                  	   	{};
\node [transition, below=of a, yshift=-1cm]		(t) [label={[label distance=-1.5mm]above left:\(\mathit{t}\)}]	{};
\node [envplace, tokens=1, below=of t, yshift=-1cm] 	(b) [label=right:$\mathit{out}$]                   	 	{};
\node [envplace, right=of t, tokens=1,xshift=-2mm] 			(actO) [label={[label distance=-2mm]above left:\(\act{o}\)}]                   	 	{};
\node [transition, above=of a,xshift=0mm]		(create) [label={[label distance=0mm]above:\(\nameStartingTransition\)}]	{};

\path[->] 
	(t) edge[pre] (a)
	    edge[post] (a)
	    edge[pre] (b)
	    edge[post] (b)
	    edge[pre] (actO)

	(create) edge[pre] (a)
	    edge[post] (a)
	  (create.east)  edge[pre, bend left=25] (actO)
	
;

\node [envplace, right=of a, xshift=26mm]		(sa) [label=left:$\subnet{\mathit{in}}_1$]                  	   	{};
\node [envplace, right=of b, xshift=26mm] 		(sb) [label=left:$\subnet{\mathit{out}}_1$]                   	 	{};
\node [transition, below=of sa]				(t1) [label={[label distance=-1mm]below:}]	{\(t\)};
\node [transition, below left=of t1,yshift=2.5mm]	(t3) [label={[label distance=-2mm]below right:}]	{\(t\)};
\node [transition, above=of sb]				(t2) [label={[label distance=-1mm]above:}]	{\(t\)};
\node [transition, above right=of t2,yshift=-2.5mm]	(t4) [label={[label distance=-2mm]above left:\(t_{\skipT_1}\)}]			{\(t\)};
\node [envplace, below left=of t3]			(actt) [label=below:\({\act{t}}_1\)]                   	 	{};
\node [envplace, above=of actt, yshift=25mm]		(acttS) [label=below:\({\act{\nameStartingTransition}}_1\)]                   	 	{};

\node [transition, above=of sa,yshift=6mm]		(c1) [label={[label distance=-2mm]above right:}]	{\(\nameStartingTransition\)};
\node [transition, above=of sa,xshift=8mm]		(c2) [label={[label distance=-1mm]above:\(\nameStartingTransition_{\skipT_1}\)}]	{\(\nameStartingTransition\)};
\node [transition, above=of sa,xshift=-6mm, yshift=-3mm](c3) [label={[label distance=-1mm, yshift=2mm]right:}]	{\(\nameStartingTransition\)};

\node [envplace, tokens=1,above=of sa, yshift=15mm]		(inittfl)  [label=left:$\initsub_1$]                  	   	{};

\path[->] 
	  (t)  edge[bend right=15,post] (actt)
	  (create)  edge[bend right=0,post] (acttS);

\path[->] 
	(t1) edge[bend left=20, pre] (sa)
	     edge[bend right=20, post] (sa)
	     edge[bend right=50, pre] (actt)

	(t2) edge[bend left=20, pre] (sb)
	     edge[bend right=20, post] (sb)
	     edge[bend left=20, pre] (actt)

	(t3) edge[bend left=20, pre] (sa)
	     edge[bend right=20, post] (sb)
	     edge[bend right=20, pre] (actt)

	(sa) edge[bend left=20, inhibitorArc] (t4)
	(sb) edge[bend right=20, inhibitorArc] (t4)
	(t4) edge[bend left=0, pre] (actt);

\path[->] 
	(c1) edge[bend left=0, pre] (inittfl)
	     edge[bend right=0, post] (sa)
	     edge[bend right=30, pre] (acttS)

	(c3) edge[bend left=20, pre] (sa)
	     edge[bend right=20, post] (sa)
	     edge[bend left=0, pre] (acttS)

	(sa) edge[bend left=0, inhibitorArc] (c2)
	(c2) edge[bend right=15, pre] (acttS);

 \draw [fill] (7.95,-1.5) circle [radius=2pt];
 \draw [fill] (8.35,-1.5) circle [radius=2pt];
 \draw [fill] (8.75,-1.5) circle [radius=2pt];
\node [envplace, right=of actt, xshift=2.5cm]			(actt1) [label=below:\({\act{t}}_2\)]                   	 	{};
\node [envplace, above=of actt1, yshift=25mm]		(acttS1) [label=below:\({\act{\nameStartingTransition}}_2\)]                   	 	{};

\path[->]
	 (t1)  edge[bend left=30, post] (actt1)
	 (t2)  edge[bend right=10, post] (actt1)
	 (t3)  edge[bend left=20, post] (actt1)
	 (t4)  edge[bend right=0, post] (actt1)
;

\path[->]
	 (c1)  edge[bend left=30, post] (acttS1)
	 (c2)  edge[bend right=0, post] (acttS1)
	 (c3)  edge[bend right=20, post] (acttS1)
;

\node [envplace, right=of sa, xshift=42.5mm]		(sa1) [label=left:$\subnet{\mathit{in}}_n$]                  	   	{};
\node [envplace, right=of sb, xshift=42.5mm] 		(sb1) [label=left:$\subnet{\mathit{out}}_n$]                   	 	{};
\node [transition, below=of sa1]			(t11) [label={[label distance=-1mm]below:\(\mathit{t_1}\)}]	{\(t\)};
\node [transition, below left=of t11,yshift=2.5mm]	(t31) [label={[label distance=-1mm]right:\(\mathit{t_2}\)}]	{\(t\)};
\node [transition, above=of sb1]			(t21) [label={[label distance=-1mm]above:\(\mathit{t_3}\)}]	{\(t\)};
\node [transition, above right=of t21,yshift=-2.5mm]	(t41) [label={[label distance=-2mm]above left:\(t_{\skipT_n}\)}]	{\(t\)
																};
\node [envplace, below left=of t31]			(acttn) [label=below:\({\act{t}}_n\)]                   	 	{};

\node [envplace, above=of acttn, yshift=25mm]		(acttSn) [label=below:\({\act{\nameStartingTransition}}_n\)]                   	 	{};

\node [transition, above=of sa1,yshift=6mm]		(c11) [label={[label distance=-1mm]right:\(\nameStartingTransition_1\)}]	{\(\nameStartingTransition\)};
\node [transition, above=of sa1,xshift=8mm]		(c21) [label={[label distance=-1mm]above:\(\nameStartingTransition_{\skipT_n}\)}]	{\(\nameStartingTransition\)};
\node [transition, above=of sa1,xshift=-6mm, yshift=-3mm](c31) [label={[label distance=-1mm]right:\(\nameStartingTransition_2\)}]	{\(\nameStartingTransition\)};

\node [envplace, tokens=1,above=of sa1, yshift=15mm]		(inittfl1)  [label=left:$\initsub_n$]                  	   	{};

\path[->] 
	(t11) edge[bend left=20, pre] (sa1)
	     edge[bend right=20, post] (sa1)
	     edge[bend right=50, pre] (acttn)

	(t21) edge[bend left=20, pre] (sb1)
	     edge[bend right=20, post] (sb1)
	     edge[bend left=20, pre] (acttn)

	(t31) edge[bend left=20, pre] (sa1)
	     edge[bend right=20, post] (sb1)
	     edge[bend right=20, pre] (acttn)

	(sa1) edge[bend left=20, inhibitorArc] (t41)
	(sb1) edge[bend right=20, inhibitorArc] (t41)
	(t41) edge[bend left=0, pre] (acttn);
\path[->] 
	(c11) edge[bend left=0, pre] (inittfl1)
	     edge[bend right=0, post] (sa1)
	     edge[bend right=30, pre] (acttSn)

	(c31) edge[bend left=20, pre] (sa1)
	     edge[bend right=20, post] (sa1)
	     edge[bend left=0, pre] (acttSn)

	(sa1) edge[bend left=0, inhibitorArc] (c21)
	(c21) edge[bend right=15, pre] (acttSn);

\path[->, dashed,opacity=0.2]
	 (t11)  edge[bend left=50, post] (actO)
	 (t21)  edge[bend right=40, post] (actO)
	 (t31)  edge[bend left=50, post] (actO)
	 (t41)  edge[bend right=40, post] (actO)
;
\path[->, dashed,opacity=0.2]
	 (c11)  edge[bend right=10, post] (actO)
	 (c21)  edge[bend right=10, post] (actO)
	 (c31)  edge[bend right=10, post] (actO)
;

\node [envplace,left=of a, tokens=1, xshift=-16mm]	 			(aO) [label={[label distance=-1mm]right:$\mathit{in}$}]                  	   	{};
\node [transition,left=of t, xshift=-16mm]				(tO) [label={[label distance=0mm]right:\(\mathit{t}\)}]	{};
\node [envplace, left=of b, tokens=1, xshift=-16mm] 			(bO) [label={[label distance=-1mm]right:$\mathit{out}$}]                   	 	{};
\node [transition, left=of create, xshift=-16mm]		(createO) [label={[label distance=0mm]above:\(\nameStartingTransition\)}]	{};

\path[control] 
		(aO) edge[<->] (createO)
		(aO) edge[<->] (tO)
		(tO) edge[<->] (bO)
;
\path[flowC] 
		([xshift=-2mm]createO.south) edge ([xshift=-2mm]aO.north)
;
\path[flowA]
	 ([xshift=2mm]createO.south) edge[<->] ([xshift=2mm]aO.north)

	 ([xshift=2mm]aO.south) edge[<->] ([xshift=2mm]tO.north)
	 ([xshift=2mm]tO.south) edge ([xshift=2mm]bO.north)
;
\path[flowB]
	 ([xshift=-2mm]tO.south) edge[<->] ([xshift=-2mm]bO.north)
;
\node [above=of createO, yshift=6mm, xshift=-.5mm] (oO) {\color{gray}\(\pNet\)};
\draw [->,decorate,decoration={snake,post length=2.5mm},ultra thick] (-2.8,-1.5) -- (-0.8,-1.5) ;

\node [above=of create, yshift=6mm, xshift=-7.2mm] (n) {\color{gray}\(\pNetMC\)};
\node [above=of create, yshift=2mm, xshift=-1mm] (o) {\color{orange}\(\pNetMCO\)};
\node [right=of o, xshift=13mm, yshift=0mm] (phi1) {\color{DarkBlue}\(\pNetMCSub{1}\)};
\node [right=of phi1, xshift=44mm, yshift=0mm] (phi2) {\color{ganttGreen}\(\pNetMCSub{n}\)};
%
\begin{pgfonlayer}{background}
\draw [-, rectangle,rounded corners,gray,fill=gray!15,pattern=north east lines,opacity=0.4,pattern color=gray]
 	([xshift=-2mm,yshift=-4mm]bO.south west) --
	([xshift=8mm,yshift=-4mm]bO.south east) --
	([xshift=-69.8mm,yshift=4mm]inittfl.north west) --
	([xshift=-85mm,yshift=4mm]inittfl.north west) --
	cycle;
\draw [-, rectangle,rounded corners,gray,fill=gray!15,pattern=north west lines,opacity=0.4,pattern color=gray]
 	([xshift=-10.5mm,yshift=-4mm]b.south west) --
	([xshift=15mm,yshift=-4mm]sb1.south east) --
	([xshift=20.6mm,yshift=4mm]inittfl1.north west) --
	([xshift=-58.6mm,yshift=4mm]inittfl.north west) --
	cycle;

\draw [-, rectangle,rounded corners,orange,fill=orange!15]
 	([xshift=-2.5mm,yshift=-2mm]b.south west) --
	([xshift=15mm,yshift=-2mm]b.south east) --
	([xshift=-27.6mm,yshift=2mm]inittfl.north west) --
	([xshift=-50.6mm,yshift=2mm]inittfl.north west) --
	cycle;

\draw [-, rectangle,rounded corners,DarkBlue,fill=cdc_BlueL!15]
 	([xshift=-24.5mm,yshift=-2mm]sb.south west) --
	([xshift=26.5mm,yshift=-2mm]sb.south east) --
	([xshift=32mm,yshift=2mm]inittfl.north west) --
	([xshift=-24.5mm,yshift=2mm]inittfl.north west) --
	cycle;
\draw [-, rectangle,rounded corners,ganttGreen,fill=cdc_GreenL!15]
 	([xshift=-24.5mm,yshift=-2mm]sb1.south west) --
	([xshift=13mm,yshift=-2mm]sb1.south east) --
	([xshift=18.6mm,yshift=2mm]inittfl1.north west) --
	([xshift=-24.5mm,yshift=2mm]inittfl1.north west) --
	cycle;
\end{pgfonlayer}
\end{tikzpicture}
}
\caption{
An overview of the constructed P/T Petri net \({\pNetMC}\) (on the right) for
an example Petri net with transits \(\pNet\) (on the left)
and \(n\) flow subformulas \(\A\,\phiLTL_i\).
}
\label{fig:algoOverview}
\end{figure}

An \emph{activation token} iterates sequentially through these components via places \(\act{t}\) for \(t\in\tr\).
In each step, the active component has to fire exactly one transition and 
pass the active token to the next component.
The sequence starts by \(\pNetMCO\) firing a transition \(t\)
and proceeds through every subnet simulating the data flows according to the transits of \(t\).
This implies that the subnets have to either move their data flow via a \(t\)-labelled transition \(t'\) (\(\lambda(t')=t\))
or use the skipping transition \(t_{\skipT_i}\) if their chain is not involved in the firing of \(t\)
or a newly created chain should not be considered in this run.

\begin{lemma}[Size of the Constructed Net]
\label{lem:sizePN}
The constructed Petri net \(\pNetMC\) has
\(\oclass{|\pNet|\cdot n + |\pNet|}\) places 
and \(\oclass{|\pNet|^3\cdot n + |\pNet|}\) transitions.
\end{lemma}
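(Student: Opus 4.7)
The plan is to decompose $\pNetMC$ according to its construction — the original part $\pNetMCO$, the $n$ subnets $\pNetMCSub{i}$, and the shared coordinating activation places — and to bound the contribution of each piece separately. I would treat $|\pNet|$ as shorthand for $|\pl|+|\tr|$, so that both $|\pl|$ and $|\tr|$ are upper-bounded by $|\pNet|$.

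For the places, the bookkeeping is linear. The original part $\pNetMCO$ reuses the $|\pl|$ places of $\pNet$ and adds one activation place $\act{t}$ for each $t\in\tr$ together with one place $\act{\nameStartingTransition}$, yielding $\oclass{|\pNet|}$ places. Each subnet $\pNetMCSub{i}$ adds a copy $\subnet{p}_i$ for every $p\in\pl$, a single initialization place $\initsub_i$, and the subnet-local activation places $\act{t}_i$ for $t\in\tr$ and $\act{\nameStartingTransition}_i$. This gives $\oclass{|\pNet|}$ places per subnet, and summing over all $n$ subnets plus the original part produces $\oclass{|\pNet|\cdot n+|\pNet|}$ as claimed.

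For the transitions, the main work is to count the simulating transitions inside a single subnet. For each $t\in\tr$, the subnet $\pNetMCSub{i}$ contains one copy of $t$ for each pair of $\tokenflow(t)\subseteq(\pre{}{t}\cup\{\startfl\})\times\post{}{t}$, plus one skipping transition $t_{\skipT_i}$ and the start-related transitions associated with $\nameStartingTransition$. Since $|\tokenflow(t)|\le(|\pre{}{t}|+1)\cdot|\post{}{t}|\le\oclass{|\pl|^2}$, summing over all $t\in\tr$ yields $\oclass{|\tr|\cdot|\pl|^2}=\oclass{|\pNet|^3}$ transitions per subnet. Adding the $\oclass{|\tr|}$ transitions of $\pNetMCO$ and the single starting transition, and then summing over the $n$ subnets, gives the stated bound of $\oclass{|\pNet|^3\cdot n+|\pNet|}$.

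The main obstacle, and really the only delicate step, is pinning down precisely which transitions the construction inserts into each subnet and verifying that no hidden cross-product among subnets or over transition pairs inflates the cost beyond cubic. I would handle this by appealing to the formal construction in the appendix referenced after the informal description: each simulating transition is determined by a choice of $t\in\tr$ together with one pair in $\tokenflow(t)$, and the skip and start transitions add only lower-order terms. Once this local inventory is confirmed, the global count is an immediate sum, and the two bounds follow.
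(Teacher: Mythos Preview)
Your proposal is correct and follows essentially the same decomposition-and-count strategy as the paper's proof: bound the places and transitions of the original part and of each subnet separately, then sum. The only minor inaccuracies are in your inventory of the original part (it contains just the single activation place $\act{o}$, not one $\act{t}$ per $t\in\tr$; the per-transition activation places live in the subnets) and your references to a distinguished transition $\nameStartingTransition$, which is an artifact of the example figure rather than the general construction---but neither affects the asymptotic bounds.
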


\subsection{From Flow-LTL Formulas to LTL Formulas}
\label{sec:flowltl2ltl}
The two different kinds of timelines of \(\phiRun\) are encoded in the LTL formula \(\phiMC\).
On the one hand, the data flow formulas \(\A\,\phiLTL_i\) in \(\phiRun\)
are now checked on the corresponding subnets \(\pNetMC_i\)
and, on the other hand, the run formula part of \(\phiRun\) is checked on the original part of the net \(\pNetMC_O\).
In both cases, we need to ignore places and transitions from other parts of the composition.
This is achieved by replacing each next operator \(\ltlNext\,\phi\) and atomic proposition \(t\in\tr\) inside \(\phiRun\)
with an until operator.
Transitions which are not representing the considered timeline are called \emph{unrelated}, others \emph{related}.
Via the until operator, all unrelated transitions can fire until a related transition is fired.
This is formalized in \refTable{formulas} using the sets
\(O=\trMC\setminus\tr\) and 
\(O_i=(\trMC\setminus\trMCSub{i})\cup\{t_{\skipT_i}\in\trMCSub{i}\with t\in\tr\}\),
for the unrelated transitions of the original part and of the subnets, respectively.
The related transitions of the original part are given by \(\tr\) and for the subnets by
\(M_i(t)=\{t'\in\trMCSub{i}\setminus\{t_{\skipT_i}\}\with \lambda(t')=t\}\)
and \(M_i=\trMCSub{i}\setminus\{t_{\skipT_i}\in\trMCSub{i}\with t\in\tr\}\).
\begin{table}
\caption{
Row 1 considers the substitutions in the run part of \(\phiRun\),
row 2 the substitutions in each subformula \(\varphi_{F_i}\).
Column 1 considers simultaneously substitutions,
column 2 substitutions from the inner- to the outermost occurrence.
} 
\label{tab:formulas}
\centering
\begin{tabular}{l|l}
\(t\in\tr\)  & \(\ltlNext\,\phi\) \\\hline
\((\bigvee_{t'\in{}O} t')\U t\) & \(((\bigvee_{t\in{}O} t)\U ((\bigvee_{t'\in\tr} t')\wedge\ltlNext\,\phi))\vee (\ltlAlways\, (\neg(\bigvee_{t'\in\tr} t'))\wedge\phi)\)\\
\((\bigvee_{t_o\in O_i} t_o)\U (\bigvee_{t_m\in M_i(t)}t_m)\) & \(((\bigvee_{t\in O_i} t)\U ((\bigvee_{t\in M_i}t)\wedge\ltlNext\,\phi))\vee (\ltlAlways\, (\neg(\bigvee_{t\in M_i}t))\wedge\phi)\)
\end{tabular}
\end{table}

Additionally, every atomic proposition \(p\in\pl\)
in the scope of a flow operator is simultaneously substituted with its corresponding place \(\subnet{p}_i\) of the subnet.
Every flow subformula \(\A\,\phiLTL_i\) is substituted with
\(\ltlAlways\initsub_i\vee(\initsub_i\ltlUntil(\neg \initsub_i\wedge\phiLTL_i'))\),
where \(\ltlAlways\initsub_i\) represents that no flow chain is tracked and
\(\phiLTL_i'\) is the result of the substitutions of atomic propositions and next operators described before.
The until operator in \(\initsub_i\ltlUntil(\neg \initsub_i\wedge\phiLTL_i')\) ensures to only check
the flow subformula at the time the chain is created.
Finally, restricting runs to not end in any of the subnets yields the final formula
\[\phiMC=(\ltlAlways\ltlEventually\act{o})\rightarrow \phiRun^\A\]
with 
\(\act{o}\) being the activation place of the original part of the net
and \(\phiRun^\A\) the result of the substitution of all flow subformulas.

\begin{lemma}[Size of the Constructed Formula]
\label{lem:sizeFormula}
The size of the constructed LTL formula \(\phiMC\) is in \(\oclass{|\pNet|^3\cdot n \cdot |\phiRun| + |\phiRun|}\).
\end{lemma}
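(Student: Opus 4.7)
The plan is to bound the size increase contributed by each kind of substitution described in Section~\ref{sec:flowltl2ltl} and sum over all occurrences in $\phiRun$.

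First, I would record the size estimates for the transition sets used in the rewriting rules. Since $|\trMC|\in\oclass{|\pNet|^3\cdot n+|\pNet|}$ by Lemma~\ref{lem:sizePN}, every one of the sets $O=\trMC\setminus\tr$, $O_i=(\trMC\setminus\trMCSub{i})\cup\{t_{\skipT_i}\}$, $M_i(t)$, and $M_i$ is bounded in cardinality by $\oclass{|\pNet|^3\cdot n}$. Consequently, each of the finite disjunctions $\bigvee_{t'\in O}t'$, $\bigvee_{t'\in\tr}t'$, $\bigvee_{t_o\in O_i}t_o$, $\bigvee_{t_m\in M_i(t)}t_m$, and $\bigvee_{t\in M_i}t$ appearing in Table~\ref{tab:formulas} has size $\oclass{|\pNet|^3\cdot n}$.

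Next, I would count how many substitutions are performed. Let $k$ denote the number of atomic propositions from $\tr$ together with next operators occurring in $\phiRun$; clearly $k\leq|\phiRun|$. According to Table~\ref{tab:formulas}, every such occurrence is replaced exactly once by a formula whose explicit new material has size $\oclass{|\pNet|^3\cdot n}$. The key point to argue carefully is that the ``inner-to-outer'' processing of the next-operator rule does not cause multiplicative blow-up: the subformula $\phi$ that literally reappears on the right-hand side is the \emph{already rewritten} version of the body of $\ltlNext\,\phi$, so each original $\ltlNext$ is touched exactly once and contributes only additive cost $\oclass{|\pNet|^3\cdot n}$. The simultaneous renaming of atomic propositions $p\in\pl$ to $\subnet{p}_i$ within the scope of a flow operator is a symbol-for-symbol replacement and causes no blow-up. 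Wrapping each of the $n$ flow subformulas in $\ltlAlways\initsub_i\vee(\initsub_i\ltlUntil(\neg\initsub_i\wedge\cdot))$ and prefixing with $(\ltlAlways\ltlEventually\act{o})\rightarrow$ adds only $\oclass{n}$ further symbols, which is subsumed by $\oclass{|\phiRun|}$.

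Summing the contributions yields
\[
|\phiMC|\;\leq\;|\phiRun|\cdot\oclass{|\pNet|^3\cdot n}+\oclass{|\phiRun|}\;\in\;\oclass{|\pNet|^3\cdot n\cdot|\phiRun|+|\phiRun|},
\]
which is exactly the stated bound. The only step that I expect to need real care, rather than mere bookkeeping, is the non-compounding argument for the nested rewriting of $\ltlNext$: one must make it explicit that substitutions are applied in a single bottom-up pass, so that the formula $\phi$ appearing inside the rewritten template is counted with its already-rewritten size rather than re-triggering the rule. Once this is pinned down, the rest of the argument is a direct addition of the per-occurrence costs established above.
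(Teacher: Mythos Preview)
Your overall approach coincides with the paper's: bound the overhead of each substitution rule from Table~\ref{tab:formulas} and sum over the at most $|\phiRun|$ occurrences of transition atoms and $\ltlNext$ operators, using $|\trMC|\in\oclass{|\pNet|^3\cdot n}$ from Lemma~\ref{lem:sizePN} to size the disjunctions.

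Your non-compounding argument for $\ltlNext$, however, overlooks that the right-hand side of the rule contains the body $\phi$ \emph{twice}: once as $\ltlNext\phi$ inside the until and once bare in the disjunct $\ltlAlways(\neg(\cdots))\wedge\phi$. Hence rewriting $\ltlNext\phi$ yields size roughly $C+2|\phi|$ rather than $C+|\phi|$, and with $d$ nested $\ltlNext$ operators this compounds to $\Theta(2^d)$ tree size; so ``each original $\ltlNext$ contributes only additive cost'' is not true for tree size. The paper's own proof also lists only the per-$\ltlNext$ new material and never discusses this duplication, so the stated bound should be read with $|\phiMC|$ measured as DAG size (the two occurrences of $\phi$ shared), which is indeed the relevant measure for the subsequent circuit encoding but deserves to be made explicit. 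If you simply want to mirror the paper, adopt its per-occurrence counts; if you want a rigorous tree-size statement, you must invoke subformula sharing explicitly.
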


\begin{lemma}[Correctness of the Transformation]
\label{lem:correctnessPNLTL}
For a Petri net with transits \(\pNet\) and a Flow-LTL formula \(\phiRun\),
there exists a safe P/T Petri net \(\pNetMC\) with inhibitor arcs and an LTL formula \(\phiMC\)
such that
\(\pNet \models \phiRun\text{ iff } \pNetMC\models_\mathtt{LTL}\phiMC\).
\end{lemma}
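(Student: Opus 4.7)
The plan is to establish a tight correspondence between runs, covering firing sequences, and flow chains of $\pNet$ on one side, and firing sequences of $\pNetMC$ that satisfy the activation guard $\ltlAlways\ltlEventually\,\act{o}$ on the other, and then to verify that the syntactic rewriting producing $\phiMC$ from $\phiRun$ respects this correspondence on every LTL evaluation.

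First, I would make precise how a run $\runPN$ of $\pNet$ together with one covering firing sequence $\zeta$ and one choice per flow subformula $\A\,\phiLTL_i$ of a flow chain $\xi_i$ of $\runPN$ (or the null choice \emph{no chain}) lifts to a firing sequence $\hat\zeta$ of $\pNetMC$. The lift proceeds step by step: whenever $\zeta$ fires a transition $t$, we first fire the copy of $t$ in $\pNetMCO$, transferring the activation token from $\act{o}$ to $\act{t}_1$; then subnet $\pNetMCSub{1}$ fires exactly one transition, either a simulating copy of $t$ (advancing $\xi_1$ along a transit $p\,\tfl(t)\,q$), a starting transition $\nameStartingTransition$ (creating $\xi_1$ via $\startfl\,\tfl(t)\,q$), or $t_{\skipT_1}$ (when $\xi_1$ is neither extended nor created at this step); the activation token then moves to $\act{t}_2$ and the next subnet proceeds analogously, until it returns to $\act{o}$. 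Conversely, by the design of the activation-token gating together with the inhibitor arcs forcing $t_{\skipT_i}$ to fire only when no tracking place of $\pNetMCSub{i}$ is marked, every firing sequence of $\pNetMC$ that satisfies $\ltlAlways\ltlEventually\,\act{o}$ decomposes uniquely into such data; in particular, the projection to transitions of $\pNet$ produces a covering firing sequence of a well-defined run.

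Next, I would verify that the rewriting rules of Table~\ref{tab:formulas} preserve truth along this correspondence. For the run part, the substitution of each atomic proposition $t\in\tr$ by $(\bigvee_{t'\in O}t')\,\U\,t$ collapses every intervening block of subnet firings into a single step, and the rewriting of $\ltlNext\,\phi$ advances to the next original firing or, if none follows, into the stuttered suffix. Markings of original places $p\in\pl$ are untouched by any subnet transition, so atomic propositions from $\pl$ transfer verbatim. Hence sampling $\sigma(\hat\zeta)$ at the positions where the activation token sits on $\act{o}$ yields exactly $\sigma(\zeta)$ modulo stuttering. A symmetric argument handles each flow subformula: the second row of the table treats transitions outside $M_i$ as unrelated, while atomic propositions $p\in\pl$ are replaced by $\subnet{p}_i$, which is marked precisely when the tracked chain stands at $p$. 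The outer replacement $\ltlAlways\,\initsub_i\vee(\initsub_i\,\U\,(\neg\initsub_i\wedge\phiLTL_i'))$ separates the \emph{no chain} case, in which $\initsub_i$ is never consumed and the universal quantification over flow chains is vacuously contributed to, from the case in which a chain is created and $\phiLTL_i'$ must hold on its flow trace from the starting moment onward.

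Combining both steps, the nested universal quantification in $\pNet\models\phiRun$ over runs, over covering firing sequences, and, inside each $\A\,\phiLTL_i$, over flow chains, matches exactly the universal quantification of $\models_\mathtt{LTL}$ over firing sequences of $\pNetMC$ that respect the activation guard; since $\phiMC=(\ltlAlways\ltlEventually\,\act{o})\rightarrow\phiRun^\A$ expresses this guard as a premise, the equivalence $\pNet\models\phiRun\text{ iff }\pNetMC\models_\mathtt{LTL}\phiMC$ follows. The main difficulty will lie in the surjectivity direction of the lift: given an arbitrary firing sequence of $\pNetMC$ satisfying the guard, one has to show that each subnet's behaviour induces a bona fide flow chain of the projected run (or consistently the null choice), so that quantifying over firing sequences of $\pNetMC$ really does sweep over every flow chain of every run of $\pNet$. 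This is precisely what motivates the careful design of the starting transitions $\nameStartingTransition$ (to allow a chain to begin at any place in a transit postset), the inhibitor arcs guarding $t_{\skipT_i}$ (to prevent silently dropping a chain that is actually extended by the current transition), and the sequential activation-token mechanism (to couple each original step to exactly one response per subnet).
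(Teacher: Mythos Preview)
Your proposal is correct and follows essentially the same strategy as the paper's proof: lift covering firing sequences of $\pNet$ (together with a chosen flow chain per $\A$-subformula) to firing sequences of $\pNetMC$, project back in the other direction, and verify by structural induction over the formula that the rewriting in Table~\ref{tab:formulas} preserves satisfaction along this correspondence; the paper packages the two directions as explicit counterexample maps and splits the induction into separate lemmas for the run part, the flow part, and the outer Flow-LTL connectives. One small correction to your description of the construction: the inhibitor arcs on $t_{\skipT_i}$ come only from the subnet copies $\subnet{p}_i$ of places $p\in\pre{}{t}$, not from all tracking places, so the skip transition remains enabled whenever the tracked chain currently sits at a place outside $\pre{}{t}$---this is what lets a started chain wait through original steps that do not touch it, and without it your lifting would deadlock after the first step.
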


\subsection{Petri Net Model Checking with Circuits}
\label{sec:pnMCwithCircuits}
We translate the model checking of an LTL formula \(\phiLTL\) with places and transitions as atomic propositions on 
a safe P/T Petri net with inhibitor arcs \(\pNet\) to a model checking problem on a circuit.
We define the circuit \(\circuit_\pNet\) simulating \(\pNet\) and an adapted formula \(\phiLTL'\),
which can be checked by modern model checkers~\cite{DBLP:conf/fmcad/ClaessenES13,DBLP:conf/cav/FinkbeinerRS15,abcAsTheyWantIt}.

A \emph{circuit} \(\circuit=(\cin,\cout,\clatches,\cformula)\)
consists of boolean variables \(\cin\), \(\cout\), \(\clatches\) for input, output, latches,
and a boolean formula \(\cformula\) over \(\cin\times \clatches \times \cout\times \clatches\),
which is deterministic in \(\cin\times \clatches\).
The formula \(\cformula\) can be seen as transition relation from a valuation of the input variables
and the current state of the latches to the valuation of the output variables and the next state of the latches.
A circuit \(\circuit\) can be interpreted as a Kripke structure
such that the satisfiability of a formula \(\phiLTL'\)
(denoted by \(\circuit\models \phiLTL'\)) can be defined by the satisfiability in the Kripke structure.

The desired circuit \(\circuit_\pNet\) has
a latch for each place \(p\in\pl\) to store the current \emph{marking}, 
a latch \(\mathtt{i}\) for \emph{initializing} this marking with \(\init\) in the first step, and
a latch \(\mathtt{e}\) for handling \emph{invalid} inputs.
The inputs \(\cin\) consider the firing of a transition \(t\in\tr\).
The latch \(\mathtt{i}\) is true in every but the first step.
The latch \(\mathtt{e}\) is true whenever invalid values are applied on the inputs, i.e.,
the firing of not enabled, or more than one transition.
The marking latches are updated according to the firing of the valid transition.
If currently no valid input is applied, the marking is kept from the previous step.
There is an output for each place (the current marking),
for each transition (the transition leading to the next marking), and for the current value of the invalid latch.
We create \(\phiLTL'\) by skipping the initial step and allowing invalid inputs only at the end of a trace:
\[\phiLTL'=\ltlNext\,(\ltlAlways(\mathtt{e}\rightarrow\ltlAlways \mathtt{e})\rightarrow\phiLTL).\]
This allows for finite firing sequences.
The concrete formula \(\cformula\), the Kripke structure, and the corresponding proofs can be found in App.~\ref{appendixB}.
The circuit \(\circuit_\pNet\) can be encoded as an and-inverter graph in the Aiger format~\cite{Biere-FMV-TR-11-2}.
\begin{lemma}[Correctness of the Circuit]
\label{lem:corCircuit}
For a safe P/T Petri net with inhibitor arcs \(\pNet\) and an LTL formula \(\phiLTL\), 
there exists a circuit \(\circuit_\pNet\)
with \(|\pl|+2\) latches and \(\oclass{|\pNet|^2}\) gates, and \(\phiLTL'\) 
of size \(\oclass{|\phiLTL|}\)
such that \(\pNet\models_\mathtt{LTL} \phiLTL\)~iff~\(\circuit_\pNet\models \phiLTL'\).
\end{lemma}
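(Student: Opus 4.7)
The argument splits naturally into three parts: counting the latches of \(\circuit_\pNet\), counting its gates, and establishing the semantic equivalence \(\pNet\models_\mathtt{LTL}\phiLTL \Leftrightarrow \circuit_\pNet\models\phiLTL'\).

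The latch count is immediate from the construction: one latch per place in \(\pl\), together with the initialisation latch \(\mathtt{i}\) and the invalidity latch \(\mathtt{e}\), which sums to \(|\pl|+2\). The formula size is equally direct, because \(\phiLTL'\) wraps \(\phiLTL\) in a constant-size context of Boolean and temporal connectives, giving \(|\phiLTL'|\in\oclass{|\phiLTL|}\). For the gate count, I would analyse \(\cformula\) piecewise. Each transition contributes an enabledness conjunct whose size is the number of its preset places and inhibitor places, so summing over \(\tr\) yields \(\oclass{|\flMC|+|\inhibitorFlMC|}=\oclass{|\pNet|}\) gates. The constraint that \emph{at most one} transition fires is a pairwise exclusion over \(\tr\), contributing \(\oclass{|\tr|^2}\) gates. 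The update of each place latch depends on which adjacent transitions fire, and is therefore computable with \(\oclass{|\pNet|}\) gates per place, so \(\oclass{|\pl|\cdot|\pNet|}\subseteq\oclass{|\pNet|^2}\) overall. The gates for the \(\mathtt{i}\) and \(\mathtt{e}\) latches are of constant and linear size, respectively. The total is \(\oclass{|\pNet|^2}\).

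For the semantic equivalence, the plan is to set up a step-preserving correspondence between firing sequences of \(\pNet\) covering its runs and paths of the Kripke structure associated with \(\circuit_\pNet\) that satisfy the premise \(\ltlAlways(\mathtt{e}\rightarrow\ltlAlways\mathtt{e})\) of \(\phiLTL'\). The initial circuit step (where \(\mathtt{i}=0\)) loads \(\init\) into the marking latches and the outer \(\ltlNext\) in \(\phiLTL'\) skips precisely this loading step, so that the suffix fed to \(\phiLTL\) begins with the marking \(\init\). In every subsequent step, if the inputs encode a single enabled transition, the marking latches advance to the successor marking and the outputs expose both the previous marking and the chosen transition, so the resulting output trace coincides on the atomic propositions of \(\phiLTL\) with the trace \(\sigma(\zeta)\) of the corresponding firing sequence \(\zeta\). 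Invalid inputs force \(\mathtt{e}\) to latch to true permanently, which models termination of a finite firing sequence together with the stutter tail prescribed in the definition of \(\sigma(\zeta)\); the premise of \(\phiLTL'\) discards all remaining non-stutter behaviour after \(\mathtt{e}\) becomes true.

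The main obstacle is to verify this correspondence in both directions rigorously. Given a firing sequence \(\zeta\) of \(\pNet\), one must exhibit a valid-prefix circuit path whose filtered output trace is \(\sigma(\zeta)\) and argue that its premise is satisfied; for finite \(\zeta\) the circuit must be allowed to take arbitrary invalid inputs after the last real transition, and one checks that the marking latches continue to stutter the terminal marking so that \(\phiLTL\) cannot distinguish. Conversely, for any circuit path satisfying \(\ltlAlways(\mathtt{e}\rightarrow\ltlAlways\mathtt{e})\), the prefix before the first invalid step projects to a legitimate firing sequence of \(\pNet\), infinite if \(\mathtt{e}\) stays false forever. Once this bijection is in place, the biconditional follows from the LTL semantics by universally quantifying over firing sequences on the Petri-net side and over admitted paths on the circuit side, using \refLemma{correctnessPNLTL}-style reasoning combined with the \(\ltlNext\) that strips the initialisation step.
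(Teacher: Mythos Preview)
Your proposal is correct and follows essentially the same approach as the paper: a direct latch/gate count from the circuit definition, followed by the semantic equivalence via a bidirectional transformation of counterexamples between firing sequences of \(\pNet\) and initial paths of the Kripke structure \(\kripke_{\circuit_\pNet}\), handling the finite case through the stuttering enforced by the \(\mathtt{e}\) latch and the premise \(\ltlAlways(\mathtt{e}\rightarrow\ltlAlways\mathtt{e})\). One small remark: your closing appeal to ``\refLemma{correctnessPNLTL}-style reasoning'' is a misplaced reference---that lemma concerns the PNwT\(\to\)PN reduction, not the Petri-net\(\to\)circuit encoding---so drop it; the correspondence you have already sketched at the level of atomic propositions is all that is needed here.
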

\begin{theorem}
\label{theo:correctness}
A safe Petri net with transits \(\pNet\) can be checked against a Flow-LTL formula \(\phiRun\)
in single-exponential time in the size of \(\pNet\) and \(\phiRun\).
\end{theorem}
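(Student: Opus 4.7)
The plan is to compose the three constructions of Section~\ref{sec:generalization} and bound the resulting complexity by a standard LTL model checking argument on Kripke structures.

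First, given $\pNet$ and $\phiRun$, I apply Section~\ref{sec:pntfl2pn} to produce the safe P/T net with inhibitor arcs $\pNetMC$. By Lemma~\ref{lem:sizePN}, $|\pNetMC|$ lies in $\oclass{|\pNet|^3 \cdot n + |\pNet|}$, where $n \leq |\phiRun|$ is the number of flow subformulas. Second, I apply Section~\ref{sec:flowltl2ltl} to produce the LTL formula $\phiMC$; by Lemma~\ref{lem:sizeFormula} its size lies in $\oclass{|\pNet|^3 \cdot n \cdot |\phiRun| + |\phiRun|}$. By Lemma~\ref{lem:correctnessPNLTL}, the satisfaction relation is preserved: $\pNet \models \phiRun$ iff $\pNetMC \models_\mathtt{LTL} \phiMC$. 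Importantly, both the net and the formula are produced in time polynomial in $|\pNet|$ and $|\phiRun|$.

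Third, I apply Section~\ref{sec:pnMCwithCircuits} to $\pNetMC$ and $\phiMC$. Lemma~\ref{lem:corCircuit} yields a circuit $\circuit_{\pNetMC}$ with $|\plMC|+2 \in \oclass{|\pNet| \cdot n + |\pNet|}$ latches and $\oclass{|\pNetMC|^2}$ gates, together with an LTL formula $\phiMC'$ of size $\oclass{|\phiMC|}$, such that $\pNetMC \models_\mathtt{LTL} \phiMC$ iff $\circuit_{\pNetMC} \models \phiMC'$. Interpreting $\circuit_{\pNetMC}$ as a Kripke structure, its state space has size $2^{|\clatches|}$, i.e., $2^{\oclass{|\pNet| \cdot n}}$, which is single-exponential in $|\pNet| + |\phiRun|$.

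Finally, I invoke the classical result that LTL model checking of a Kripke structure~$K$ against a formula~$\psi$ can be decided in time $\oclass{|K| \cdot 2^{|\psi|}}$ (by translating $\psi$ to a B\"uchi automaton of size $2^{\oclass{|\psi|}}$ and checking emptiness of the product). Substituting our bounds gives a running time of $2^{\oclass{|\pNet| \cdot n}} \cdot 2^{\oclass{|\pNet|^3 \cdot n \cdot |\phiRun|}}$, which is single-exponential in $|\pNet| + |\phiRun|$. The construction steps themselves are polynomial and thus dominated. The main conceptual obstacle is only bookkeeping: verifying that the polynomial blow-ups from the three lemmas combine into a single exponent on a sum of $|\pNet|$ and $|\phiRun|$ rather than a product, which follows because the latch count (the exponent) depends only linearly on $|\pNet|$ and $n$, while the formula size enters only in the multiplicative factor $2^{|\phiMC'|}$ that itself remains single-exponential in the input size.
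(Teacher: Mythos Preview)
Your proof is correct and follows essentially the same approach as the paper: chain Lemmas~\ref{lem:sizePN}, \ref{lem:sizeFormula}, \ref{lem:correctnessPNLTL}, and~\ref{lem:corCircuit} to obtain a polynomial-size circuit and formula, then invoke the standard complexity of LTL model checking on the resulting (exponential) Kripke structure. Your bookkeeping is more explicit than the paper's terse version, but the structure and the ingredients are identical.
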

Checking a safe Petri net with transits against Flow-LTL has a PSPACE-hard lower bound because checking a safe Petri net against LTL is a special case of this problem and reachability of safe Petri nets is PSPACE-complete.

\section{Implementation Details and Experimental Results}
\label{sec:experimentalResults}
We implemented our model checking approach in a prototype tool based on the tool \textsc{Adam}~\cite{DBLP:conf/cav/FinkbeinerGO15}. 
Our tool takes as input a Flow-LTL specification and a Petri net with transits, and carries out the transformation described in \refSection{generalization} to
obtain an LTL formula and an Aiger circuit.
We then use MCHyper \cite{DBLP:conf/cav/FinkbeinerRS15} to combine the circuit and the LTL formula into another Aiger circuit.
MCHyper is a verification tool for HyperLTL~\cite{DBLP:conf/post/ClarksonFKMRS14}, which subsumes LTL.
The actual model checking is carried out by the hardware model checker ABC \cite{abcAsTheyWantIt}.
ABC provides a toolbox of state-of-the-art verification and falsification techniques like 
IC3~\cite{DBLP:conf/vmcai/Bradley11}/PDR \cite{DBLP:conf/fmcad/EenMB11}, interpolation (INT)~\cite{DBLP:conf/sas/McMillan03},
and bounded model checking~\cite{DBLP:conf/cav/BiereCRZ99} (BMC, BMC2, BMC3).
We prepared an artifact to replicate our experimental results~\cite{GiesekingH19}.

Our experimental results cover two benchmark families (SF/RP) and a case study (RU) from software-defined networking
on
real-world network topologies:
\\\textbf{Switch Failure} (SF) (\emph{Parameter:} \(n\) switches):
		From a sequence of $n$ switches with the ingress at the beginning and the egress at the end, a failing switch is chosen at random and removed. Then, data flows are bypassed from the predecessor to the successor of the failing switch.
		 Every data flow reaches the egress node no matter of the update (connectivity).\\
\textbf{Redundant Pipeline} (RP) (\emph{Parameters:} \(n_1\) switches in pipeline one / \(n_2\) switches in pipeline two / \(v\) version):
		The \emph{base version (B)} contains two disjoint sequences of switches from the ingress to the egress,
		possibly with differing length.
		For this and the next two versions, it is required that each data flow reaches the egress node (connectivity) and is only forwarded via the first or the second pipeline (packet coherence).
		\emph{Update version (U):}
		Two updates are added that can concurrently remove the first node of any pipeline and return the data flows to the ingress.
		If both updates happen, data flows do not reach the egress.
		Returning the data flows violates packet coherence.
		\emph{Mutex version (M):}
		A mutex is added to the update version such that at most one pipeline can be broken.
		Updates can happen sequentially such that data flows are in a cycle through the ingress.
		\emph{Correct version (C):}
		The requirements are weakened such that each data flow only has to reach the egress when updates do not occur infinitely often.\\
\textbf{Routing Update} (RU) is a case study based on realistic software-defined networks.
		We picked 31 real-world network topologies from~\cite{DBLP:journals/jsac/KnightNFBR11}.
		For each network, we choose at random an ingress switch, an egress switch, and a loop- and drop-free initial configuration between the two.
		For a different, random final configuration, we build a sequential update in reverse from egress to ingress.
		The update overlaps with the initial configuration at some point during the update or is activated from the ingress in the last step.
		It is checked if all packets reach the egress 
		(T) and if all packets reach another specific switch as an egress 
		(F).

{
\setlength{\tabcolsep}{1pt}
\begin{table}
\caption{Experimental results from the benchmark families {Switch Failure} (SF) and Redundant Pipeline (RP), and the case study {Routing Update} (RU). The results are the average over five runs on an Intel i7-2700K CPU with 3.50~GHz, 32~GB RAM, and a timeout of 30~minutes.}
  \label{table:benchmarks}
  \centering
\begin{tabular}{c|c|r|rrr|rrr|rr||rr||c}
& & &\multicolumn{3}{c|}{PNwT} & \multicolumn{3}{c|}{Translated PN} & \multicolumn{2}{c||}{Circuit} & \multicolumn{2}{c||}{Result}\\
Ben. & Par. & \(\#S\) &\(|\pl|\) & \(|\tr|\) & \(|\phiRun|\) & \(|\plMC|\)  & \(|\trMC|\)  & \(|\phiLTL'|\)  & Lat.  & Gat.  & Sec. & Algo. & \(\models\)\\\hline
SF 
   &   3  & 4 & 4 & 5 & 35 & 17 & 22 & 60 & 90 & 2796 &  2.7 & IC3 & \cmark \\
   &\(\cdots\)&\multicolumn{9}{c||}{\(\cdots\)} & \multicolumn{2}{c||}{\(\cdots\)} &\\
   &   9  & 10 & 10 & 11 & 95 & 35 & 46 & 138 & 186 & 8700 & 1359.9 & IC3 & \cmark \\
   &   10  & 11 & 11 & 12 & 105 & 38 & 50 & 151 & 202 & 9964 & TO & - & ? \\\hline 
RP  
  &    1/1/B   &  4  &  4 &  5  &  43  &  17  &  22  &  68  &  100  &  2989   &   4.0   & IC3 &   \cmark \\
  &\(\cdots\)&\multicolumn{9}{c||}{\(\cdots\)} & \multicolumn{2}{c||}{\(\cdots\)} &\\
  &    4/4/B   &  10  &  10 &  11  &  103  &  35  &  46  &  146  &  196  &  8893   &   646.4   & IC3 &   \cmark \\
  &    4/5/B   &  11  &  11 &  12  &  113  &  38  &  50  &  159  &  212  &  10157   &   TO   & - &   ? \\
\cline{2-13}
&   1/1/U   & 6 & 6 & 9 & 63 & 25 & 36 & 100 & 136 & 5535 &   1.6   & BMC2 &   \xmark \\
&\(\cdots\)&\multicolumn{9}{c||}{\(\cdots\)} & \multicolumn{2}{c||}{\(\cdots\)} &\\
&   5/4/U   & 13 & 13 & 16 & 133 & 46 & 64 & 191 & 248 & 14523 &   945.1   & BMC3 &   \xmark \\
&   5/5/U   & 14 & 14 & 17 & 143 & 49 & 68 & 204 & 264 & 16127 &   TO   & - &   ? \\
\cline{2-13}
&   1/1/M  & 6 & 9 & 11 & 63 & 30 & 42 & 106 & 146 & 6908 &   8.1   & BMC3 &   \xmark \\
&\(\cdots\)&\multicolumn{9}{c||}{\(\cdots\)} & \multicolumn{2}{c||}{\(\cdots\)} &\\
&   4/3/M  & 11 & 14 & 16 & 113 & 45 & 62 & 171 & 226 & 13573 &   1449.6   & BMC2 &   \xmark \\
&   4/4/M  & 12 & 15 & 17 & 123 & 48 & 66 & 184 & 242 & 15146 &   TO   & - &   ? \\
\cline{2-13}
  	&    1/1/C   & 6 &  9  &  11  &  70  &  30  &  42  &  113  &  151  &  7023   &   63.1   &    IC3 &   \cmark \\
   	&\(\cdots\)&\multicolumn{9}{c||}{\(\cdots\)} & \multicolumn{2}{c||}{\(\cdots\)} &\\
  	&    3/3/C   & 10&  13  &  15  &  110  &  42  &  58  &  165  &  215  &  12195   &   1218.0   &    IC3 &   \cmark \\
  	&    3/4/C   & 11&  14  &  16  &  120  &  45  &  62  &  178  &  231  &  13688   &    TO   &    - &   ? \\
\cline{1-13}
RU
  &    Arpanet196912T  & 4  & 14 & 10 & 117 & 31 & 39 & 154 & 188 & 7483  &   22.7   & IC3  &  \cmark \\ 
  &    Arpanet196912F  & 4  & 14 & 10 & 117 & 31 & 39 & 154 & 188 & 7483  &   2.0    & BMC3 &   \xmark \\
  &    NapnetT         & 6  & 23 & 17 & 199 & 48 & 64 & 254 & 292 & 15875 &   95.1  & IC3 &   \cmark \\ 
  &    NapnetF         & 6  & 23 & 17 & 199 & 48 & 64 & 254 & 292 & 15875 &   4.7   & BMC3 &   \xmark \\ 
  &\(\cdots\)&\multicolumn{9}{c||}{\(\cdots\)} & \multicolumn{2}{c||}{\(\cdots\)} &\\
  &    NetrailT        & 7  & 30 & 23 & 271 & 62 & 88 & 344 & 380 & 26101 & 145.3 & IC3 &   \cmark \\ 
  &    NetrailF        & 7  & 30 & 23 & 271 & 62 & 88 & 344 & 380 & 26101 &  58.3 & BMC3 &   \xmark \\
  &    Arpanet19706T  & 9  & 33 & 24 & 281 & 67 & 89 & 354 & 400 & 27619 & 507.8 & IC3 &   \cmark \\ 
  &    Arpanet19706F  & 9  & 33 & 24 & 281 & 67 & 89 & 354 & 400 & 27619 & 49.7 & BMC3 &   \xmark \\
  &    NsfcnetT        & 10 & 31 & 22 & 261 & 65 & 87 & 334 & 376 & 26181 &    304.8   & IC3 &  \cmark  \\ 
  &    NsfcnetF        & 10 & 31 & 22 & 261 & 65 & 87 & 334 & 376 & 26181 &   8.4   & BMC3 &   \xmark \\
\cline{2-13}
  &\(\cdots\)&\multicolumn{9}{c||}{\(\cdots\)} & \multicolumn{2}{c||}{\(\cdots\)} &\\
  &    TwarenF         & 20 & 65 & 45 & 531 & 130 & 170 & 664 & 736 & 87493 & 461.5 & BMC3 &   \xmark \\
  &    MarnetF         & 20 & 77 & 57 & 679 & 156 & 224 & 854 & 908 & 138103 & 746.1 & BMC3 &   \xmark \\
  &    JanetlenseF     & 20 & 91 & 71 & 847 & 184 & 280 & 1064 & 1104 & 203595 & 514.2 & BMC2 &   \xmark \\
  &    HarnetF         & 21 & 71 & 50 & 593 & 143 & 193 & 744 & 812 & 108415 & 919.0 & BMC3 &   \xmark \\
  &    Belnet2009F     & 21 & 71 & 50 & 597 & 145 & 199 & 754 & 816 & 113397 & 1163.3 & BMC2 &   \xmark \\
  &\(\cdots\)&\multicolumn{9}{c||}{\(\cdots\)} & \multicolumn{2}{c||}{\(\cdots\)} &\\
  &    UranF           & 24 & 56 & 38 & 449 & 106 & 133 & 552 & 618 & 57950 & 143.2 & BMC3 &   \xmark \\
  &    KentmanFeb2008F & 26 & 82 & 56 & 669 & 167 & 223 & 844 & 920 & 142291 &  111.2 & BMC3 &   \xmark \\
  &    Garr200212F     & 27 & 86 & 59 & 703 & 174 & 232 & 884 & 964 & 153509 & 324.2 & BMC3 &   \xmark \\
  &    IinetF          & 31 & 104 & 73 & 871 & 210 & 288 & 1094 & 1176 & 227153 & 1244.5 & BMC3 &   \xmark \\
  &    KentmanJan2011F & 38 & 117 & 79 & 943 & 236 & 312 & 1184 & 1288 & 269943 &  112.6 & BMC3 &   \xmark \\
\cline{1-13}
 \end{tabular}
\end{table}
}

Table~\ref{table:benchmarks} presents our experimental results and 
indicates for each benchmark the model checking approach with the best performance (cf.\ App.~\ref{appendix:com_ex_results} for the full table).
In the benchmarks where the specification is satisfied (\cmark), IC3 is the clear winner,
in benchmarks where the specification is violated (\xmark), the best approach is bounded model checking with dynamic unrolling (BMC2/3).
The results are encouraging: hardware model checking is effective for circuits constructed by our transformation with 
up to 400 latches and 27619 gates; falsification is possible for larger circuits with up to 1288 latches and 269943 gates. 
As a result, we were able to automatically verify with our prototype implementation updates for networks with topologies of 
up to 10 switches (\(\# S\)) and to falsify updates for topologies with up to 38~switches within the time bound of 30 minutes.

We investigated the cost of specifications drop and loop freedom compared with connectivity and packet coherence.
Table~\ref{table:benchmarks2} exemplarily shows the results for network topology \emph{Napnet} from RU.
Connectivity, packet coherence, and loop freedom have comparable runtime due to similar formula and circuit sizes.
Drop freedom is defined over transitions and, hence, expensive for our transformation.

\begin{table}
\caption{For the network topology Napnet and a concurrent update between two randomly generated topologies, our four standard requirements are checked.}
  \label{table:benchmarks2}
  \centering
\begin{tabular}{c|c|ccc|ccc|cc||cc||c}
& & \multicolumn{3}{c|}{PN w.\ Transits} & \multicolumn{3}{c|}{Translated PN} & \multicolumn{2}{c||}{Circuit} & \multicolumn{2}{c||}{Result}\\
Ben. & Req. &\(|\pl|\) & \(|\tr|\) & \(|\phiRun|\) & \(|\plMC|\)  & \(|\trMC|\)  & \(|\phiLTL'|\)  & Latches  & Gates  & Sec. & Algo. & \(\models\)\\\hline
Napnet &  connectivity & 23 & 17 & 199 & 48 & 64 & 254 & 292 & 15875 &  95.1  & IC3  &  \cmark \\
 &  p.\ coherence  & 23 & 17 & 208 & 48 & 64 & 267 & 298 & 16041 &   31.9  & IC3  &  \cmark \\
 &   loop-free  & 23 & 17 & 237 & 48 & 64 & 296 & 305 & 16289 &   52.6  &  INT  &  \cmark \\
 &   drop-free  & 23 & 17 & 257 & 48 & 64 & 2288 & 325 & 30449 &   165.9  &  IC3  &  \cmark \\
\cline{1-13}
 \end{tabular}
\end{table}

\section{Related Work}
\label{sec:related-work}
There is a large body of work on software-defined networks, see~\cite{DBLP:journals/pieee/KreutzRVRAU15} for a good introduction.
Specific solutions that were proposed for the network update problem include \emph{consistent
updates}~\cite{DBLP:conf/sigcomm/ReitblattFRSW12,DBLP:conf/wdag/CernyFJM16} (cf.\ the introduction), \emph{dynamic scheduling}~\cite{Jin:2014:DSN:2619239.2626307}, and \emph{incremental updates}~\cite{Katta:2013:ICU:2491185.2491191}.
Model checking, including both explicit and SMT-based approaches, has
previously been used to verify software-defined
networks~\cite{Canini:2012:NWT:2228298.2228312,6987609,Mai:2011:DDP:2043164.2018470,DBLP:conf/icnp/WangMLTS13,DBLP:conf/pldi/BallBGIKSSV14,DBLP:conf/popl/PadonIKLSS15}.
Closest
to our work are models of networks as Kripke structures 
to use model checking for synthesis of correct network
updates~\cite{DBLP:conf/cav/El-HassanyTVV17,DBLP:conf/cav/McClurgHC17}. While
they pursue \emph{synthesis}, rather than verification
of network updates, the approach is still based on a model checking
algorithm that is called in each step of the construction of a
sequence of updates.  
The model checking
subroutine of the synthesizer assumes that each packet sees at most
one switch that was updated after the packet entered the network.
This restriction is implemented with explicit waits,
which can afterwards often be removed by heuristics.
Our model checking routine does not require this assumption. As it therefore
allows for more general updates, it would be very interesting to add the
new model checking algorithm into the synthesis procedure.
Flow correctness also plays a role in other application areas like \emph{access control} in physical spaces. Flow properties that are of interest in this setting, such as
``from every room in the building there is a path to exit the building'', have been formalized in a temporal logic~\cite{7536393}.

There is a significant number of model checking tools~%
(e.g., \cite{DBLP:conf/apn/Schmidt00,DBLP:conf/tacas/Thierry-Mieg15,DBLP:conf/tacas/KantLMPBD15}) 
for Petri nets and an annual model checking contest~\cite{mcc:2019}.
In this contest, however, only LTL formulas with places as atomic propositions are checked.
To the best of our knowledge, other model checking tools for Petri nets do not provide places and transitions as
atomic propositions.
Our encoding needs to reason about places and transitions to pose fairness conditions on the firing of transitions.

\section{Conclusion}
\label{sec:conclusion}
We have presented a model checking approach for the verification of data flow correctness in networks during concurrent updates of the network configuration.
Key ingredients of the approach are Petri nets with transits, which superimpose the transit relation of data flows onto the flow relation of Petri nets,
and Flow-LTL, which combines the specification of local data flows with the specification of global control.
The model checking problem for Petri nets with transits and Flow-LTL specifications reduces to a circuit model checking problem.
Our prototype tool implementation can verify and falsify realistic concurrent updates of software-defined networks with specifications like packet coherence.

In future work, we plan to extend this work to the synthesis of concurrent updates. 
Existing synthesis techniques use model checking as
a subroutine to verify the correctness of the individual update steps~\cite{DBLP:conf/cav/El-HassanyTVV17,DBLP:conf/cav/McClurgHC17}. 
We plan to study Flow-LTL specifications in the setting of Petri games~\cite{DBLP:journals/iandc/FinkbeinerO17}, which describe the existence of controllers for
asynchronous distributed processes. 
This would allow us to synthesize concurrent network updates without a central controller.

\bibliographystyle{splncs04}
\bibliography{ms}

\appendix
\newpage

\section*{Appendix}

\section{Encoding of Concurrent Updates for SDN}
\label{appendix:motivation}

In this section of the appendix, we outline how updates to a configured network topology can be encoded into Petri nets with transits.
First, we recall network topologies, network configurations, and concurrent updates.
Second, we encode a network topology as data plane into Petri nets with transits.
Third, we encode a corresponding network configuration and a concurrent update to this configuration into usual Petri nets (without transits).

\subsection{Network Topology, Configurations, and Updates}

A \emph{network topology} $T$ is given as a finite, connected, undirected graph $T=(\mathit{Sw}, \mathit{Con})$ with the non-empty set of switches $\mathit{Sw}$ as vertices and the non-empty set of connections $\mathit{Con} \subseteq \mathit{Sw} \times \mathit{Sw}$ as edges.
The \emph{configuration} of a network topology is defined by a static NetCore~\cite{DBLP:conf/icfp/FosterHFMRSW11,DBLP:conf/nsdi/MonsantoRFRW13} program with the following syntax: 

$\texttt{ingress = } \mathit{Ingr}\texttt{;}$

$\texttt{forwarding}\texttt{;}$

$\texttt{egress = } \mathit{Egr}\texttt{;}$\\
such that $\mathit{Ingr} \subseteq \mathit{Sw}$ is the non-empty set of \emph{ingress switches} where packets enter the network topology and $\mathit{Egr} \subseteq \mathit{Sw}$ is the non-empty set of \emph{egress switches} to which packets should be forwarded. 
It is required  that $\mathit{Ingr} \cap \mathit{Egr} = \emptyset$.
$\texttt{forwarding}$ is a list of forwarding rules of the form $\texttt{x.fwd(y);}$ with $\texttt{x}, \texttt{y} \in \mathit{Sw}$. 
Such a rule defines that switch~$\texttt{x}$ forwards packets to switch~$\texttt{y}$.
Each switch occurs at most once on the left hand side of the list of forwarding rules, making the forwarding decision of switches \emph{deterministic}.
If a switch is not configured, then all packets are dropped at that switch.
 
A \emph{concurrent update} to the $\texttt{forwarding}$ rules is given by the following syntax:
\[
\begin{array}{lll}
	\text{switch update} &\ ::= \ & \texttt{upd(x.fwd(z))} \hfill (\texttt{x}, \texttt{z} \in \mathit{Sw})\\
	\text{sequential update} &\ ::= \ & \texttt{(}\text{update } \texttt{>>} \text{ update } \texttt{>>} \text{ ... } \texttt{>>} \text{ update}\texttt{)} \\
	\text{parallel update} &\ ::= \ & \texttt{(} \text{update } \texttt{||} \text{ update } \texttt{||} \text{ ... } \texttt{||} \text{ update} \texttt{)} \\
	\text{update} &\ ::= \ & \text{switch update } | \text{ sequential update } | \text{ parallel update }\\
\end{array}
\]
where, for efficiency reasons, each switch is updated at most once.

Update statements of the form \texttt{upd(x.fwd(z))} define that a given configuration is altered by changing the forwarding rule $\texttt{x.fwd(y)}$ to $\texttt{x.fwd(z)}$ or by adding $\texttt{x.fwd(z)}$ to the given configuration if switch~$\texttt{x}$ is not configured in $\texttt{forwarding}$.
The first case updates the forwarding rule of switch~$\texttt{x}$ from switch~$\texttt{y}$ to switch~$\texttt{z}$ whereas the second case configures the switch~$\texttt{x}$ to now forward packets to switch~$\texttt{z}$ (instead of dropping the packets).
The sequential update defines a sequence of updates, where the next update is only carried out after the current update is finished.
The parallel update defines that all updates can happen in parallel and no assumptions about their order can be made.

Given a network topology $T = (\mathit{Sw}, \mathit{Con})$, an initial configuration $\texttt{ingress =}\\\mathit{Ingr} \texttt{; } \texttt{forwarding; egress = } \mathit{Egr}\texttt{;}$, and an $\mathit{update}$, we describe the construction of the corresponding Petri net with transits $\pNet=(\pl^{\mathit{D}} \cup \pl^C, \tr^{\mathit{D}} \cup \tr^C, \fl^{\mathit{D}} \cup \fl^C,\init^{\mathit{D}} \cup \init^C,\tfl^{\mathit{D}}) $ consisting of a sub-Petri net with transits\\ $\pNet^\mathit{D}= (\pl^{\mathit{D}}, \tr^{\mathit{D}}, \fl^{\mathit{D}},\init^{\mathit{D}},\tfl^{\mathit{D}}) $ to encode the topology and initial configuration, and a sub-Petri net without transits $\pNet^C = ( \pl^C, \tr^C, \fl^C, \init^C)$ encoding the update to the initial configuration.

\subsection{Data Plane as Petri Net with Transits}
\label{appendix:motivation-data-plane}

As the flow of packets is modeled by transits, we model switches and their connections with tokens that remain in corresponding places.
The initial configuration puts tokens in additional places such that only the transitions corresponding to the configured forwarding are enabled.
Specific transitions model ingress switches where new data flows begin.
These data flows are then extended by firing the enabled forwarding rules without moving any tokens.
We thereby model any order of newly generated packets and their forwarding.
We assume that for all connections between two switches~\texttt{x} and \texttt{y}, both $(\texttt{x},\texttt{y})$ and $(\texttt{y},\texttt{x})$ are in $\mathit{Con}$.

We create a place for each switch and for each direction of connections between two switches: \quad
$\pl^\mathit{D} =\{\mathit{sw}_\texttt{x} \mid \texttt{x} \in \mathit{Sw} \} \cup \{\texttt{x.fwd(y)} \mid (\texttt{x}, \texttt{y}) \in \mathit{Con}  \}$.

We create a transition for each direction of connections between switches and for each ingress switch: \quad
$\tr^\mathit{D} = \{ \mathit{fwd}_{\texttt{x}\rightarrow \texttt{y}} \mid (\texttt{x}, \texttt{y}) \in \mathit{Con} \} \cup \{\mathit{ingress}_\texttt{x} \mid \texttt{x} \in \mathit{Ingr}\}$. 
In \refSection{applications}, we call this set of transitions $\mathit{Fwd}$ when defining \emph{drop freedom}.

We define the flow relation such that each transition $\mathit{fwd}_{\texttt{x}\rightarrow\texttt{y}}$ has the places $\mathit{sw}_\texttt{x}$, $\mathit{sw}_\texttt{y}$, and $\texttt{x.fwd(y)}$ in its preset \emph{and} postset, and each transition $\mathit{ingress}_\texttt{x}$ has the place $\mathit{sw}_\texttt{x}$ in its preset \emph{and} postset: 
$\fl^\mathit{D} = \{ (\mathit{sw}_\texttt{x}, \mathit{fwd}_{\texttt{x}\rightarrow\texttt{y}}), (\mathit{sw}_\texttt{y}, \mathit{fwd}_{\texttt{x}\rightarrow\texttt{y}}),\\ (\texttt{x.fwd(y)}, \mathit{fwd}_{\texttt{x}\rightarrow \texttt{y}}), (\mathit{fwd}_{\texttt{x}\rightarrow \texttt{y}}, \mathit{sw}_\texttt{x}),
(\mathit{fwd}_{\texttt{x}\rightarrow\texttt{y}}, \mathit{sw}_\texttt{y}), (\mathit{fwd}_{\texttt{x}\rightarrow\texttt{y}}, \texttt{x.fwd(y)})
\mid (\texttt{x}, \texttt{y}) \in \mathit{Con} \} \cup
\{ (\mathit{sw}_\texttt{x}, \mathit{ingress}_\texttt{x}), (\mathit{ingress}_\texttt{x}, \mathit{sw}_\texttt{x}) \mid \texttt{x} \in \mathit{Ingr} \}$.
All transitions are weak fair.

The initial marking contains all switches and the initial forwarding rules from $\texttt{forwarding}$: \quad
$\init^\mathit{D} = \{ \mathit{sw}_x \mid x \in \mathit{Sw} \} \cup \texttt{forwarding}$.

The transit relation of transitions of the form $\mathit{fwd}_{\texttt{x}\rightarrow\texttt{y}}$ defines a data flow from switch~$\texttt{x}$ to switch~$\texttt{y}$ and maintains data flows in switch~$\texttt{y}$. Transitions of the form $\mathit{ingress}_\texttt{x}$ create a new data flow in switch~$\texttt{x}$ and maintain data flows in switch~\texttt{x}: \quad
$\tfl^\mathit{D} = \{ (\mathit{sw}_\texttt{x}\tfl(\mathit{fwd}_{\texttt{x}\rightarrow\texttt{y}})\mathit{sw}_\texttt{y}), (\mathit{sw}_\texttt{y}\tfl(\mathit{fwd}_{\texttt{x}\rightarrow\texttt{y}})\mathit{sw}_\texttt{y}) \mid (\texttt{x},\texttt{y}) \in \mathit{Con} \} \cup  \{ (\startfl \tfl(\mathit{ingress}_\texttt{x}) sw_\texttt{x}), ( sw_\texttt{x} \tfl(\mathit{ingress}_\texttt{x}) sw_\texttt{x}) \mid \texttt{x} \in \mathit{Ingr}\} $.

\subsection{Control Plane as Petri Net}
\label{appendix:motivation-control-plane}

The subnet for the update has no transit relation but moves tokens from and to places of the form $\texttt{x.fwd(y)}$ such that transitions corresponding to other forwarding rules become enabled.
The order of these updates is defined by the nesting of the sequential and the parallel operator in the update.

The update is realized by initially one token moving through dedicated places.
A parallel update temporarily increases the number of tokens and reduces it upon completion to one.
The update and each of its sub-expressions have a unique starting and finishing place. 
The following construction defines the update behavior between start and finish places and connects finish and start places to ensure sequentiality and concurrency depending on the sub-expression structure. 
For a given $\mathit{update}$, let $\mathit{SwU}$ be the set of switch updates in it, $\mathit{SeU}$ the set of sequential updates in it, and $\mathit{PaU}$ the set of parallel updates in it.
Depending on $\mathit{update}$'s type, it is added to the respective set.

We introduce start and finish places for all elements of the three sets $\mathit{SwU}$, $\mathit{SeU}$, and $\mathit{PaU}$:
$\pl^C = \{ \mathit{upd}^\mathit{start}, \mathit{upd}^\mathit{finish} \mid \mathit{upd} \in \mathit{SwU} \} \cup \{ \mathit{seq}^\mathit{start}, \mathit{seq}^\mathit{finish} \mid \mathit{seq} \in \mathit{SeU} \} \cup \{ \mathit{par}^\mathit{start}, \mathit{par}^\mathit{finish} \mid \mathit{par} \in \mathit{PaU} \}$.
All transitions are weak fair.

Each switch update is represented by one transition.
Each sequential update of length $n$ is represented by $n+1$ transitions, where the first and last transition start and finish the sequential update, and the $n - 1$ transitions in between connect the $n$ steps of the sequential update.
Each concurrent update is represented by an open and a close transition:
$\tr^C =\{ \mathit{upd} \mid \mathit{upd} \in \mathit{SwU}\} \cup \{ \mathit{seq}^0, \ldots, \mathit{seq}^n  \mid \mathit{seq} \in \mathit{SeU} \wedge |\mathit{seq}| = n \} \cup\{ \mathit{par}^\mathit{open}, \mathit{par}^\mathit{close} \mid \mathit{par} \in \mathit{PaU}\}$.

There are two mutually exclusive cases for the flow of switch updates where (1) a previous configuration is updated and (2) a new configuration is added:
For (1), we search for a previous configuration $\texttt{x.fwd(y)}$ and move the token to the new configuration $\texttt{x.fwd(z)}$: 
$\fl^C_{\mathit{SwU}1} =  \{ (\texttt{upd(x.fwd(z))}^\mathit{start}, \texttt{upd(x.fwd(z))}) ,  \\( \texttt{x.fwd(y)} , \texttt{upd(x.fwd(z))}), (\texttt{upd(x.fwd(z))}, \texttt{x.fwd(z)}), (\texttt{upd(x.fwd(z))},\\ \texttt{upd(x.fwd(z))}^\mathit{finish})  \mid \texttt{upd(x.fwd(z))} \in \mathit{SwU} \wedge \texttt{upd(x.fwd(y))} \in \texttt{forwarding}  \} $.

If no corresponding previous configuration exists (2), a new token is created at $\texttt{x.fwd(z)}$:
$\fl^C_{\mathit{SwU}2} =  \{ (\texttt{upd(x.fwd(z))}^\mathit{start}, \texttt{upd(x.fwd(z))}) , \\(\texttt{upd(x.fwd(z))}, \texttt{upd(x.fwd(z))}^\mathit{finish}), (\texttt{upd(x.fwd(z))}, \texttt{x.fwd(z)}) \mid \\ \texttt{upd(x.fwd(z))} \in \mathit{SwU} \wedge \nexists \texttt{y} \in \mathit{Sw} : \texttt{upd(x.fwd(y))} \in \texttt{forwarding} \} $.

For each sequential update, we add flows to the corresponding transitions such that the token can move from the start place of the sequential update to the start place of its first direct sub-formula, then from the finish of each direct sub-formula ($i$) to the start place of the next direct sub-formula ($i+1$), and in the end, from the finish place of the last direct sub-formula to the finish place of the sequential update:
$\fl^C_\mathit{SeU} = \{ (\mathit{seq}^\mathit{start} , \mathit{seq}^0), (\mathit{seq}^0, \mathit{sub}^\mathit{start}_1) \mid \mathit{seq} \in \mathit{SeU} \wedge \mathit{sub}_1 \text{ is the first direct sub-expression of } \mathit{seq}\} \cup \{(\mathit{sub}^\mathit{finish}_{i} , \mathit{seq}^i), (\mathit{seq}^i, \mathit{sub}^\mathit{start}_{i + 1}) \mid \mathit{seq} \in \mathit{SeU} \wedge |\mathit{seq}| = n \wedge  1 \leq i < n \wedge \mathit{sub}_{i}$ \text{ and } $\mathit{sub}_{i+1} \text{ are the $i$-th and $i+1$th}\\ \text{direct sub-formula of } \mathit{seq}\} \cup \{ (\mathit{sub}^\mathit{finish}_n , \mathit{seq}^{n}), (\mathit{seq}^{n}, \mathit{seq}^\mathit{finish}) \mid \mathit{seq} \in \mathit{SeU} \wedge\\ |\mathit{seq} | = n \wedge \mathit{sub}_n \text{ is the last direct sub-expression of } \mathit{seq}\} $.

Each concurrent update is opened to the starting places of its direct sub-formulas and closed from the finish places of its direct sub-formulas:\\
$\fl^C_\mathit{PaU} =\{ (\mathit{par}^\mathit{start}, \mathit{par}^\mathit{open} ), (\mathit{par}^\mathit{close}, \mathit{par}^\mathit{finish} )\}~\cup~\{(\mathit{par}^\mathit{open}, sub^\mathit{start}),\\ (\mathit{sub}^\mathit{finish} ,\mathit{par}^\mathit{close}) \mid \mathit{par} \in \mathit{PaU} \wedge \mathit{sub} \text{ is direct sub-expression of }\mathit{PaU} \}$.

We add all these sets into the flow of the Petri net $\fl^C = \fl^C_{\mathit{SwU}1} \cup \fl^C_{\mathit{SwU}2} \cup \fl^C_\mathit{SeU} \cup \fl^C_\mathit{PaU} $ and define the initial marking to contain one token in the starting place of the update: $\init^C = \{\mathit{update}^\mathit{start}\}$.

This construction gives us a Petri net with transits that models the concurrent update for the given network topology and initial configuration.
The set of egress switches can be used to formulate requirements.

\section{Proofs for the Transformation}
\label{appendixB}
In this section of the appendix, we provide details to \refSection{generalization}.
Firstly, we give a formal definition of the construction of the P/T Petri net with inhibitor arcs \(\pNetMC\)
from a Petri net with transits \(\pNet\) and a Flow-LTL formula \(\phiRun\) described in \refSection{pntfl2pn}.
Secondly, the formal transformation from a Flow-LTL formula \(\phiRun\) to an LTL formula \(\phiMC\)
(described in \refSection{flowltl2ltl}) is given. 
Thirdly, the correctness of these transformations are proven by mutually transforming the counterexamples.
Finally, the formal construction of the circuit (outlined in \refSection{pnMCwithCircuits}), its corresponding Kripke structure, and its correctness
are presented.
By this, the proofs for all lemmata and the theorem of this paper are made available.

\subsection{Formal Construction of the Net Transformation}
We introduce a set of identifiers \(\netIds\) and an injective naming function
\(\netId_\pNet:\pl\cup\tr\to\netIds\) for every
Petri net \(\petriNet\) (and all of its extensions)
which uniquely identifies every place and transition of a given net.
If the net \(\pNet\) is clear from the context, we omit the subscript and only write \(\netId\).
Furthermore, we often omit the predicate \(p\in\pl\wedge\netId(p)=\mathit{identifier}\)
and \(t\in\tr\wedge\netId(t)=\mathit{identifier}\), respectively, in formulas
and only use \(\mathit{identifier}\) instead of \(p\) or \(t\), respectively,
within the formula to keep the presentation short.

The construction of a Petri net with transits to a standard P/T Petri net with inhibitor arcs
is given by the following definition.
\begin{definition}[Petri Net with Transits to a P/T Petri Net]
\label{def:pntfl2pn}
Let \(\petriNetFl\) be a Petri net with transits and \(\phiRun\) a Flow-LTL formula
with \(n\in\N\) subformulas \(\A\,\phiLTL_i\), for \(i=1,\ldots,n\).
We define a P/T-Petri net with inhibitor arcs \(\pNetMC=(\plMC,\trMC,\flMC,\inhibitorFlMC,\initMC)\),
with
\begin{align*}
 \plMC\defEQ \plMC_o\cup\bigcup_{i\in\{1,\ldots,n\}}\plMC_i,\qquad \trMC\defEQ \trMC_o\cup\bigcup_{i\in\{1,\ldots,n\}}\trMC_i
\end{align*}
and a partial function
\(\lambda:~\trMC\cup\plMC\to \tr\cup\pl\)
which maps the elements to its corresponding original ones.
The smallest sets and sets \(\plMC_o,\plMC_i,\trMC_0\), \(\trMC_i\), \(\flMC\), and \(\inhibitorFlMC\) fulfilling
the following constraints define the net \(\pNetMC\).

By constraint \textit{(o)} it is ensured that all places, transitions, and flows of the original net \(\pNet\)
are also existent in \(\pNetMC\). The labeling and the identifiers are copied.
\begin{itemize}
\item[(o)] 	\(\plMC_o\supset\pl\wedge\trMC_o=\tr\wedge\flMC\supset\fl \wedge
	 	\forall p\MC\in\pl:\lambda(p\MC)=p\wedge\forall t\MC\in\tr:\lambda(t\MC)=t\wedge		 
	 	\forall p\MC\in\pl:\netId(p\MC)=\netId(p)\wedge\forall t\MC\in\tr:\netId(t\MC)=\netId(t)\)
\end{itemize}
With the following constraints starting with \textit{s}, the additional places, transitions,
and flows of the subnets are defined for each subformula \(\A\,\phiLTL_i\). Let \(\subnetIndices=\{1,\ldots,n\}\).
In \textit{(s1)}, a copy of every original place for each subnet is demanded for tracking the flow chains.
\begin{itemize}
\item[(s1)]	\(\forall i\in\subnetIndices:\forall p\in\pl:\exists p\MC\in\plMC_i: \lambda(p\MC)=p\wedge \netId(p\MC)=\subnet{\netId(p)}_i\)
\end{itemize}
To consider if any chain has been tracked, an initial place \(\initsub_i\) for every subnet is defined
via constraint \textit{(s2)}.
\begin{itemize}
\item[(s2)] \(\forall i\in\subnetIndices:\exists p\MC\in\plMC_i:\netId(p\MC)=\initsub_i\)
\end{itemize}
Constraint \textit{(s3)} ensures the existence of transitions 
simulating the creation of a flow chain during the run.
Hence, for every starting flow chain there is a transition in each subnet, which takes the initial token 
from \(\initsub_i\) and moves it according to the corresponding transit.
\begin{itemize}
\item[(s3)] \(\forall i\in\subnetIndices:\forall t\in\tr:\forall (\startfl,q)\in\tfl(t):
				\exists t\MC\in\trMC_i:(\initsub_i,t\MC)\in\flMC\wedge(t\MC,\subnet{\netId(q)}_i)\in\flMC\wedge\lambda(t\MC)=t
\wedge\netId(t\MC)=\netId(t)_{\netId(q)_i}\)
\end{itemize}
In constraint \textit{(s4)}, this is similarly done for each transit of each transition.
\begin{itemize}
\item[(s4)] \(\forall i\in\subnetIndices:\forall t\in\tr:\forall (p,q)\in\tfl(t):
				\exists t\MC\in\trMC_i:(\subnet{\netId(p)}_i,t\MC)\in\flMC\wedge(t\MC,\subnet{\netId(q)}_i)\in\fl\MC\wedge\lambda(t\MC)=t\wedge\netId(t\MC)=\netId(t)_{(\netId(p),\netId(q))_i}\)
\end{itemize}
Constraint \textit{(s5)} treats the situation that the currently tracked flow chain is independent of the 
firing of the previous original transition.
Thus, a skipping transition is demanded for every original transition \(t\) in each subnet
which is only allowed to fire when the considered chain of the subnet is not moved by the transits of \(t\), i.e.,
no corresponding places of the preset of \(t\) are occupied.
\begin{itemize}
\item[(s5)] \(\forall i\in\subnetIndices:\forall t\in\tr:
				\exists t\MC\in\trMC_i:\forall p\in\preset{t}: (\subnet{\netId(p)}_i,t\MC)\in\inhibitorFlMC\wedge\lambda(t\MC)=t\wedge\netId(t\MC)=\netId(t)_{\skipT_i}\)
\end{itemize}
The following four constraints are used to connect the components sequentially.
Constraint \textit{(a)} ensures one activation place \(\act{o}\) for the original net,
and one activation place \(\act{\netId(t)_i}\) for every original transition \(t\in\tr\) for each subnet.
\begin{itemize} 		
		\item[(a)]\(\exists p\MC\in\plMC_o: \netId(p\MC)=\act{o}\wedge
				\forall i\in\subnetIndices:\forall t\in\tr:\exists p\MC\in\plMC_i:\netId(p\MC)=\act{\netId(t)_i}\)
\end{itemize}
Constraint \textit{(mO)} let every original transition \(t\in\trMC_o\) take the activation token from \(\act{o}\) and moves
it to the activation place \(\act{\netId(t)_0}\) of all transition of the first subnet which are labelled with \(t\) 
to activate this subnet.
\begin{itemize}			
		\item[(mO)] \(\forall t\in\tr_o:(\act{o},t)\in\flMC\wedge(t,\act{\netId(t)_0})\in\flMC\)
\end{itemize}
With the constraints \textit{(mSi)} and \textit{(mSn)}, we move the active token through the subnets, or back to 
the original net, respectively.
Therefore, we let all equally labelled transitions of the subnet take their corresponding activation token from the place
\(\act{\netId(t)_i}\) for a label \(t\in\tr\) and move it to the next subnet, i.e., place \(\act{\netId(t)_{i+1}}\) or
\(\act{o}\), respectively.
\begin{itemize} 											
		\item[(mSi)]\(\forall i\in\{1,\ldots,n-1\}:\forall t\in\tr:\\\forall t\MC\in\trMC_i:
				\lambda(t\MC)=t\implies ((\act{\netId(t)_i},t\MC)\in\flMC\wedge(t\MC,\act{\netId(t)_{i+1}})\in\fl\MC)\)
											
		\item[(mSn)]\(\forall t\in\tr:\forall t\MC\in\trMC_n:
				\lambda(t\MC)=t\implies ((\act{\netId(t)_n},t\MC)\in\flMC\wedge(t\MC,\act{o})\in\fl\MC)\)
\end{itemize}
The initial marking of \(\pNetMC\) is defined by constraint \textit{(in)}.
We only activate the original part of the net and allow all subnets to track a chain.
\begin{itemize}
\item[(in)] \(\initMC=\{\act{o}\}\cup\{\initsub_i\with i\in\subnetIndices\}\cup\init\).
\end{itemize}
Newly introduced identifiers, e.g., \(\iota\) and \(\act{o}\), are unique and do not occur in \(\pNet\).
\end{definition}

The results of \refLemma{sizePN} directly follows from this construction. 
Having a copy of each original place for each flow subformula and
an activation place for each original transition
yields the quadratic number of places in the size of the net and the number of subformulas.
The cubic number of the transitions in the size of the net and the number of subformulas
follows from the possible quadratic number of transits of each transition
and that for each flow subformula and for each transit a new transition is added.
\begin{proof}[\refLemma{sizePN} (Size of the Constructed Net)]
The injectivity of \(\netId\) yields that each demanded place or transition is a unique element of \(\plMC\) or \(\trMC\) 
respectively. That we demand the smallest sets \(\plMC\) and \(\trMC\) fulfilling the constraints allows to only
consider the explicitly stated elements.

Therewith, constraint \textit{(o)} together with constraint \textit{(a)} yield \(|\plMC_o|=|\pl|+1\).
Constraint \textit{(s1)} demands \(|\pl|\) places for each subnet, constraint \textit{(s2)} demands one
initial place for every subnet, and constraint \textit{(a)} requires one active place for each original 
transition for every subnet.
Hence, \(|\bigcup_{i=1,\ldots,n}\plMC_i|=n\cdot|\pl|+n+n\cdot|\tr|\) and so \(\plMC=n\cdot(|\pl|+|\tr|+2)+|\pl|+1\).

For the transitions of the original part of the net, constraint \textit{(o)} directly yields \(|\trMC_o|=|\tr|\).
Constraint \textit{(s3)} demands a transition for every newly created flow chain of the net, i.e.,
\(|\{(t,p)\in\tr\times\pl\with (\startfl,p)\in\tfl(t)\}|\) many transitions, which 
are at most \(|\tr|\cdot|\pl|\) transitions for each subnet.
Constraint \textit{(s4)} does the same for every transit of the net, i.e.,
\(|\{(p,t,q)\in\pl\times\tr\times\pl\with (p,q)\in\tfl(t)\}|\) many transitions, which 
are at most \(|\pl|\cdot|\tr|\cdot|\pl|\) transitions for each subnet.
Constraint \textit{(s5)} requires one transition for skipping and directly moving the active token 
to the next subnet.
Hence, \(|\bigcup_{i=1,\ldots,n}\trMC_i|=n\cdot|\tr|\cdot|\pl|+n\cdot|\pl|\cdot|\tr|\cdot|\pl|+n\cdot|\tr|\)
and thus, \(\trMC\) contains \(n\cdot(|\pl|^2\cdot|\tr|+|\pl|\cdot|\tr|+|\tr|) + |\tr|\) transitions.\hfill\qed
\end{proof}

\subsection{Formal Construction of the Formula Transformation}
We formally introduce the transformation of a Flow-LTL formula \(\phiRun\) to an LTL formula \(\phiMC\).
We use 
the abbreviation \(\phi_1\LTLweakuntil\phi_2\) for \(\ltlAlways \phi_1 \vee (\phi_1\ltlUntil \phi_2)\)
and
the operator \([\phi_1'/\phi_1,\ldots, \phi_m'/\phi_m]\) on formulas
for the simultaneous substitution of \(\phi_j\) by \(\phi_j'\).
To substitute formulas from the inner- to the outermost,
we utilise the function \(d\), which calculates the depth of a formula.
The \emph{depth-function} \(d\) is inductively defined: \(d(a)=0\) for every atomic proposition~\(a\)
and \(d(\circ\, \phi_1)=1+d(\phi_1)\), \(d(\phi_1\widetilde{\circ}\,\phi_2)=1+\textit{max}\{d(\phi_1),d(\phi_1)\}\)
for every unary operator \(\circ\), binary operator \(\widetilde{\circ}\), and formulas \(\phi_1\) and \(\phi_2\).

\begin{definition}[Flow-LTL to LTL]
\label{def:flowLTL2ltl}
Let \(\petriNetFl\) be a Petri net with transits,
\(\phiRun\) a Flow-LTL formula with \(n\in\N\) subformulas \(\A\,\phiLTL_i\), for \(i=1,\ldots,n\),
and \(\pNetMC=(\plMC,\trMC,\flMC,\inhibitorFlMC,\initMC)\) the with \refDef{pntfl2pn} created P/T Petri net with inhibitor arcs.
The corresponding LTL formula \(\phiMC\) is created by the following steps.

\noindent{}\textbf{Flow part:}
For each flow formula \(\A\,\phiLTL_i\), for \(i=1,\ldots,n\), we 
create a new formula \(\phiLTL_i^{{\pl\tr\mathcal{X}}_{d_\mathtt{max}}}\) which adequately
copes with the different timelines of the corresponding flow chains.
Since in our approach each flow formula is checked on the corresponding subnet \(\pNetMC_i\),
the places and transitions of the other components are ignored.

The atomic propositions \(p\in\pl\) are substituted with the 
corresponding places of the subnet:
\begin{itemize}
\item[(pF)] \(\phiLTL_i^\pl\defEQ\phiLTL_i\left[\begin{array}{c}\subnet{\netId(p_1)}_i/p_1,
							\\\ldots,
							\\\subnet{\netId(p_m)}_i/p_m\end{array}
					\right]\) for \(\pl=\{p_1,\ldots,p_m\}\).
\end{itemize}
For the atomic propositions \(t\in\tr\), we skip all transitions not concerning the extension 
of the current flow chain via an until operator.
That means the firing of \emph{unrelated} transitions
\(O_i=\left(\trMC\setminus\trMC_i\right)\cup\{\netId(t)_{\skipT_i}\in\trMC_i\with t\in\tr\}\),
i.e., transitions of the other components or own skipping transitions, are skipped
until a \emph{related}, i.e., one of the set of transitions extending the flow chain
\(M_i(t)=\{t\MC\in\trMC_i\setminus\{\netId(t)_{\skipT_i}\}\with \lambda(t\MC)=t\}\),
is fired.
\begin{itemize}
\item[(tF)] \(\phiLTL_i^{\pl\tr}\defEQ\phiLTL_i^\pl\left[\begin{array}{c}(\bigvee_{t_o\in O_i} t_o)\U (\bigvee_{t\in M_i(t_1)}t)/t_1,\\
								\ldots,\\
								(\bigvee_{t_o\in O_i} t_o)\U (\bigvee_{t\in M_i(t_{m'})}t)/t_{m'}
						\end{array}\right]\)
for \(\tr=\{t_1,\ldots,t_{m'}\}\).
\end{itemize}
The next operator is treated similarly.
Let \(\phiLTL_i^{\pl\tr}\) contain \(m_1\) subformulas \(\ltlNext \phiLTL_j'\) for \(j=1,\ldots,m_1\).
In this case, the \emph{related} transitions are all transitions of the subnet \(\trMC_i\) but the skipping transitions:
\(M_i=\trMC_i\setminus\{\netId(t)_{\skipT_i}\in\trMC_i\with t\in\tr\}\).
We define the disjunction of all related transitions by \(\widetilde{M}_i=\bigvee_{t\in M_i}t\) and apply the same ideas
as in the previous case.
To adequately cope with situations where no related transition \(t\in M_i\) would ever fire
again (\(\Box(\neg\widetilde{M}_i)\)), i.e., the stuttering at the end of a chain,
we require the immediate satisfaction.
To replace the formulas from the inner- to the outermost, we organize 
the formulas in groups \(\{\ltlNext\phiLTL_1^l,\ldots,\ltlNext\phiLTL_{k_l}^l\}\) according to their depth:
\begin{itemize}
\item[(nF)] Let \(\{\ltlNext\phiLTL_1^l,\ldots,\ltlNext\phiLTL_{k_l}^l\}=\{\ltlNext\phiLTL_j'\with j\in\{1,\ldots,m_1\}\wedge \depth(\ltlNext\phiLTL_j')=l\}\)
and \(d\in\{2,\ldots,d_\mathtt{max}\}\) with \(d_\mathtt{max}\defEQ\mathtt{max}\{\depth(\ltlNext \phiLTL_j')\with j\in\{1,\ldots,m_1\}\}\).
Then, the substitutions of the next operators is given by
\begin{align*}
\phiLTL_i^{{\pl\tr\mathcal{X}}_1}&\defEQ\phiLTL_i^{\pl\tr}\left[\begin{array}{c}
	((\bigvee_{t\in O_i} t)\U (\widetilde{M}_i\wedge\ltlNext\phiLTL_1^1))\vee (\Box(\neg\widetilde{M}_i)\wedge\phiLTL_1^1)/\ltlNext\phiLTL_1^1,\\
			\ldots,\\
	((\bigvee_{t\in O_i} t)\U (\widetilde{M}_i\wedge\ltlNext\phiLTL_{k_1}^1))\vee (\Box(\neg\widetilde{M}_i)\wedge\phiLTL_{k_1}^1)/\ltlNext\phiLTL_{k_1}^1
\end{array}\right] 
\end{align*}
and 
\begin{align*}
\phiLTL_i^{{\pl\tr\mathcal{X}}_d}&\defEQ\phiLTL_i^{{\pl\tr\mathcal{X}}_{d-1}}\left[\begin{array}{c}
	((\bigvee_{t\in O_i} t)\U (\widetilde{M}_i\wedge\ltlNext\phiLTL_1^d))\vee (\Box(\neg\widetilde{M}_i)\wedge\phiLTL_1^d)/\ltlNext\phiLTL_1^d,\\
		\ldots,\\
	((\bigvee_{t\in O_i} t)\U (\widetilde{M}_i\wedge\ltlNext\phiLTL_{k_d}^d))\vee (\Box(\neg\widetilde{M}_i)\wedge\phiLTL_{k_d}^d)/\ltlNext\phiLTL_{k_d}^d
\end{array}\right]
\end{align*}
\end{itemize}

\noindent{}\textbf{Run part:}
They same ideas are applied for the run part of the formula.
Here the atomic propositions \(p\in\pl\) do not need to be substituted since \(\plMC_o=\pl\) holds.
For the atomic propositions \(t\in\tr\) however, the skipping procedure has to be applied as well.
The \emph{unrelated} transitions in this case are all transitions of the subnet \(O=\trMC\setminus\tr\).
To only substitute occurrences in the run part of the formula,
we introduce the substitution operator \(/_{\bar{\A}}\) 
which does not change anything within the scope of the flow operator \(\A\).
\begin{itemize}
\item[(tR)] \(\phiRun^\tr\defEQ\phiRun\left[\begin{array}{c}(\bigvee_{t\in O} t)\U t_1/_{\bar{\A}}t_1,\\
							\ldots,\\
								(\bigvee_{t\in O} t)\U t_m/_{\bar{\A}}t_m
						\end{array}\right]\).
\end{itemize}
For the next operator in the run part of the formula,
the \emph{related} transition are all transitions of \(\pNetMC_O\), i.e., \(\tr\).
We define the disjunction of these transitions by \(T=\bigvee_{t\in \tr}t\).
To adequately cope with situations where no transition \(t\in\tr\) would ever fire
again (\(\Box(\neg T)\)), i.e., the stuttering in the traces for finite firing sequences,
we require the immediate satisfaction.
Let the run part of \(\phiRun^{\tr}\) contain \(m_2\) subformulas \(\ltlNext \phiRun_j\) for \(j=1,\ldots,m_2\).
We again organize the formulas according to their depth, to replace them from the inner- to the outermost.
\begin{itemize}
\item[(nR)] Let \(\{\ltlNext{\phiRun}_1^{l},\ldots,\ltlNext\phiRun_{k_l}^{l}\}=\{\ltlNext\phiRun_j\with j\in\{1,\ldots,m_2\}\wedge \depth(\ltlNext\phiRun_j)=l\}\)
and \(d'\in\{2,\ldots,d_\mathtt{max}'\}\) with \(d_\mathtt{max}'\defEQ\mathtt{max}\{\depth(\ltlNext \phiRun_j)\with j\in\{1,\ldots,m_2\}\}\).
Then, the substitutions of the next operators is given by
\begin{align*}
\phiRun^{{\tr\mathcal{X}}_1}&\defEQ\phiRun^{\tr}\left[\begin{array}{c}
	((\bigvee_{t\in O} t)\U (T\wedge\ltlNext\phiRun_1^1))\vee (\Box(\neg T)\wedge\phiRun_1^1)/_{\bar{\A}}\ltlNext\phiRun_1^1,\\
			\ldots,\\
	((\bigvee_{t\in O} t)\U (T\wedge\ltlNext\phiRun{k_1}^1))\vee (\Box(\neg T)\wedge\phiRun_{k_1}^1)/_{\bar{\A}}\ltlNext\phiRun_{k_1}^1
\end{array}\right] 
\end{align*}
and 
\begin{align*}
\phiRun^{{\tr\mathcal{X}}_{d'}}&\defEQ\phiRun^{{\tr\mathcal{X}}_{d'-1}}\left[\begin{array}{c}
	((\bigvee_{t\in O} t)\U (T\wedge\ltlNext\phiRun_1^{d'}))\vee (\Box(\neg T)\wedge\phiRun_1^{d'})/_{\bar{\A}}\ltlNext\phiRun_1^{d'},\\
		\ldots,\\
	((\bigvee_{t\in O} t)\U (T\wedge\ltlNext\phiRun_{k_{d'}}^{d'}))\vee (\Box(\neg T)\wedge\phiRun_{k_{d'}}^{d'})/_{\bar{\A}}\ltlNext\phiRun_{k_{d'}}^{d'}
\end{array}\right] 
\end{align*}
\end{itemize}
Since flow chains can be created at any time during the run, we skip to their creation point, i.e., 
to the time \(\initsub_i\) is not occupied anymore.
To also allow for not tracking any chain the weak until operator is used.
\begin{itemize}
\item[(\(\A\)R)] \(\phiRun^\A\defEQ\phiRun^{\tr\mathcal{X}_{d_\mathtt{max}'}}\left[\begin{array}{c}\initsub_1\LTLweakuntil(\neg \initsub_1\wedge\phiLTL_1^{{\pl\tr\mathcal{X}}_{d_\mathtt{max}}})/\A\,\phiLTL_1,\\\ldots,\\\initsub_n\LTLweakuntil(\neg \initsub_n\wedge\phiLTL_n^{{\pl\tr\mathcal{X}}_{d_\mathtt{max}}})/\A\,\phiLTL_n\end{array}\right]\)
\end{itemize}
Since we do not want any run or firing sequence to stop in any subnet \(\pNetMC_i\),
the final formula restricts the considered traces to those which infinitely often visit
the activation place \(\act{o}\) of the original part of the net.
\begin{itemize}
\item[(nSub)] \(\phiMC\defEQ \Box\Diamond\act{o}\rightarrow \phiRun^\A\)
\end{itemize}
\end{definition}
%
The idea of the proof of \refLemma{sizeFormula} 
directly results from \refLemma{sizePN}, since the substitution of elements of \(\phiRun\)
depends on the size of \(\pNetMC\).
This net is quartic in the size of \(\pNet\) and the number of flow subformulas in \(\phiRun\).
\begin{proof}[\refLemma{sizeFormula} (Size of the Constructed Formula)]
For each occurrence of an atomic proposition \(t\in\tr\) in the \emph{run part} of \(\phiRun\),
we have additionally
\(1 + 2\cdot|\trMC\setminus\tr|-1\)
subformulas in \(\phiMC\) by constraint \textit{(tR)}.
Since \(\trMC\) is quartic in the size of \(\pNet\) and \(n\), \(\phiMC\) is already 
quintic in the size of \(\pNet\) and \(\phiRun\).
For each occurrence of an atomic proposition \(t\in\tr\) in a \emph{flow subformula} of \(\phiRun\)
we have additionally 
\(1+2\cdot(|\trMC\setminus\tr_i|+|\tr|)-1 + 2\cdot(|M_i(t)|)-1\)
subformulas in \(\phiMC\) (constraint \textit{(tF)}). Since \(M_i(t)\) are at most \(|\pl|^2+|\pl|\) transits of a transition
\(\trMC\) is still the biggest part and we are quintic in the size of \(\pNet\) and \(\phiRun\).
For each occurrence of a next operator in the \emph{run part} of \(\phiRun\)
we have additionally
\(2 + 1 + 2\cdot|\trMC\setminus\tr|-1 + 3 + |\tr| + 1 + 3 + 1 + |\tr|\)
subformulas in \(\phiMC\) (constraint \textit{(nR)}). Hence, \(\phiMC\) is still quintic in the size of \(\pNet\) and \(\phiRun\).
For each occurrence of a next operator in a \emph{flow subformula} of \(\phiRun\) we have
additionally
\(2 + 1 + 2\cdot(|\trMC\setminus\tr_i|+|\tr|)-1 + 3 + |\tr_i|-|\tr| + 1 + 3 + 1 + |\tr_i|-|\tr|\)
subformulas in \(\phiMC\) (constraint \textit{(nF)}). Hence, \(\phiMC\) is still quintic in the size of \(\pNet\) and \(\phiRun\).
\hfill\qed
\end{proof}

\subsection{Correctness Proof of the Transformations}
In this section, we prove \refLemma{correctnessPNLTL}.
Hence, we fix a Petri net with transits \(\petriNet\),
the corresponding Petri net with inhibitor arcs \(\pNetMC=(\plMC,\trMC,\flMC,\inhibitorFlMC,\initMC)\)
which is created by \refDef{pntfl2pn},
a Flow-LTL formula \(\phiRun\) with \(n\in\N\) subformulas \(\A\,\phiLTL_i\), for \(i=1,\ldots,n\), and the corresponding LTL formula \(\phiMC\)
which is created by \refDef{flowLTL2ltl}.

The general idea is to show the contraposition of the statement:
\[\pNet \not\models \phiRun\text{ iff } \pNetMC\not\models_\mathtt{LTL}\phiMC.\]
Therefore, we transform the counterexamples mutually by \refDef{cexPNwt2PN} and \refDef{cexPN2PNwT}.
For \refDef{cexPNwt2PN}, we sequentially pump up the firing sequence serving as counterexample for \(\pNet \models\phiRun\)
by one transition for each subnet in every step.
If the subnet has to consider a flow chain, i.e., \(\runPN,\sigma(\firingSequence)\not\models\A\,\psi_i\)
and the original transition transits the flow chain, the corresponding transition of the subnet is used.
Otherwise, the corresponding skipping transition is added.
The markings are extended to the additional tokens of \(\pNetMC\).

\begin{definition}[CEX: From PNwT to PN]
\label{def:cexPNwt2PN}
Let \(\runPN=(\pNet^R, \rho)\) be a run of \(\pNet\),
\(\firingSequence=M_0\firable{t_0}M_1\firable{t_1}\cdots\) a covering firing sequence, and
for every \(\runPN,\sigma(\firingSequence)\not\models\A\psi_i\)
be \(\flowChain^i=p_0^i,t_0^i,p_1^i, t_1^i,\ldots\)
the corresponding flow chain with \(\sigma(\flowChain^i)\not\models_\mathtt{LTL} \psi_i\).
We create a tuple \({\runPN}\MC=({\pNet\MC}^R, \rho\MC)\) and a
sequence \(\firingSequence\MC=M_0\MC\firable{t_0\MC}M_1\MC\firable{t_1\MC}\cdots\) iteratively.
We lift the function \(\rho\MC\) to sets \(X\) by \(\rho\MC(X)=\{\rho\MC(x)\with x\in X\}\).

\begin{itemize}
\item[(i)] The marking \(M_0\MC\) corresponds to the initial marking of \(\pNetMC\), i.e., \(\rho\MC(M_0\MC)=\initMC\).
\item[(ii)] Every \((n+1)\)th transition is the next transition of \(\firingSequence\).
Thus, \(t_{j\cdot(n+1)}\MC=t_j\) with \(\rho\MC(t_{j\cdot(n+1)})=\rho(t_j)\) for every \(j\in\N\)
(as long as \(t_j\) is existent in \(\firingSequence\)).
\item[(iii)] Every other transition \(t=t_{j\cdot(n+1)+i}\MC\), for the existing \(t_j\) 
and \(i\in\{1,\ldots,n\}\), is a fresh transition .
The mapping of \(t\) is dependent on the previous original transition \(t_o=\rho(t_{j\cdot(n+1)}\MC)\in\trMC_o=\tr\).
In the case of \(\runPN,\sigma(\firingSequence)\models\A\,\psi_i\), no chain has to be considered, thus
\(t\) is mapped to the corresponding skipping transition (i.e., \(\rho\MC(t)=\netId({t_o})_{\skipT_i}\)).
In the case of \(\runPN,\sigma(\firingSequence)\not\models\A\,\psi_i\) the mapping is done iteratively
according to \(\flowChain^i\).
The first occurrence of \((\startfl,\rho(p_0^i))\in\tfl(t_o)\) yields \(\rho\MC(t)=\netId(t_o)_{\netId(\rho(p_0^i))_i}\).
We remember this position by \(\Theta_i(\firingSequence,0)=j(n+1)+i+1\).
Then, whenever the next transition \(t_o\) with \((\rho(p_{k}^i),\rho(p_{k+1}^i))\in\tfl(t_o)\) occurs,
\(\rho\MC(t)=\netId(t_o)_{(\netId(p_{k}^i),\netId(p_{k+1}^i))_i}\) is used.
This position is remembered with \(\Theta_i(\firingSequence,k)=j(n+1)+i+1\).
In all other cases, \(\rho\MC(t)=\netId({t_o})_{\skipT_i}\) holds again.
\item[(iv)] Each marking \(M_k\MC\) of \(\firingSequence\MC\) for \(k\in\N\setminus\{0\}\) is
corresponding to a marking \({M'_k}\MC=\rho(M_{k-1}\MC)\setminus\pre{{\pNetMC}}{\rho(t_{k-1}\MC)}\cup\post{{\pNetMC}}{\rho(t_{k-1}\MC)}\)
(as long as \(t_{k-1}\MC\) has been created), i.e., \(\rho\MC(M_k\MC)={M'_k}\MC\).
\end{itemize}
The net \({\pNetMC}^R\) is created iteratively out of the places of the markings \(M_j\MC\) and the transitions \(t_j\MC\)
of \(\firingSequence\MC\) which are connected according to the pre- and postsets of \(\rho\MC(t_j\MC)\).
We denote this construction with \(\Theta(\firingSequence)=\firingSequence\MC\).
\end{definition}
Note that the constructed tuple \({\runPN}\MC=({\pNetMC}^R,\rho\MC)\) 
is indeed a run of \(\pNetMC\) 
and the constructed sequence \(\firingSequence\MC\) is a firing sequence covering \({\runPN}\MC\).

The firing sequence serving as counterexample for \(\pNet\models\phiRun\) is gained from 
a covering \(\firingSequence\MC\) of a run of \(\pNetMC\) by projecting onto the elements of \(\pNet\).
The flow chains serving as counterexample for \(\runPN,\sigma(\firingSequence)\models\A\,\psi_i\) are created 
by iteratively concatenating the corresponding places and the transitions different to the skipping transition
of each subnet.
\begin{definition}[CEX: From PN to PNwT]
\label{def:cexPN2PNwT}
Let \({\runPN}\MC=({\pNet\MC}^R, \rho\MC)\) be a run of \(\pNetMC\)
and \(\firingSequence\MC=M_0\MC\firable{t_0\MC}M_1\MC\firable{t_1\MC}\cdots\) a covering firing sequence.
\begin{enumerate}
\item[(i)] We create a sequence \(\firingSequence=M_0\firable{t_0}M_1\firable{t_1}\cdots\) by projecting onto the elements of \(\pNet\),
i.e.,
\(M_j=\{p\in M_{j\cdot(n+1)}\MC\with\rho\MC(p)\in\pl\}\)
and
\(t_j=t_{j\cdot(n+1)}\MC\)
for all \(j\in\N\) (as long as \(M_{j\cdot(n+1)}\MC\) and \(t_{j\cdot(n+1)}\MC\) are existent in \(\firingSequence\MC\)).
\item[(ii)] The net \({\pNet}^R\) is analogously created as in \refDef{cexPNwt2PN} and \(\rho\) is defined by
\(\rho(p)=\rho\MC(p)\) for all \(p\in M_j\) and \(\rho(t_j)=\rho\MC(t_{j\cdot(n+1)}\MC)\) for all \(j\in\N\) (as long as the elements exist).
\item[(iii)] The flow chain \(\flowChain^i=p_0^i,t_0^i,p_1^i, t_1^i,\ldots\) for a subnet \(i\in\{1,\ldots,n\}\)
is only created when there is any transition \(t_j\MC\) in \(\firingSequence\MC\) with \(\rho\MC(t_j\MC)=t\MC\) 
for any \(t\MC\in\trMC_i\) with \(\netId(t\MC)\neq\netId(\lambda(t\MC))_{\skipT_i}\).
In the case that there are such \(t_j\MC\),
we iteratively collect the corresponding transitions and their to the transit belonging corresponding places of the pre- and postset.
This means, 
if \(\netId(t\MC)=\netId(\lambda(t\MC))_{\netId(q)_i}\) for any \(q\in\pl\), add \(q\), and if
\(\netId(t\MC)=\netId(\lambda(t\MC))_{(\netId(p),\netId(q))_i}\) for any \(p,q\in\pl\),
then add \(\lambda(t\MC),q\) to the sequence.
For each adding step \(k\), we remember the position in \(\firingSequence\MC\)
by \(\Theta_i\MC(\firingSequence\MC,k)=j+1\).
 We denote the construction by \(\Theta_{\flowChain^i}\MC(\firingSequence\MC)=\flowChain^i\).
\end{enumerate}
For this construction we denote \(\Theta_R\MC(\firingSequence\MC)=(\pNet^R,\rho)\)
and \(\Theta\MC(\firingSequence\MC)=\firingSequence\).
\end{definition}
Note that the constructed tuple \(\runPN=({\pNetMC},\rho)\)
indeed is a run of \(\pNet\),
the constructed sequences \(\flowChain^j\)
are flow chains of \(\pNet\), and
the constructed sequence \(\firingSequence\)
is a covering firing sequence of \(\runPN\).

We prove \refLemma{correctnessPNLTL} via a nested structural induction over the Flow-LTL formula \(\phiRun\).
Therefore, we use the LTL part and the flow part of \(\phiRun\) as induction base for the outer induction,
and prove each part separately by structural induction.

\begin{lemma}[LTL Part]
\label{lem:LTLpart}
Given an LTL formula \(\phiLTL\MC\) created by \refDef{flowLTL2ltl} without condition \textit{(nSub)}
from the LTL part \(\phiLTL\) (not within the scope of a flow operator \(\A\)) of a  flow-LTL formula \(\phiRun\).
\begin{itemize}
\item[(s)]
Given a run \(\runPN\) and a covering firing sequence \(\firingSequence\), with
 \(\sigma(\Theta(\firingSequence))^i\models_\mathit{LTL}\ltlAlways\ltlEventually\act{o}\) for all \(i\in\N\), then
\(\sigma(\firingSequence)\not\models_\mathit{LTL}\phiLTL\implies \sigma(\Theta(\firingSequence))\not\models_\mathtt{LTL} \phiLTL\MC\)
holds.
\item[(c)]
Given a firing sequence \(\firingSequence\MC\) of a run of the net \(\pNetMC\) with
 \(\sigma(\firingSequence\MC)^i\models_\mathit{LTL}\ltlAlways\ltlEventually\act{o}\) for all \(i\in\N\), then
\(\sigma(\firingSequence\MC)\not\models_\mathit{LTL}\phiLTL\MC\implies \sigma(\Theta\MC(\firingSequence\MC))\not\models_\mathtt{LTL} \phiLTL\)
holds.
\end{itemize}
\end{lemma}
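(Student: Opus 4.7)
The plan is to prove, by structural induction on the LTL subformula $\psi$ of $\phiLTL$, the stronger biconditional
\[
\sigma(\firingSequence\MC)^p \models_\mathtt{LTL} \psi\MC \iff \sigma(\firingSequence)^{f(p)} \models_\mathtt{LTL} \psi
\]
for every position $p$ of $\sigma(\firingSequence\MC)$, where the alignment $f$ is defined by $f(j(n+1)) = j$ and $f(j(n+1)+k) = j+1$ for $1 \leq k \leq n$, and $\psi\MC$ denotes the result of applying clauses (tR) and (nR) of \refDef{flowLTL2ltl} to $\psi$. Specializing to $p = 0$ and taking contrapositives yields both (s) and (c). The alignment $f$ is well-defined because, by the activation machinery of \refDef{pntfl2pn} together with the assumption $\ltlAlways\ltlEventually\act{o}$ on every suffix, every $(n+1)$-th position of $\firingSequence\MC$ fires an original transition $t_j$ while the intervening $n$ positions fire transitions from $O = \trMC \setminus \tr$; this shape is preserved by both $\Theta$ and $\Theta\MC$.

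The base cases follow the shape of the substitutions. For a place atom $p' \in \pl$, subnet transitions modify only subnet and activation places (constraints (s1) through (s5) and (mSi)/(mSn) of \refDef{pntfl2pn}), so the original projection of the marking at position $p$ equals $M_{f(p)}$. For a transition atom $t \in \tr$ substituted to $(\bigvee_{t' \in O} t') \U t$, at an original position $j(n+1)$ the pending transition $t_j$ lies outside $O$, collapsing the until to $t = t_{f(p)}$; at an intermediate position $j(n+1)+k$ all firings up to $(j+1)(n+1)$ are in $O$ and the until fires precisely at the next original transition $t_{j+1} = t_{f(p)}$. Boolean cases are immediate. For $\ltlNext \psi_1$, the first disjunct $(\bigvee_{t \in O} t) \U (T \wedge \ltlNext \psi_1\MC)$ can only succeed at the next original transition and then asks for $\psi_1\MC$ one step later, whose $f$-image is $f(p)+1$, matching the semantics of $\ltlNext$ via the induction hypothesis; the second disjunct $\ltlAlways(\neg T) \wedge \psi_1\MC$ handles the stuttering tail of a finite firing sequence, where $\sigma(\firingSequence)^{f(p)}$ has reached its fixed point so that $\ltlNext \psi_1$ is equivalent to $\psi_1$ on it.

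The main obstacle is the until case $\psi = \psi_1 \U \psi_2$, which is unchanged by the substitution even though the translated formula must hold at every intermediate position of $\sigma(\firingSequence\MC)$. The key enabler, itself a corollary of the induction hypothesis, is that $\psi_1\MC$ takes a uniform truth value across each block $\{(q-1)(n+1)+1, \ldots, q(n+1)\}$ of $n+1$ positions sharing the $f$-image $q$. For the direction ``$\Rightarrow$'', a witness $m$ in $\sigma(\firingSequence\MC)$ yields $m' = f(p+m) - f(p)$, and the $f$-image of $\{p, \ldots, p+m-1\}$ covers $\{f(p), \ldots, f(p)+m'-1\}$, where $\psi_1$ holds by the induction hypothesis; combined with $\psi_2$ at $f(p)+m'$, this establishes the until on $\sigma(\firingSequence)$. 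For the converse, given a witness $m'$ one picks $m$ such that $p + m$ is the first position of block $f(p)+m'$, and block-uniformity propagates $\psi_1\MC$ from the block-wise satisfaction of $\psi_1$ in $\sigma(\firingSequence)$ to every position $p, \ldots, p+m-1$.
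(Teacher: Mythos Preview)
Your proposal is correct and follows essentially the same approach as the paper: both proofs proceed by structural induction on $\phiLTL$, carrying an induction hypothesis that relates each position in the expanded trace $\sigma(\firingSequence\MC)$ to a position in the original trace $\sigma(\firingSequence)$ via the block structure of length $n+1$. Your alignment function $f$ and biconditional hypothesis are exactly the combined content of the paper's two separate invariants (its soundness clause ``$\sigma(\firingSequence)^i\not\models\phiLTL \Rightarrow \forall j\in\{0,\dots,n_0\}:\sigma(\Theta(\firingSequence))^{i(n+1)-j}\not\models\phiLTL\MC$'' and its completeness clause), which the paper itself recombines by contraposition in the negation case; your packaging simply makes the boolean cases trivial and the until case cleaner, while the paper does more explicit quantifier shuffling to reach the same conclusions.
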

\begin{proof}[via structural induction over \(\phiLTL\)]
For proving \textit{(s)} and \textit{(c)}, we show 
\emph{soundness}, i.e.,
\[\forall i\in\N: \sigma(\firingSequence)^i\not\models_\mathtt{LTL}\phiLTL \implies \forall j\in\{0,\ldots,n_0\}:\sigma(\Theta(\firingSequence))^{i(n+1)-j}\not\models_\mathtt{LTL} \phiLTL\MC,\]
and \emph{completeness}, i.e.,
\[\forall i\in\N: \exists j\in\{0,\ldots,n_0\}:\sigma(\firingSequence\MC)^{i(n+1)-j}\not\models_\mathtt{LTL} \phiLTL\MC \implies \sigma(\Theta\MC(\firingSequence\MC))^i\not\models_\mathtt{LTL}\phiLTL\]
with \(n_0 = \left\{\begin{array}{ll}0 &\text{if }i=0\\ n&\text{otherwise}\end{array}\right.\)
hold by an induction over the structure of \(\phiLTL\)
.\\
(IB)
\textbf{Case} \(\phiLTL=p\in\pl\). Definition~\ref{def:flowLTL2ltl} yields \(\phiLTL\MC=p\).\\
\noindent\textit{Regarding {soundness}}:
Let \(\firingSequence=M_0\firable{t_0}M_1\firable{t_1}\cdots\) and \(\Theta(\firingSequence)=M_0\MC\firable{t_0\MC}M_1\MC\firable{t_1\MC}\cdots\)
with the corresponding mapping functions \(\rho\) and \(\rho\MC\), respectively.
The premise of the statement yields \(p\nin\sigma(\firingSequence)(i)\).
For \(i=0\), condition \textit(in) of \refDef{pntfl2pn} together with condition \textit{(i)} of \refDef{cexPNwt2PN}
ensures that \(\rho(M_0)\subset\rho\MC(M_0\MC)\) and that there cannot be any other \(p'\in M_0\MC\) with 
\(\rho\MC(p')\in\pl\). Hence, \(p\nin\sigma(\Theta(\firingSequence))(0)\).
For \(i>0\), condition \textit{(iv)} of \refDef{cexPNwt2PN} yields that all other markings \(M_k\MC\) are mapped to 
markings which are created by the firing of transitions of \(\pNetMC\).
Definition~\ref{def:pntfl2pn} ensures that tokens residing on places \(p\in \pl_o\MC\cap\pl\) of the original part of the net 
are not moved by transitions of the subnet. Hence, by \(p\nin\sigma(\firingSequence)(i)\) we know that
\(p\) cannot get occupied while firing any transition of the subnet.
With condition \textit{(ii)} and \textit{(iii)} of \refDef{cexPNwt2PN} we know \(p\nin\sigma(\Theta(\firingSequence))(i(n+1)-j)\)
for all \(j\in\{0,\ldots,n\}\).\\[2mm]
\noindent\textit{Regarding {completeness}}:
Given \(\firingSequence\MC=M_0\MC\firable{t_0\MC}M_1\MC\firable{t_1\MC}\cdots\) and its transformation
\(\Theta\MC(\firingSequence\MC)=M_0\firable{t_0}M_1\firable{t_1}\cdots\)
with the corresponding mapping functions \(\rho\) and \(\rho\MC\), respectively.
For \(i=0\), the premise yields \(p\nin\sigma(\firingSequence\MC)(0)\).
From condition \textit{(i)} of \refDef{cexPN2PNwT}, we know that 
\(M_0=\{p\MC\in M_0\MC \with \rho\MC(p\MC)\in\pl\}\) and since \(p\in\pl\) and
condition \textit{(ii)} of the definition ensures that both mapping functions coincide,
\(p\nin\sigma(\Theta\MC(\firingSequence\MC)\).
For all other \(i\in\N\), the premise yields that there is a \(j\in\{0,\ldots,n_0\}\)
such that \(p\nin\sigma(\firingSequence\MC)(i(n+1)-j)\).
Definition~\ref{def:pntfl2pn} ensures that every firing sequence of \(\pNetMC\) repeatedly
has an original transition \(t_o\in\trMC_o\) and then sequentially \(n\) transitions, one for each subnet.
Furthermore, no transition of any subnet moves a token of the original net (apart from \(\act{o}\)).
Thus, for all \(j'\in\{0,\ldots,n_0\}\) the places \(p\in\pl\) of the markings of \(\sigma(\firingSequence\MC)(i(n+1)-j')\)
stay the same. Since \refDef{cexPN2PNwT} creates the markings for \(\Theta\MC(\firingSequence\MC)\)
exactly by choosing those corresponding places and maps them accordingly, \(p\nin\sigma(\Theta\MC(\firingSequence\MC)(i(n+1))\).
\\
\textbf{Case} \(\phiLTL=t\in\tr\). Definition~\ref{def:flowLTL2ltl} yields \(\phiLTL\MC=\bigvee_{t'\in\tr\MC\setminus\tr}t'\ltlUntil t\).\\
\noindent\textit{Regarding {soundness}}:
For all \(i\in\N\) the premise yields \(t\nin\sigma(\firingSequence)(i)\).
Since condition \textit{(ii)} of \refDef{cexPNwt2PN} copies every \((n+1)\)th transition
we firstly know that \(t\nin\sigma(\Theta(\firingSequence))(i(n+1))\) and secondly 
\(\sigma(\Theta(\firingSequence))(i(n+1))\cap\trMC=\emptyset\).
Thus, \(\sigma(\Theta(\firingSequence))^{i(n+1)}\not\models_\mathtt{LTL}\phiLTL\MC\).
For \(i=0\) this already yields the conclusion.
In the case that \(i>0\), condition \textit{(iii)} of \refDef{cexPNwt2PN} ensures
that for all \(j\in\{0,\ldots,n\}\) a transition is added which maps to a transition of the subnet \(\tr\MC\setminus\tr\).
Hence, \(\sigma(\Theta(\firingSequence))(i(n+1)-j)\cap\tr=\empty\) for all \(j\in\{0,\ldots,n\}\),
and so \(\sigma(\Theta(\firingSequence))^{i(n+1)-j}\not\models_\mathtt{LTL}\phiLTL\MC\).
\\[2mm]
\noindent\textit{Regarding {completeness}}:
The premise yields that there is a \(j\in\{0,\ldots,n_0\}\) such that
\(\sigma(\firingSequence\MC)^{i(n+1)-j}\not\models_\mathtt{LTL}\bigvee_{t'\in\tr\MC\setminus\tr}t'\ltlUntil t\).
Definition~\ref{def:pntfl2pn} yields that every trace of \(\pNetMC\) must contain at position
\(i(n+1)\) a transition \(t_o\in\trMC_o=\tr\) or no transition at all.
Thus, for \(i=0\) we know from the premise \(t\nin\sigma(\firingSequence\MC)(0)\).
Definition~\ref{def:cexPN2PNwT} keeps for all \(i(n+1)\) positions the transitions and the mapping
for \(\Theta\MC(\firingSequence\MC)\). Hence, \(t\nin\sigma(\Theta\MC(\firingSequence\MC))(0)\).
For \(i>0\), from \refDef{pntfl2pn} follows that \(\sigma(\firingSequence\MC)(i(n+1)-j)\) belong
to some subnet and for all next positions until position \(i(n+1)-1\) only transitions \(t'\)
belonging to subnets can possibly be fired, i.e., \(t'\in\tr\MC\setminus\tr\).
Thus, with the premise \(t\nin\sigma(\firingSequence\MC)(i(n+1)\).
With the same argument as for the \(i=0\) case, \(t\nin\sigma(\Theta\MC(\firingSequence\MC))(i)\).
\\
(IS) Let \(\phiLTL\MC_1\) and \(\phiLTL\MC_2\) be created from flow formulas \(\phiLTL_1\) and \(\phiLTL_2\)
by \refDef{flowLTL2ltl} without \textit{(nSub)}.\\
\textbf{Case} \(\phiLTL=\neg \phiLTL_1\). Since \refDef{flowLTL2ltl} does not concern the negation
\(\phiLTL\MC=\neg\phiLTL\MC_1\).
\\
\noindent\textit{Regarding {soundness}}:
The premise yields \(\sigma(\firingSequence)^i\models_\mathtt{LTL} \phiLTL_1\) and because \(\Theta\MC(\Theta(\firingSequence))=\firingSequence\),
this also can be stated as \(\sigma(\Theta\MC(\Theta(\firingSequence)))^i\models_\mathtt{LTL} \phiLTL_1\).
Then the contraposition of the \emph{completeness} part of the induction hypothesis yields
\(\neg( \exists j\in\{0,\ldots,n_0\}:\sigma(\Theta(\firingSequence))^{i(n+1)-j}\not\models_\mathtt{LTL} \phiLTL_1\MC)\). Thus,
\(\sigma(\Theta(\firingSequence))^{i(n+1)-j}\models_\mathtt{LTL} \phiLTL_1\MC\) for all of those \(j\),
and so it is not satisfied for the negation, i.e., \(\phiLTL\MC\).
\\[2mm]
\noindent\textit{Regarding {completeness}}:
The premise states the existence of a \(j\in\{0,\ldots,n_0\}\) with \(\sigma(\firingSequence\MC)^{i(n+1)-j}\models_\mathtt{LTL} \phiLTL_1\MC)\).
Since \(\Theta(\Theta\MC(\firingSequence\MC))=\firingSequence\MC\), the contraposition of the \emph{soundness} part of the 
induction hypothesis yields \(\sigma(\Theta\MC(\firingSequence\MC))^i\models_\mathit{LTL}\phiLTL_1\).
Hence, \(\sigma(\Theta\MC(\firingSequence\MC))^i\not\models_\mathit{LTL}\phiLTL\).
\\
\textbf{Case} \(\phiLTL=\phiLTL_1\wedge\phiLTL_2\). Since \refDef{flowLTL2ltl} does not concern the conjunction
operator \(\phiLTL\MC=\phiLTL_1\MC\wedge\phiLTL_2\MC\).\\
\noindent\textit{Regarding {soundness}}:
The premise and the induction hypothesis for \emph{soundness} yield that 
either 
\(\forall j\in\{0,\ldots,n_0\}:\sigma(\Theta(\firingSequence))^{i(n+1)-j}\not\models_\mathtt{LTL} \phiLTL_1\MC\) or
\(\forall j\in\{0,\ldots,n_0\}:\sigma(\Theta(\firingSequence))^{i(n+1)-j}\not\models_\mathtt{LTL} \phiLTL_2\MC\)
holds.
The universal quantifier can be moved to the outer level.
Hence, \(\forall j\in\{0,\ldots,n_0\}:\sigma(\Theta(\firingSequence))^{i(n+1)-j}\not\models_\mathtt{LTL}\phiLTL_1\MC\wedge\phiLTL_2\MC\).
\\[2mm]
\noindent\textit{Regarding {completeness}}:
The premise states that there is 
\(j\in\{0,\ldots,n_0\}\) such that
\(\sigma(\firingSequence\MC)^{i(n+1)-j}\not\models_\mathtt{LTL} \phiLTL_1\MC\) or
\(\sigma(\firingSequence\MC)^{i(n+1)-j}\not\models_\mathtt{LTL} \phiLTL_1\MC\) holds.
We move the existential quantifier to the inner level and together 
with the induction hypothesis for the \emph{soundness} part we know that 
\(\sigma(\Theta\MC(\firingSequence\MC))^i\not\models_\mathtt{LTL}\phiLTL_1\) or
\(\sigma(\Theta\MC(\firingSequence\MC))^i\not\models_\mathtt{LTL}\phiLTL_2\) holds.
Hence, \(\sigma(\Theta\MC(\firingSequence\MC))^i\not\models_\mathtt{LTL}\phiLTL_1\wedge\phiLTL_2\).
\\
\textbf{Case} \(\phiLTL=\ltlNext \phiLTL_1\). Definition~\ref{def:flowLTL2ltl} yields \(\phiLTL\MC=(\bigvee_{t'\in\tr\MC\setminus\tr}t')\ltlUntil (\bigvee_{t\in\tr} t\wedge\ltlNext \phiLTL_1\MC)\vee (\ltlAlways (\neg \bigvee_{t\in\tr} t)\wedge \phiLTL_1\MC)\).
\\
\noindent\textit{Regarding {soundness}}:
The premise yields \(\sigma(\firingSequence)^i\not\models_\mathtt{LTL} \ltlNext\phiLTL_1\), i.e.,
\(\sigma(\firingSequence)^{i+1}\not\models_\mathtt{LTL} \phiLTL_1\).
The \emph{soundness} part of the induction hypothesis ensures that the statement
(*)~\(\sigma(\Theta(\firingSequence))^{(i+1)(n+1)-j}\not\models_\mathtt{LTL}\phiLTL_1\MC\)
holds for all \(j\in\{0,\ldots,n_0\}\).
We show the conclusion by contradiction. Assume, \(\sigma(\Theta(\firingSequence))^{i(n+1)-j'}\models_\mathtt{LTL} \phiLTL\MC\)
for some of those \(j'\).
In the case that the first disjunct is satisfied, 
condition \textit{(iii)} of \refDef{cexPNwt2PN} ensures if \(j'>0\)
that only transitions of the subnet are added.
Those are skipped by the until and 
since at position \(i(n+1)\) condition \textit{(ii)} of \refDef{cexPNwt2PN} ensures that 
only \(t\in\tr\) can occur \(\sigma(\Theta(\firingSequence))^{i(n+1)}\models_\mathtt{LTL}\bigvee_{t\in\tr} t\wedge\ltlNext \phiLTL_1\MC\).
This is directly given if \(j'=0\).
Hence, \(\sigma(\Theta(\firingSequence))^{i(n+1)+1}\models_\mathtt{LTL}\phiLTL_1\MC\) which is a contradiction to (*)
for \(j=n\). If the second disjunct is satisfied, we know by \(\ltlAlways (\neg \bigvee_{t\in\tr} t)\)
that never an original transition will occur in the future.
Because of \refDef{cexPNwt2PN} this is only possible if \(\firingSequence\) is finite, and therewith also \(\Theta(\firingSequence)\).
Thus, from \(i(n+1)-j\) on \(\Theta(\firingSequence)\) is stuttering and so all atomic propositions stay the same in 
the future. This is a contradiction to (*), since \(\phiLTL_1\MC\) currently holds and therewith for the whole future.
\\[2mm]
\noindent\textit{Regarding {completeness}}:
The premise states the existence of a \(j\in\{0,\ldots,n_0\}\) with \(\sigma(\firingSequence\MC)^{i(n+1)-j}\not\models_\mathit{LTL}(\bigvee_{t'\in\tr\MC\setminus\tr}t')\ltlUntil (\bigvee_{t\in\tr} t\wedge\ltlNext \phiLTL_1\MC)\vee (\ltlAlways (\neg \bigvee_{t\in\tr} t)\wedge \phiLTL_1\MC)\).
For \(j>0\), \refDef{pntfl2pn} ensures that the starting position of the trace corresponds to a situation with an active subnet.
Furthermore, it ensures that only transition \(t'\) of the subnet can be used until position \(i(n+1)\), i.e., \(t'\in\trMC\setminus\tr\).
Since those are skipped within the first disjunct, this yields \(\sigma(\firingSequence\MC)^{i(n+1)}\not\models_\mathit{LTL}\bigvee_{t\in\tr} t\wedge\ltlNext \phiLTL_1\MC\).
For \(j=0\) this directly holds, since \refDef{pntfl2pn} forces every trace of \(\pNetMC\) to start with a transition
\(t\in\tr\) or to contain no transition at all.
Since position \(i(n+1)\) belongs to situations where it is the turn of the original part of the net,
either no original transition is ever fired again (see case of the other disjunct) or
\(\sigma(\firingSequence\MC)^{i(n+1)+1}\not\models_\mathit{LTL}\phiLTL_1\MC\).
For \(j'=n\) the induction hypothesis yields \(\sigma(\Theta\MC(\firingSequence\MC))^{i+1}\not\models_\mathit{LTL}\phiLTL_1\)
and so \(\sigma(\Theta\MC(\firingSequence\MC))^{i}\not\models_\mathit{LTL}\phiLTL\).
If no original transition is ever fired again, the second disjunct yields
\(\sigma(\firingSequence\MC)^{i(n+1)-j}\not\models_\mathit{LTL}\phiLTL_1\MC\).
Since \refDef{pntfl2pn} yields that only transitions \(t\in\tr\) are firable in the \(i(n+1)\) positions of the trace,
this situation must correspond to the stuttering of the trace.
Since all atomic propositions stay the same during the stuttering we know
\(\sigma(\firingSequence\MC)^{i(n+1)+1}\not\models_\mathit{LTL}\phiLTL_1\MC\).
Hence, like in the other case the induction hypothesis yields the conclusion.
\\
\textbf{Case} \(\phiLTL=\phiLTL_1\ltlUntil\phiLTL_2\). Since Definition~\ref{def:flowLTL2ltl} does not concern the until operator
\(\phiLTL\MC=\phiLTL_1\MC\ltlUntil\phiLTL_2\MC\).\\
\noindent\textit{Regarding {soundness}}:
The premise yields \(\forall k\geq 0: \sigma(\firingSequence)^{i+k}\not\models_\mathtt{LTL}\phiLTL_2\vee\exists 0\leq l<k:\sigma(\firingSequence)^{i+l}\not\models_\mathtt{LTL}\phiLTL_1\).
The induction hypothesis applied for all these \(k\)s and \(l\)s yields:
\(\forall k\geq 0: \forall j\in\{0,\ldots,n_0\}: \sigma(\Theta(\firingSequence))^{(i+k)(n+1)-j}\not\models_\mathtt{LTL} \phiLTL_2\MC \vee\exists 0\leq l<k:\forall j\in\{0,\ldots,n_0\}:\sigma(\Theta(\firingSequence))^{(i+l)(n+1)-j}\not\models_\mathtt{LTL} \phiLTL_1\MC\).
For any \(k\) we know that starting from \(i(n+1)\) either all \(k(n+1)-j\) steps are not satisfying \(\phiLTL_2\MC\)
or there is an \(0\leq l < k\) such that all \(l(n+1)-j\) steps are not satisfying \(\phiLTL_1\MC\). Since the \(j\)
ensures that this exactly holds for all steps between \(k\) and \(k+1\), we can also shift the indices such that
\(\forall k\geq 0: \forall j\in\{0,\ldots,n_0\}: \sigma(\Theta(\firingSequence))^{i(n+1)+k-j}\not\models_\mathtt{LTL} \phiLTL_2\MC \vee\exists 0\leq l<k:\forall j\in\{0,\ldots,n_0\}:\sigma(\Theta(\firingSequence))^{i(n+1)+l-j}\not\models_\mathtt{LTL} \phiLTL_1\MC\)
holds. By moving the universal quantifier of the \(j\) to the outside, we obtain the semantical definition of the 
until operator. Hence, \(\forall j\in\{0,\ldots,n_0\}:\sigma(\Theta(\firingSequence))^{i(n+1)-j}\not\models_\mathtt{LTL}\phiLTL_1\MC\ltlUntil\phiLTL_2\MC\).\\[2mm]
\noindent\textit{Regarding {completeness}}:
The premise yields that there is a \(j\in\{0,\ldots,n_0\}\) with 
\(\forall k\geq 0: \sigma(\firingSequence\MC)^{i(n+1)-j+k}\not\models_\mathtt{LTL}\phiLTL_2\MC\vee\exists 0\leq l<k:\sigma(\firingSequence\MC)^{i(n+1)-j+l}\not\models_\mathtt{LTL}\phiLTL_1\MC\).
We can move the existential quantifier for the \(j\) inwards, i.e.,
\(\forall k\geq 0:\exists j\in\{0,\ldots,n_0\}:\sigma(\firingSequence\MC)^{i(n+1)-j+k}\not\models_\mathtt{LTL}\phiLTL_2\MC\vee\exists 0\leq l<k:\exists j'\in\{0,\ldots,n_0\}:\sigma(\firingSequence\MC)^{i(n+1)-j'+l}\not\models_\mathtt{LTL}\phiLTL_1\MC\).
Since this holds for every \(k\) and thus, especially for every \(k(n+1)\).
Since the existence of the \(j'\) can be used to alter the index to the previously existing position \(l\) for every \(k\)
we know \(\forall k\geq 0:\exists j\in\{0,\ldots,n_0\}:\sigma(\firingSequence\MC)^{i(n+1)-j+k(n+1)}\not\models_\mathtt{LTL}\phiLTL_2\MC\vee\exists 0\leq l<k:\exists j'\in\{0,\ldots,n_0\}:\sigma(\firingSequence\MC)^{i(n+1)-j'+l(n+1)}\not\models_\mathtt{LTL}\phiLTL_1\MC\).
The induction hypothesis then yields
\(\forall k\geq 0:\sigma(\Theta\MC(\firingSequence\MC))^{i+k}\not\models_\mathtt{LTL}\phiLTL_2\vee\exists 0\leq l<k:\sigma(\Theta\MC(\firingSequence\MC))^{i+l}\not\models_\mathtt{LTL}\phiLTL_1\), which is the semantical definition of the conclusion.
\hfill\qed
\end{proof}

\begin{lemma}[Flow Part]
\label{lem:flowpart}
Given an LTL formula \(\phiLTL\MC\) created by \refDef{flowLTL2ltl} without condition \textit{(nSub)}
from a flow formula \(\A\,\phiLTL_i\) of a  flow-LTL formula \(\phiRun\).
\begin{itemize}
\item[(s)]
Given a run \(\runPN\) and a covering firing sequence \(\firingSequence\), with
 \(\sigma(\Theta(\firingSequence))^j\models_\mathit{LTL}\ltlAlways\ltlEventually\act{o}\) for all \(j\in\N\), then
\(\runPN, \sigma(\firingSequence)\not\models \A\,\phiLTL_i\implies \sigma(\Theta(\firingSequence))\not\models_\mathtt{LTL} \phiLTL\MC\)
holds.
\item[(c)]
Given a firing sequence \(\firingSequence\MC\) of a run of the net \(\pNetMC\) with
 \(\sigma(\firingSequence\MC)^j\models_\mathit{LTL}\ltlAlways\ltlEventually\act{o}\) for all \(j\in\N\), then
\(\sigma(\firingSequence\MC)\not\models_\mathit{LTL}\phiLTL\MC\implies \Theta_R\MC(\firingSequence\MC), \sigma(\Theta\MC(\firingSequence\MC))\not\models \A\,\phiLTL_i\)
holds.
\end{itemize}
\end{lemma}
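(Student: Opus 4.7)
The plan is to prove both directions of Lemma~\ref{lem:flowpart} by first peeling off the outer wrapper $\initsub_i \LTLweakuntil (\neg \initsub_i \wedge \phiLTL_i^{\pl\tr\mathcal{X}_{d_\mathtt{max}}})$ introduced by condition ($\A$R) of Definition~\ref{def:flowLTL2ltl}, and then performing a nested structural induction on $\phiLTL_i$ that closely mirrors the proof of Lemma~\ref{lem:LTLpart}, but with the unrelated/related sets $O_i$ and $M_i$ of the $i$-th subnet in place of $\trMC\setminus\tr$ and $\tr$.

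For soundness, $\runPN,\sigma(\firingSequence) \not\models \A\,\phiLTL_i$ yields a flow chain $\flowChain^i = p_0^i, t_0^i, p_1^i, \ldots$ with $\sigma(\flowChain^i) \not\models_\mathtt{LTL} \phiLTL_i$. Definition~\ref{def:cexPNwt2PN}(iii) precisely simulates this chain in the $i$-th subnet: the first transition in subnet $i$ that fires is a creation transition from constraint (s3) of Definition~\ref{def:pntfl2pn}, which consumes the token from $\initsub_i$ at position $\Theta_i(\firingSequence,0)$; at each later position $\Theta_i(\firingSequence,k)$ the chain is advanced by a transition from constraint (s4); at every intermediate position either the original part or another subnet fires, or the $i$-th subnet fires its skipping transition $t_{\skipT_i}$. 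In particular $\initsub_i$ is marked on exactly the prefix before $\Theta_i(\firingSequence,0)$ and unmarked thereafter, so the weak until is triggered and reduces the task to showing $\sigma(\Theta(\firingSequence))^{\Theta_i(\firingSequence,0)} \not\models_\mathtt{LTL} \phiLTL_i^{\pl\tr\mathcal{X}_{d_\mathtt{max}}}$. For completeness, the failure of the wrapper forces $\initsub_i$ to eventually become unmarked at some position $j_0$ (otherwise $\ltlAlways \initsub_i$ would satisfy the weak until), and since $\initsub_i$ can only be consumed by a creation transition from (s3), Definition~\ref{def:cexPN2PNwT}(iii) produces a well-defined flow chain $\flowChain^i = \Theta_{\flowChain^i}\MC(\firingSequence\MC)$ together with its progress positions $\Theta_i\MC$.

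The inner induction then establishes the correspondence
\begin{equation*}
\sigma(\flowChain^i)^k \not\models_\mathtt{LTL} \phiLTL_i
\ \Longleftrightarrow \
\forall j'\in[\Theta_i(\firingSequence,k), \Theta_i(\firingSequence,k+1)):\ \sigma(\Theta(\firingSequence))^{j'} \not\models_\mathtt{LTL} \phiLTL_i^{\pl\tr\mathcal{X}_{d_\mathtt{max}}},
\end{equation*}
with the analogous statement for the completeness direction using $\Theta_i\MC$. The base case for atomic propositions $p\in\pl$ uses $\lambda(\subnet{p}_i) = p$ from constraint (s1) and the fact that between two $\Theta_i$-positions only the tokens outside $\plMC_i$ change; the base case for atomic propositions $t\in\tr$ uses that the until $(\bigvee_{t_o\in O_i} t_o)\U(\bigvee_{t\in M_i(t)} t)$ from (tF) is satisfied iff the next related transition of the subnet is a $t$-labelled one, which by (s3)/(s4) is exactly the progress transition for the $k$-th step of the chain. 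The Boolean and next/until cases reuse the arguments from Lemma~\ref{lem:LTLpart} verbatim, with the fairness hypothesis $\ltlAlways\ltlEventually \act{o}$ ensuring that no trace remains stuck inside a subnet forever.

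The main obstacle will be aligning finite flow chains with the stuttering of the simulating trace. When $\flowChain^i$ is finite, the flow-trace $\sigma(\flowChain^i)$ stutters on $p_n$ and contains no more transitions, whereas $\sigma(\Theta(\firingSequence))$ may keep firing unrelated transitions or the skipping transition $t_{\skipT_i}$ forever; the second disjunct $\ltlAlways(\neg \widetilde{M}_i)\wedge \phi$ in the translation (nF) of the next operator is precisely what captures this tail. Proving this alignment requires a careful case split exactly as in the next-operator case of Lemma~\ref{lem:LTLpart}, together with the key observation that constraint (s5) enables $t_{\skipT_i}$ iff no preset place $\subnet{p}_i$ of any related $t$-labelled transition is marked, i.e., iff the chain genuinely cannot be extended via $t$; this ensures that $\widetilde{M}_i$ never holds after position $\Theta_i(\firingSequence,n)$ and that the stutter of the flow trace matches the stutter of the simulating run.
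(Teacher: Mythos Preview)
Your proposal is correct and follows essentially the same approach as the paper: both unwrap the $\initsub_i\LTLweakuntil(\neg\initsub_i\wedge\phiLTL_i\MC)$ layer to reduce to the chain-creation position, then set up an inner structural induction over $\phiLTL_i$ indexed by the progress positions $\Theta_i(\firingSequence,k)$ (respectively $\Theta_i\MC$), noting that the argument parallels the proof of Lemma~\ref{lem:LTLpart} with $O_i$, $M_i$ in place of $\trMC\setminus\tr$, $\tr$ and that the inhibitor arcs of the skipping transition handle the finite-chain tail. Your write-up is in fact more explicit than the paper's own proof, which largely defers to Lemma~\ref{lem:LTLpart} after stating the induction invariant.
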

\begin{proof}[via structural induction over \(\phiLTL_i\)]
We know from \refDef{flowLTL2ltl} that \(\phiLTL\MC=\initsub_i\LTLweakuntil(\neg \initsub_i\wedge\phiLTL_i\MC)\) where \(\phiLTL_i\MC\) is created
from \(\phiLTL_i\) by the constraints \textit{(pF)}, \textit{(tF)}, and \textit{(nF)}.
Since the operator \(\LTLweakuntil\) is an abbreviation \(\phiLTL\MC=(\initsub_i\ltlUntil(\neg \initsub_i\wedge\phiLTL_i\MC)) \vee \ltlAlways\initsub_i\).\\[0.5mm]
\noindent\textit{Regarding \textit{(s)}}.
The premise yields that there is a flow chain \(\flowChain^i=p_0^i,t_0^i,p_1^i, t_1^i,\ldots\) 
such that \(\sigma(\flowChain^i)\not\models_\mathit{LTL} \phiLTL_i\).
Since there is a chain, \refDef{cexPNwt2PN} yields the existence of a transition \(t\)
in \(\Theta(\firingSequence)\) at position \(\Theta_i(\firingSequence,0)-1\)
starting the chain, i.e., \(\rho\MC(t)=\netId(t_o)_{\netId(\rho(p_0^i))_i}\)
for an original transition \(t_o\in\tr_o\MC=\tr\)
with \((\startfl, \rho(p_0^i))\in\tfl(t_o)\).
Condition \textit{(s3)} of \refDef{pntfl2pn} ensures that 
at position \(\Theta_i(\firingSequence,0)\) the place \(\initsub_i\) is unoccupied.
Hence, \(\sigma(\Theta(\firingSequence))\not\models_\mathtt{LTL}\ltlAlways\initsub_i\).
Furthermore, from condition \textit{(in)} follows that \(\initsub_i\) is initially marked.
Definition~\ref{def:cexPNwt2PN} states that \(t\) is the first occurrence of such kind
and \refDef{pntfl2pn} ensure that no other kind takes a token from \(\initsub_i\).
Thus, \(\initsub_i\) is satisfied until position \(\Theta_i(\firingSequence,0)\).
Hence, we only have to show that in such situations
\(\sigma(\Theta(\firingSequence))^{\Theta_i(\firingSequence,0)}\not\models_\mathtt{LTL}\phiLTL_i\MC\)
holds.\\[2mm]
\noindent\textit{Regarding \textit{(c)}}.
The premise states that \(\sigma(\firingSequence\MC)\not\models_\mathit{LTL}(\initsub_i\ltlUntil(\neg \initsub_i\wedge\phiLTL_i\MC)) \vee \ltlAlways\initsub_i\) holds.
Since \(\sigma(\firingSequence\MC)\not\models_\mathit{LTL}\ltlAlways\initsub_i\) \refDef{pntfl2pn} yields
the existence of a transition \(t\) in \(\firingSequence\MC\) with \(\rho\MC(t)=\netId(t_o)_{\netId(q)_i}\)
for an original transition \(t_o\in\tr\MC=\tr\) and a place \(q\in\pl\) with \((\startfl,q)\in\tfl(t_o)\).
Thus, \refDef{cexPN2PNwT} yields the existence of a flow chain \(\Theta_{\flowChain^i}\MC(\firingSequence\MC)\)
and that the first place of the chain corresponds to index \(\Theta_i\MC(\firingSequence\MC,0)\).
Since we know that until this position \(\initsub_i\) is satisfied, the premise yields 
\(\sigma(\firingSequence\MC)^{\Theta_i\MC(\firingSequence\MC,0)}\not\models_\mathit{LTL}\phiLTL_i\MC\).
Hence, we have to show that in those situations \(\sigma(\Theta_{\flowChain^i}\MC(\firingSequence\MC))^0\not\models_\mathtt{LTL}\phiLTL_i\)
holds.

We show these two cases via an induction over \(\phiLTL_i\).
We define two sets \(\Delta=\{\Theta_i(\firingSequence,k),\ldots,\Theta_i(\firingSequence,k+1)\}\) and
\(\Delta\MC=\{\Theta_i\MC(\firingSequence\MC,k),\ldots,\Theta_i\MC(\firingSequence\MC,k+1)\}\) and
show that
\emph{soundness}, i.e.,
\[\forall k\in\N: \sigma(\flowChain)^k\not\models_\mathtt{LTL}\phiLTL_i \implies \forall j\in\Delta:\sigma(\Theta(\firingSequence))^j\not\models_\mathtt{LTL} \phiLTL_i\MC\]
and \emph{completeness}, i.e.,
\[\forall k\in\N: \exists j\in\Delta\MC:\sigma(\firingSequence\MC)^j\not\models_\mathtt{LTL} \phiLTL\MC \implies \sigma(\Theta_{\flowChain^i}\MC(\firingSequence\MC))^k\not\models_\mathtt{LTL}\phiLTL_i\]
hold.
The induction is similarly to the induction in the proof of \refLemma{LTLpart}.
Since in the \emph{(tF)} and \emph{(nF)} constraints of \refDef{flowLTL2ltl} also all time points are skipped
which do not concern the current part of the formula, the different time lines are handled properly.
The difference in this case is that not after exactly \(n\) steps the entry concern
the current part of the formula
but there could be more rounds not concerning the considered flow chain.
This is adequately handled by allowing the
skipping transition in \(O_i\).
We cannot skip in situations where we should not skip,
i.e., in this round a transition \(t_o\in\tr_o\) has fired which moves the current chain, because of the 
inhibitor arcs of the skipping transition.
Another key part is that, as for the LTL part, the tokens of the subnet (apart from the active one) are
independent from every firing of any transition of the other subnets.
\hfill\qed
\end{proof}

\begin{lemma}[Soundness]
\label{lem:sound}
Given a run \(\runPN\), a covering firing sequence \(\firingSequence\),
and an LTL formula \(\phiMC\) created by \refDef{flowLTL2ltl} from a Flow-LTL formula \(\phiRun\).
Then:
\[\runPN,\sigma(\firingSequence)\not\models\phiRun \implies \sigma(\Theta(\firingSequence))\not\models_\mathtt{LTL} \phiMC \]
\end{lemma}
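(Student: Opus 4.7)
The plan is to first reduce the claim to showing $\sigma(\Theta(\firingSequence))\not\models_\mathtt{LTL}\phiRun^\A$, and then to argue by structural induction on the Flow-LTL formula $\phiRun$, invoking Lemma~\ref{lem:LTLpart} at LTL leaves and Lemma~\ref{lem:flowpart} at flow leaves. Since $\phiMC = \ltlAlways\ltlEventually\act{o} \rightarrow \phiRun^\A$, the reduction will follow at once from the construction of $\Theta$: by item~(ii) of Definition~\ref{def:cexPNwt2PN} every $(n{+}1)$th transition in $\Theta(\firingSequence)$ is an original transition, and by constraints~(mO) and~(mSn) of Definition~\ref{def:pntfl2pn} each such transition restores the token on $\act{o}$. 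This will give $\sigma(\Theta(\firingSequence))^j\models_\mathtt{LTL}\ltlAlways\ltlEventually\act{o}$ for every $j\in\N$, which both discharges the antecedent of $\phiMC$ and supplies the side condition required by the two helper lemmata.

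The induction will then proceed along the grammar of run formulas. For $\phiRun = \phiLTL\in\mathtt{LTL}$, the transformation~(tR),~(nR) produces an LTL formula $\phiLTL^\mathtt{MC}$, and Lemma~\ref{lem:LTLpart}(s) applied at position~$0$ (where $n_0=0$) yields the conclusion. For $\phiRun = \A\,\phiLTL_i$, the conclusion is immediate from Lemma~\ref{lem:flowpart}(s) after noting that the substitution $\initsub_i\LTLweakuntil(\neg \initsub_i\wedge\phiLTL_i')$ is precisely the $\phiLTL^\mathtt{MC}$ used there. For $\phiRun = \phiRun_1\wedge\phiRun_2$ or $\phiRun_1\vee\phiRun_2$, the substitution commutes with the Boolean connective, so I would apply the induction hypothesis to the conjunct (respectively, to whichever disjunct fails). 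For $\phiRun = \phiLTL\rightarrow\phiRun_1$, the premise yields $\sigma(\firingSequence)\models_\mathtt{LTL}\phiLTL$ together with $\runPN,\sigma(\firingSequence)\not\models\phiRun_1$; the second conjunct is handled by the induction hypothesis on $\phiRun_1$, while the first must be propagated through the transformation via the contrapositive of the completeness counterpart, Lemma~\ref{lem:LTLpart}(c), using the identity $\Theta\MC\circ\Theta = \mathrm{id}$ implicit in Definitions~\ref{def:cexPNwt2PN} and~\ref{def:cexPN2PNwT}.

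I expect the implication case to be the main technical obstacle. Although soundness is a one-directional statement, the occurrence of $\phiLTL$ on the \emph{left} of $\rightarrow$ forces a positive claim about the transformed trace, which cannot be obtained from Lemma~\ref{lem:LTLpart}(s) alone and instead requires the contrapositive of its completeness companion. In particular, I will need to verify that the positional $n_0$-offsets introduced by the (tR) and (nR) rewrites remain sound when evaluation starts at a position that already sits inside a subnet block, mirroring the bookkeeping already used in the next-operator step of Lemma~\ref{lem:LTLpart}. Once this is in place, the structural induction closes and the lemma follows.
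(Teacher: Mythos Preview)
Your proposal is correct and follows essentially the same route as the paper: discharge the antecedent $\ltlAlways\ltlEventually\act{o}$ via the construction of $\Theta$, then run a structural induction on $\phiRun$ with \refLemma{LTLpart} and \refLemma{flowpart} as base cases. The only cosmetic difference is in the implication case: the paper applies \refLemma{LTLpart}(s) directly to $\neg\phiLTL$ (using that the transformation commutes with negation), whereas you invoke the contrapositive of \refLemma{LTLpart}(c) together with $\Theta\MC\circ\Theta=\mathrm{id}$; these are equivalent, and your worry about positional $n_0$-offsets is unfounded since at the top level of \refLemma{sound} evaluation always starts at position~$0$.
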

\begin{proof}[via structural induction over \(\phiRun\)]\\
Definition \ref{def:flowLTL2ltl} yields that every transformed formula \({\phiMC}'\) is of the form \(\ltlAlways\ltlEventually\act{o}\rightarrow\phiRun^\A\).
Since \refDef{cexPNwt2PN} adds for every existing transition in \(\firingSequence\) also the \(n\) transitions (one for each subnet)
to \(\Theta(\firingSequence)\) (condition (3)) and by condition \textit{(mSn)} of \refDef{pntfl2pn} every transition of 
the last subnet puts a token onto \(\act{o}\), the statement \(\sigma(\Theta(\firingSequence))^i\models_\mathit{LTL}\ltlAlways\ltlEventually\act{o}\)
is satisfied for every \(i\in\N\). This also holds for finite firing sequences, because \refDef{cexPNwt2PN} ensures that
the last added transition is one of the last subnet. The stuttering then yields the satisfaction.
Thus, we only have to show that \(\sigma(\Theta(\firingSequence))\not\models_\mathtt{LTL}\phiRun^\A\).
Let \(\phiMC\) be such a subformula \(\phiRun^\A\).
\\
(IB) \textbf{Case} \(\phiRun=\phiLTL\), for a standard LTL formula \(\phiLTL\). Lemma~\ref{lem:LTLpart} directly yields
\(\sigma(\Theta(\firingSequence))\not\models_\mathtt{LTL} \phiMC\).
\textbf{Case} \(\phiRun=\A\,\phiLTL_i\).
\refLemma{flowpart} proves this case.\\
(IS)
\textbf{Case} \(\phiRun=\phiRun_1\wedge\phiRun_2\).
Since \refDef{flowLTL2ltl} does not concern the conjunction operator of the run part, there are subformulas
\(\phiMC_1\) and \(\phiMC_2\) with \(\phiMC=\phiMC_1\wedge\phiMC_2\),
which are created from the corresponding \(\phiRun_1\) and \(\phiRun_2\), respectively.
The induction hypothesis yields that \(\sigma(\Theta(\firingSequence))\not\models_\mathtt{LTL} \phiMC_1\) or
\(\sigma(\Theta(\firingSequence))\not\models_\mathtt{LTL} \phiMC_2\), thus \(\sigma(\Theta(\firingSequence))\not\models_\mathtt{LTL} \phiMC_1\wedge\phiMC_2\).\\
\textbf{Case} \(\phiRun=\phiRun_1\vee\phiRun_2\).
Since \refDef{flowLTL2ltl} also does not concern the disjunction operator of the run part, this case is analogously done 
as the previous case.\\
\textbf{Case} \(\phiRun=\phiLTL\rightarrow\phiRun_2\).
Since \refDef{flowLTL2ltl} also does not concern the implication operator there are subformulas
\(\phiMC_1\) and \(\phiMC_2\) with \(\phiMC=\phiMC_1\rightarrow\phiMC_2\),
which are created from \(\phiLTL\) and \(\phiRun_2\), respectively.
The premise of the statement yields \(\runPN,\sigma(\firingSequence)\models\phiLTL\),
thus \(\runPN,\sigma(\firingSequence)\not\models\neg\phiLTL\), and 
\(\runPN,\sigma(\firingSequence)\not\models\phiRun_2\).
Since \(\neg\phiLTL\) is still a standard LTL formula, \refLemma{LTLpart} yields 
\(\sigma(\Theta(\firingSequence))\not\models_\mathtt{LTL} \neg\phiMC_1\) (the \refDef{flowLTL2ltl} does not concern the negation).
The induction hypothesis ensures \(\sigma(\Theta(\firingSequence))\not\models_\mathtt{LTL}\phiRun_2\), and so 
\(\sigma(\Theta(\firingSequence))\not\models_\mathtt{LTL} \phiMC_1\rightarrow\phiMC_2\).\hfill\qed
\end{proof}

\begin{lemma}[Completeness]
\label{lem:completeness}
Given a firing sequence \(\firingSequence\MC\) of a run of \(\pNetMC\),
and an LTL formula \(\phiMC\) created by \refDef{flowLTL2ltl} from a flow-LTL formula \(\phiRun\).
Then:
\[\sigma(\firingSequence\MC)\not\models_\mathtt{LTL}\phiMC\implies\Theta_R\MC(\firingSequence\MC),\sigma(\Theta\MC(\firingSequence\MC))\not\models\phiRun\]
\end{lemma}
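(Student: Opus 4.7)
The plan is to mirror the structure of the Soundness proof (Lemma~\ref{lem:sound}), proceeding by structural induction on $\phiRun$ and appealing to the completeness halves of the two helper lemmas in the base cases. By constraint~\textit{(nSub)} of Definition~\ref{def:flowLTL2ltl}, the transformed formula has the shape $\phiMC = \ltlAlways\ltlEventually\,\act{o} \rightarrow \phiRun^\A$. The hypothesis $\sigma(\firingSequence\MC) \not\models_\mathtt{LTL} \phiMC$ therefore splits into the two conjuncts $\sigma(\firingSequence\MC) \models_\mathtt{LTL} \ltlAlways\ltlEventually\,\act{o}$ and $\sigma(\firingSequence\MC) \not\models_\mathtt{LTL} \phiRun^\A$. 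Since $\ltlAlways\ltlEventually\,\varphi$ is closed under taking suffixes, the first conjunct propagates to every suffix $\sigma(\firingSequence\MC)^j$, which is exactly the side condition demanded by Lemmas~\ref{lem:LTLpart}(c) and~\ref{lem:flowpart}(c). It thus suffices to derive $\Theta_R\MC(\firingSequence\MC),\,\sigma(\Theta\MC(\firingSequence\MC)) \not\models \phiRun$ from $\sigma(\firingSequence\MC) \not\models_\mathtt{LTL} \phiRun^\A$ by induction on~$\phiRun$.

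For the base cases, if $\phiRun = \phiLTL$ lies entirely outside any flow operator, Lemma~\ref{lem:LTLpart}(c) directly yields $\sigma(\Theta\MC(\firingSequence\MC)) \not\models_\mathtt{LTL} \phiLTL$; and if $\phiRun = \A\,\phiLTL_i$, Lemma~\ref{lem:flowpart}(c) does the same job for the flow subformula, invoking the fact that $\Theta\MC$ reconstructs a genuine flow chain witnessing the failure. For the inductive step, Definition~\ref{def:flowLTL2ltl} leaves the Boolean connectives of the run part untouched, so the translations of $\phiRun_1 \wedge \phiRun_2$, $\phiRun_1 \vee \phiRun_2$, and $\phiLTL \rightarrow \phiRun_2$ are $\phiMC_1 \wedge \phiMC_2$, $\phiMC_1 \vee \phiMC_2$, and $\phiMC_1 \rightarrow \phiMC_2$ respectively. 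The $\wedge$- and $\vee$-cases reduce to the induction hypothesis applied to the failing subformula(s) by standard propositional reasoning.

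The implication case requires the dual of the trick used in Soundness. From $\sigma(\firingSequence\MC) \not\models_\mathtt{LTL} \phiMC_1 \rightarrow \phiMC_2$ we obtain $\sigma(\firingSequence\MC) \models_\mathtt{LTL} \phiMC_1$ and $\sigma(\firingSequence\MC) \not\models_\mathtt{LTL} \phiMC_2$; the induction hypothesis on the second yields $\Theta_R\MC(\firingSequence\MC),\,\sigma(\Theta\MC(\firingSequence\MC)) \not\models \phiRun_2$. For the first, I rewrite it as $\sigma(\firingSequence\MC) \not\models_\mathtt{LTL} \neg\phiMC_1$; since the translation leaves negation unchanged, $\neg\phiMC_1$ is the translation of the LTL formula $\neg\phiLTL$, and Lemma~\ref{lem:LTLpart}(c) applied to $\neg\phiLTL$ delivers $\sigma(\Theta\MC(\firingSequence\MC)) \models_\mathtt{LTL} \phiLTL$, completing the falsification of the implication.

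The main obstacle is not in this outer induction, which is essentially bookkeeping, but in correctly lining up the side condition $\ltlAlways\ltlEventually\,\act{o}$ so that the helper lemmas are available, and in trusting that their completeness halves already handle the delicate part: reconstructing a flow chain from the subnet activity of $\firingSequence\MC$ and showing that the rewrites introduced by constraints~\textit{(tF)}, \textit{(nF)}, \textit{(tR)}, and \textit{(nR)} faithfully reflect non-satisfaction back through the timeline-skipping \emph{until}-operators. Once those lemmas are in place, the present proof routes the failure of $\phiMC$ through the Boolean skeleton of $\phiRun$ in a purely schematic way.
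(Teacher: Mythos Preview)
Your proposal is correct and takes essentially the same approach as the paper: the paper's proof of Lemma~\ref{lem:completeness} simply observes that the premise yields $\sigma(\firingSequence\MC)\models_\mathtt{LTL}\ltlAlways\ltlEventually\act{o}$ and then declares every other step ``analog to the arguments of the proof of \refLemma{sound}.'' You have spelled out precisely those analogous steps---the structural induction on $\phiRun$, the appeals to the completeness halves of Lemmas~\ref{lem:LTLpart} and~\ref{lem:flowpart} in the base cases, and the Boolean-connective cases including the negation trick for implication---so your write-up is in fact more explicit than the paper's own.
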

\begin{proof}[via structural induction over \(\phiRun\)]\\
Again, every \({\phiMC}'\) is of the form \(\ltlAlways\ltlEventually\act{o}\rightarrow\phiRun^\A\).
The premise of the statement therewith yields
\(\sigma(\firingSequence\MC)\models_\mathtt{LTL}\ltlAlways\ltlEventually\act{o}\).
Every other argument is analog to the arguments of the proof of \refLemma{sound}.
\hfill\qed
\end{proof}

Finally, we are able to prove \refLemma{correctnessPNLTL}.

\begin{proof}[\refLemma{correctnessPNLTL} (Correctness of the Transformation)]\\
\noindent\textit{Regarding soundness:}
We show the contraposition of the statement: \(\pNet \not\models \phiRun\implies \pNetMC\not\models_\mathtt{LTL}\phiMC\).
Hence, there is a run \(\runPN\) of \(\pNet\) and a covering firing sequence \(\firingSequence\)
such that \(\runPN,\sigma(\firingSequence)\not\models\phiRun\).
\refLemma{sound} yields that the firing sequence \(\Theta(\firingSequence)\)
fulfills \(\sigma(\Theta(\firingSequence))\not\models_\mathtt{LTL}\phiMC\).
Thus, there exists a run \({\runPN}\MC\) (created from \(\Theta(\firingSequence)\) by iteratively adding the places of the markings
and the transitions of the firing sequence and connecting them according their corresponding places and transitions in \(\pNetMC\)) which 
is covered by \(\Theta(\firingSequence)\), such that \({\runPN}\MC\not\models_\mathtt{LTL}\phiMC\), and thus \(\pNetMC\not\models_\mathtt{LTL}\phiMC\).

\noindent\textit{Regarding completeness.}
We analogously show the contraposition of the statement: \(\pNetMC \not\models_\mathtt{LTL} \phiMC\implies \pNet\not\models\phiRun\).
Hence, there is a run \({\runPN}\MC\) of \(\pNetMC\) and a covering firing sequence \(\firingSequence\MC\)
such that \(\sigma(\firingSequence\MC)\not\models_\mathtt{LTL}\phiMC\).
\refLemma{completeness} yields that the firing sequence \(\Theta\MC(\firingSequence\MC)\)
fulfills \(\Theta_R\MC(\firingSequence\MC),\sigma(\Theta\MC(\firingSequence\MC))\not\models\phiRun\).
Hence, \(\pNet\not\models\phiRun\).
\hfill\qed
\end{proof}

\subsection{Construction of the Circuit}
We formally define the circuit \(\circuit_\pNet\)
for a P/T Petri net with inhibitor arcs \(\pNet=(\pl,\tr,\fl,\inhibitorFl,\init)\)
and the transformed formula \(\phiLTL'\) of an LTL formula \(\phiLTL\)
described in \refSection{pnMCwithCircuits}.

For a circuit \(\circuit=(\cin,\cout,\clatches,\cformula)\)
with a Boolean formula \(\cformula\) over \(\cin\times \clatches \times \cout\times \clatches\),
we use decoration to express the correspondence of the variables
in the second and fourth component \(\clatches\).
Thus, if \(x\) denotes the current value of a latch in the second component \(\clatches\)
then \(x'\) denotes the new value of that latch after the next clock pulse in the fourth component of \(\clatches\).
This decoration is also lifted to sets.
For a tuple \((I,L,O,L')\in 2^\cin\times 2^\clatches\times 2^\cout\times 2^\clatches\)
we say \((I,L,O,L')\) \emph{satisfies} \(\cformula\) (denoted by \((I,L,O,L')\models \cformula\))
iff \(\cformula\) is satisfied under the valuation which maps each occurring variable to \(\true\) and all others to \(\false\).
\begin{definition}[P/T Petri Net to Circuit]
\label{def:pn2circuit}
For a P/T Petri net with inhibitor arcs \(\pNet=(\pl,\tr,\fl,\inhibitorFl,\init)\),
we define the \emph{circuit} \(\circuit_\pNet=(\cin,\cout,\clatches,\cformula)\) with
the set of \emph{input variables} \(\cin=\tr\),
the set of \emph{output variables} \(\cout=\{p_o\with p\in\pl\}\cup\{t_o\with t\in\tr\}\cup\{e_o\}\), 
the set of \emph{latches} \(\clatches=\pl\cup\{\mathtt{i},\mathtt{e}\}\) with an \emph{initialisation latch} \(\mathtt{i}\) and a latch for \emph{invalid inputs} \(\mathtt{e}\),
and a \emph{boolean formula} \(\cformula=\mathtt{out}_P\wedge\mathtt{out}_T\wedge\mathtt{out_e}\wedge\mathtt{latch_e}\wedge\mathtt{latch_i}\wedge\mathtt{latch}_P\)
over \(\cin\times \clatches\times \cout\times \clatches\)
which is defined with the help of the following formulas:
\begin{align*}
\mathtt{val}(t)&\defEQ t \wedge \bigwedge_{t_1\in\tr\setminus\{t\}}\neg t_1\wedge\bigwedge_{p\in \preset{t}} \left\{\begin{array}{rl}\neg p &\text{if } (p,t)\in\inhibitorFl \\ p & \text{otherwise}\end{array}\right.,\\
\mathtt{noT}&\defEQ \bigwedge_{t\in\tr} \neg \mathtt{val}(t),\\
\mathtt{succ}(p)&\defEQ (\mathtt{noT}\rightarrow p) \wedge (\neg\mathtt{noT}\rightarrow\bigwedge_{t\in\tr} (t_o\rightarrow\left\{\begin{array}{ll} p & \text{if } p\nin\preset{t}\wedge p\nin\postset{t} \\ 0 & \text{if } p\in\preset{t}\wedge p\nin\postset{t}\\1 &\text{otherwise}\end{array}\right.)).
\end{align*}
The formula \(\mathtt{val}(t)\) for a \(t\in\tr\) states the validity of \(t\), i.e., \(t\) is set as input but
no other transition is set and \(t\) is enabled by the current state of the latches.
The formula \(\mathtt{noT}\) is true iff no transition is valid and
the formula \(\mathtt{succ}(p)\) for a place \(p\in\pl\) defines the successor value for \(p\).
If there is no valid input we keep the same marking.
Otherwise, the marking is the successor marking of the current output transition \(t_o\) and the current marking.
Therewith, the conjuncts of \(\cformula\) are defined as follows:
\begin{align*}
\mathtt{out_P}&\defEQ\bigwedge_{p\in\pl} (p_o\leftrightarrow (\neg\mathtt{i}\rightarrow p')\wedge(\mathtt{i}\rightarrow p)),\\
\mathtt{out_T}&\defEQ\bigwedge_{t\in\tr} (t_o\leftrightarrow \mathtt{val}(t)),\\
\mathtt{out_e}&\defEQ e_o \leftrightarrow \mathtt{e},\\ 
\mathtt{latch_e}&\defEQ e' \leftrightarrow \mathtt{i}\wedge\mathtt{noT},\\
\mathtt{latch_i}&\defEQ i' \leftrightarrow \true,\\
\mathtt{latch}_P&\defEQ \bigwedge_{p\in\pl} (p' \leftrightarrow \left\{\begin{array}{ll}\mathtt{i}\rightarrow \mathtt{succ}(p)&\text{if } p\in\init\\\mathtt{i}\wedge\mathtt{succ}(p)&\text{otherwise}\end{array}\right.).
\end{align*}
In all but the initial state, the outputs corresponding to places
are the current values of the latches.
The outputs corresponding to the transitions are at most one valid transition.
The new value for the latches corresponding to places are initially the 
initial marking of \(\pNet\).
Otherwise, if no valid input is applied, the current values of the latches are copied to the new values
and if there is a valid transition, the successor marking of firing this transition in the current values is used for the new values.
\end{definition}
For all subsets \(P\subseteq\pl\) and \(T\subseteq\tr\), we define with \(P_o=\{p_o\in\cout\with p\in P\}\)
and \(T_o=\{t_o\in\cout\with t\in T\}\) the respective sets of the output variables.
We define the \emph{transformed formula} \(\phiLTL'\) by skipping the initialisation step and 
focusing on the valid traces:
\[\phiLTL'=\ltlNext(\ltlAlways(e_o\rightarrow\ltlAlways e_o)\rightarrow \widetilde{\phiLTL})\]
where \(\widetilde{\phiLTL}\) is obtained from \(\phiLTL\) by replacing every place and transition with the 
corresponding output variable.

A \emph{Kripke structure} is a five-tuple
\(\kripke=(\katoms, \kstates, \kinit,\krel,\klab)\), with
a set of \emph{atoms}~\(\katoms\),
a set of \emph{states}~\(\kstates\),
a set of \emph{initial states}~\(\kinit\,\subseteq\kstates\),
a \emph{transition relation}~\(\krel\subseteq\kstates\times\kstates\),
and a \emph{labelling function}~\(\klab:\kstates\to 2^\katoms\).
A \emph{path}~\(\pi=\pi_0\pi_1\cdots\) of a Kripke structure is an
infinite sequence of states \(\pi_i\in\kstates\) for \(i\in\N\)
with \((\pi_i,\pi_{i+1})\in\krel\). The path \(\pi\) is \emph{initial} iff \(\pi_0\in\kinit\).
A Kripke structure \(\kripke\) \emph{satisfies} an LTL formula \(\phiLTL\)
(denoted by \(\kripke\models \phiLTL\)) iff every initial path satisfies \(\phiLTL\).

\begin{definition}[Circuit to Kripke Structure]
\label{def:circuit2kripke}
The \emph{Kripke structure} \(\kripke_\circuit\) of a circuit \(\circuit\) is defined by
\(\kripke_\circuit=(\katoms, \kstates, \kinit,\krel,\klab)\), with
the set of \emph{atoms} \(\katoms=\cout\),
the set of \emph{states} \(\kstates=2^\cin\times 2^\clatches\times 2^\cout\times 2^\clatches\),
the set of \emph{initial states} \(\kinit=\{(I,\emptyset,\init_o,\init'\cup\{\mathtt{i}'\} \with I\subseteq\tr\}\),
the \emph{transition relation} \(\krel=\{((I_1,L_1,O_1,L_1'),(I_2,L_2,O_2,L_2'))\in\kstates\times\kstates \with L_1'=L_2 \wedge (I_2,L_2,O_2,L_2') \models \cformula\}\),
and the \emph{labelling function} \(\klab=\{((I,L,O,L'),O)\in\kstates\times 2^\cout\}\).
\end{definition}

\subsection{Correctness of the Constructions}
We show the correctness of the construction of the circuit \(\circuit_\pNet\) and 
the corresponding Kripke structure \(\kripke_{\circuit_\pNet}\) by proving
\(\pNet\not\models_\mathtt{LTL} \phiLTL\text{ iff }\kripke_{\circuit_\pNet}\not\models \phiLTL'\).
For this purpose, we mutually transform the respective counterexamples and show their 
correspondence on the atomic propositions of \(\phiLTL\) and \(\phiLTL'\), respectively.
This already yields the correctness part of \refLemma{corCircuit}.
Finally, we have everything at hand to prove \refTheo{correctness}.

Given a trace \(\sigma(\firingSequence)\)
of a firing sequence \(\firingSequence\)
covering a run \(\runPN\)
of a safe P/T Petri net with inhibitor arcs \(\pNet\)
and a path \(\pi=s_0s_1\cdots\) of the corresponding Kripke structure 
\(\kripke_{\circuit_\pNet}=(\katoms, \kstates, \kinit,\krel,\klab)\).
We say an entry \(z_i=\sigma(\firingSequence)(i)\in 2^{\pl\cup\tr}\) of the trace
and an element \(s_j=(I_j,L_j,O_j,L_j')\in\kstates\) of the path \emph{coincide} (denoted by \(z_i\sim s_j\)) iff
\(z_{|\tr}=\{t\in\tr\with t_o\in O_j\wedge e_o\nin O_j\}\) and \(z_{|\pl}=\{p\in\pl\with p_o\in O_j\}\).
Where \(z_{|\tr}\) and \(z_{|\pl}\) are the projections onto the respective sets.

\begin{proof}[\refLemma{corCircuit} (Correctness of the Circuit)]
The number of latches and gates directly follows from \refDef{pn2circuit}.
No formula has more than two nesting conjunctions over \(\pl\) or \(\tr\).
The correctness is proven on the corresponding Kripke structure \(\kripke_{\circuit_\pNet}\)
via contraposition.
We show \(\pNet\not\models_\mathtt{LTL} \phiLTL\) iff \(\kripke_{\circuit_\pNet}\not\models \phiLTL'\)
by transforming the counterexamples.\\
\emph{Soundness:} Let \(\pNet\not\models_\mathtt{LTL} \phiLTL\).
Thus, there is a run \(\runPN=(\pNet^R,\rho)\)
and a covering firing sequence \(\firingSequence=M_0\firable{t_0}M_1\firable{t_1}\cdots\)
such that \(\sigma(\firingSequence)\not\models_\mathtt{LTL} \phiLTL\).
We create a path \(\pi=\pi_0\pi_1\cdots\) of \(\kripke_{\circuit_\pNet}\) by 
\(\pi_0=(\emptyset,\emptyset,\rho(M_0)_o,\rho(M_0)'\cup\{\mathtt{i}'\})\)
and if \(\firingSequence\) is infinite
\[\pi_i=(\{\rho(t_{i-1})\},\rho(M_{i-1})\cup\{\mathtt{i}\},\{\rho(t_{i-1})_o\}\cup\rho(M_{i-1})_o,\rho(M_{i})'\cup\{\mathtt{i}'\})\]
for all \(i\in\N\setminus\{0\}\).
If \(\firingSequence=M_0\firable{t_0}M_1\firable{t_1}\cdots\firable{t_{n-1}}M_n\) is finite 
the first \(n\) states are created as above and for the first \(j>n\) we define
\[\pi_j=(\emptyset,\rho(M_{n})\cup\{\mathtt{i}\},\rho(M_{n})_o,\rho(M_{n})'\cup\{\mathtt{i}',\mathtt{e}'\})\]
and for all other we define \(\pi_j\) equally but that the current values of the latches also contain \(\mathtt{e}\)
and that also \(e_o\) is set for the outputs.

We show that the sequence \(\pi\) is indeed an initial path of \(\kripke_{\circuit_\pNet}\).
Since \(\rho(M_0)=\init\), because \(\firingSequence\) covers a run of \(\pNet\), it directly holds \(\pi_0\in\kinit\).
We show that for all \(i\in\N:(\pi_i,\pi_{i+1})\in\krel\), i.e., \(L_i'=L_{i+1}\) and \((I_{i+1},L_{i+1},O_{i+1},L_{i+1}') \models \cformula\)
for \(\pi_i=(I_i,L_i,O_i,L_i')\) and \(\pi_{i+1}=(I_{i+1},L_{i+1},O_{i+1},L_{i+1}')\).
The first clause can directly be seen by the definition of \(\pi\).
We show that all the defined \(\pi_i\) in the finite as well as the infinite case satisfy \(\cformula\) by checking 
each of the conjuncts:
The conjunct \(\mathtt{out}_P\) is satisfied, because the current values \(p\) are set as output (\(\mathtt{i}\) is true)
and this fits to the definition of \(\pi_i\).
The conjunct \(\mathtt{out}_T\) is satisfied, because \(t_{i-1}\) is enabled in \(M_{i-1}\) because of \(\firingSequence\)
and no other transition is set in \(\pi_i\).
The conjunct \(\mathtt{latch}_P\) is satisfied, because \(\mathtt{succ}(p)\) yields 
the marking resulting by firing \(t_{i-1}\) in the current values of the latches (here \(\rho(M_{i-1})\)) because \(\mathtt{noT}\) is not satisfied.
Because of \(\firingSequence\), this is \(\rho(M_i)\), which fits \(\pi_i\).
The other conjuncts are directly satisfied by the construction.
In the case of a finite firing sequence the defined \(\pi_j\) also satisfy \(\cformula\):
The conjunct \(\mathtt{out}_P\) is satisfied, with the same arguments as in the previous case.
The conjunct \(\mathtt{latch}_P\) is satisfied, because \(\mathtt{noT}\) is satisfied and therewith the current values
of the latches are taken 
and \(\mathtt{out}_T\) is also satisfied, because no transition is applied to the input.
Also because of \(\mathtt{noT}\) the conjunct \(\mathtt{latch}_\mathtt{e}\) is satisfied
and with the additional constraints for all but the first \(j>n\) the conjunct \(\mathtt{out}_\mathtt{e}\) is satisfied
for all \(i\in\N\).
The subpath \(\pi^1\) satisfies \(\ltlAlways(e_o\rightarrow \ltlAlways e_o)\), because no \(e_o\) is ever set by the 
construction in the case of an infinite \(\firingSequence\) and in the other case \(e_o\) is set for every \(j+1>n\).
Since by the construction \(\sigma(\firingSequence)(i)\sim \phi_{i+1}\) directly holds,
the path \(\pi\) does not satisfy \(\ltlNext(\ltlAlways(e_o\rightarrow\ltlAlways e_o)\rightarrow \widetilde{\phiLTL})\).
Hence, \(\kripke_{\circuit_\pNet}\not\models \phiLTL'\).
\\
\emph{Completeness:} Let \(\kripke_{\circuit_\pNet}\not\models \phiLTL'\).
Thus, there is an initial path \(\pi=\pi_0\pi_1\cdots\) of \(\kripke_{\circuit_\pNet}\) not satisfying 
\(\ltlNext(\ltlAlways(e_o\rightarrow\ltlAlways e_o)\rightarrow \widetilde{\phiLTL})\).
Hence, the subpath \(\pi^1\) satisfies \(\ltlAlways(e_o\rightarrow\ltlAlways e_o)\) and not \(\widetilde{\phiLTL}\).
In the case that \(e_o\nin O_i\) for all \(i\in\N\) holds, we create a infinite firing sequence
\(\firingSequence=M_0\firable{t_0}M_1\firable{t_1}\cdots\) with
\(\rho(M_i)=\{p\in\pl\with p_o\in O_{i+1}\}\) and \(\rho(t_i)\in\{t\in\tr\with t_o\in O_{i+1}\}\).
Otherwise, if there is an \(i\in\N\) with \(e_o\nin O_i\), say \(i=n+1\) is the first of such occurrences,
we create a finite firing sequence
\(\firingSequence=M_0\firable{t_0}M_1\firable{t_1}\cdots\firable{t_{n-1}}M_n\) 
for the first \(n-1\) steps as before and for the \(n\)th step we
define \(\rho(M_n)=\{p\in\pl\with p_o\in O_{n+1}\}\) as above.

This is indeed a firing sequence, because \(\rho(M_0)=\init\) holds
by the definition of \(\kripke_{\circuit_\pNet}\), the construction of \(\firingSequence\), and since \(\pi\) is initial.
All transitions are firable and yield the respective successor marking because of \(\cformula\),
since \(\pi\) is a path.
This is because the error flag is maximally set in the last step, and thus, there must be exactly one enabled transition applied to the inputs.
Therewith \(\mathtt{succ}(p)\) yields the correct successor marking and this is the output of the next step.
Since \(\pi\) satisfies \(\ltlNext(\ltlAlways(e_o\rightarrow\ltlAlways e_o)\),
we know that also in the finite case of \(\firingSequence\) it holds
\(\sigma(\firingSequence)(i)\sim \phi_{i+1}\), since in \(\pi\) the output markings stay the same in situations where \(e_o\) is
set, because this is only possible if \(\mathtt{notT}\) is true.
Since \(\pi^1\) does not satisfy \(\widetilde{\phiLTL}\), \(\sigma(\firingSequence)\not\models_\mathtt{LTL}\phiLTL\).
Constructing the corresponding run yields the conclusion.
\hfill\qed
\end{proof}

\begin{proof}[\refTheo{correctness}]
For a Petri net with transits \(\pNet\) and a Flow-LTL formula~\(\phiRun\),
Lemma~\ref{lem:sizePN} and \ref{lem:sizeFormula} yield the polynomial size of the constructed P/T Petri net with inhibitor arcs \(\pNetMC\)
and the constructed formula \(\phiMC\). Lemma~\ref{lem:correctnessPNLTL} yields the equivalent satisfiability
and \refLemma{corCircuit} shows the polynomial size of the constructed, satisfiability-equivalent circuit \(\circuit_\pNet\)
and of the formula \(\phiLTL'\).
This results in a Kripke structure of exponential size, which can be checked in linear time
in the size of the state space and in exponential time in the size of the formula~\cite{DBLP:reference/mc/2018}.
\hfill\qed
\end{proof}

\section{Complete Experimental Results}
\label{appendix:com_ex_results}
In this section of the appendix, we present the complete experimental results from the benchmark families presented in \refSection{experimentalResults}. Table~\ref{table:benchmarks} is a snapshot of this table.

\setlength{\tabcolsep}{1pt}
\begin{longtable}{c!{\vline}c!{\vline}r!{\vline}rrr!{\vline}rrr!{\vline}rr!{\vline}!{\vline}rr!{\vline}!{\vline}c}
\caption{Experimental results from the benchmark families {Switch Failure} (SF) and Redundant Pipeline (RP), and the case study {Routing Update} (RU). The results are the average over five runs on an Intel i7-2700K CPU with 3.50~GHz, 32~GB RAM, and a timeout of 30~minutes. For the case study RU, results are only listed when at least the falsifier had no timeout.}\\
& & &\multicolumn{3}{c|}{PNwT} & \multicolumn{3}{c|}{Translated PN} & \multicolumn{2}{c||}{Circuit} & \multicolumn{2}{c||}{Result}\\
Ben. & Par. & \(\#S\) &\(|\mathscr{P}|\) & \(|\mathscr{T}|\) & \(|\varphi|\) & \(|\mathscr{P}^>|\)  & \(|\mathscr{T}^>|\)  & \(|\psi'|\)  & Lat.  & Gat.  & Sec. & Algo. & \(\models\)\\\hline
SF 
   &   3  & 4 & 4 & 5 & 35 & 17 & 22 & 60 & 90 & 2796 &  2.7 & IC3 & \cmark \\
   &   4  & 5 & 5 & 6 & 45 & 20 & 26 & 73 & 106 & 3580 &  8.5 & IC3 & \cmark \\
   &   5  & 6 & 6 & 7 & 55 & 23 & 30 & 86 & 122 & 4444 &  18.8 & IC3 & \cmark \\
   &   6  & 7 & 7 & 8 & 65 & 26 & 34 & 99 & 138 & 5388 &  49.8 & IC3 & \cmark \\
   &   7  & 8 & 8 & 9 & 75 & 29 & 38 & 112 & 154 & 6412 &  128.9 & IC3 & \cmark \\
   &   8  & 9 & 9 & 10 & 85 & 32 & 42 & 125 & 170 & 7516 & 291.9 & IC3 & \cmark \\
   &   9  & 10 & 10 & 11 & 95 & 35 & 46 & 138 & 186 & 8700 & 1359.9 & IC3 & \cmark \\
   &   10  & 11 & 11 & 12 & 105 & 38 & 50 & 151 & 202 & 9964 & TO & - & ? \\\hline 
RP 
  &    1/1/B   &  4  &  4 &  5  &  43  &  17  &  22  &  68  &  100  &  2989   &   4.0   & IC3 &   \cmark \\
  &    1/2/B   &  5  &  5 &  6  &  53  &  20  &  26  &  81  &  116  &  3773   &   11.0   & IC3 &   \cmark \\
  &    1/3/B   &  6  &  6 &  7  &  63  &  23  &  30  &  94  &  132  &  4637   &   16.4   & IC3 &   \cmark \\
  &    1/4/B   &  7  &  7 &  8  &  73  &  26  &  34  &  107  &  148  &  5581   &   66.6   & IC3 &   \cmark \\
  &    1/5/B   &  8  &  8 &  9  &  83  &  29  &  38  &  120  &  164  &  6605   &   99.3   & IC3 &   \cmark \\
\cdashline{2-14}
  &    2/1/B   &  5  &  5 &  6  &  53  &  20  &  26  &  81  &  116  &  3773   &   4.6   & IC3 &   \cmark \\
  &    2/2/B   &  6  &  6 &  7  &  63  &  23  &  30  &  94  &  132  &  4637   &   38.6   & IC3 &   \cmark \\
  &    2/3/B   &  7  &  7 &  8  &  73  &  26  &  34  &  107  &  148  &  5581   &   68.4   & IC3 &   \cmark \\
  &    2/4/B   &  8  &  8 &  9  &  83  &  29  &  38  &  120  &  164  &  6605   &   133.0   & IC3 &   \cmark \\
  &    2/5/B   &  9  &  9 &  10  &  93  &  32  &  42  &  133  &  180  &  7709   &   204.6   & IC3 &   \cmark \\
\cdashline{2-14}
  &    3/1/B   &  6  &  6 &  7  &  63  &  23  &  30  &  94  &  132  &  4637   &   1004.9   & IC3 &   \cmark \\
  &    3/2/B   &  7  &  7 &  8  &  73  &  26  &  34  &  107  &  148  &  5581   &   76.2   & IC3 &   \cmark \\
  &    3/3/B   &  8  &  8 &  9  &  83  &  29  &  38  &  120  &  164  &  6605   &   192.3   & IC3 &   \cmark \\
  &    3/4/B   &  9  &  9 &  10  &  93  &  32  &  42  &  133  &  180  &  7709   &   310.9   & IC3 &   \cmark \\
  &    3/5/B   &  10  &  10 &  11  &  103  &  35  &  46  &  146  &  196  &  8893   &   TO   & - &   ? \\
\cdashline{2-14}
  &    4/1/B   &  7  &  7 &  8  &  73  &  26  &  34  &  107  &  148  &  5581   &   77.4   & IC3 &   \cmark \\
  &    4/2/B   &  8  &  8 &  9  &  83  &  29  &  38  &  120  &  164  &  6605   &   139.7   & IC3 &   \cmark \\
  &    4/3/B   &  9  &  9 &  10  &  93  &  32  &  42  &  133  &  180  &  7709   &   594.7   & IC3 &   \cmark \\
  &    4/4/B   &  10  &  10 &  11  &  103  &  35  &  46  &  146  &  196  &  8893   &   646.4   & IC3 &   \cmark \\
  &    4/5/B   &  11  &  11 &  12  &  113  &  38  &  50  &  159  &  212  &  10157   &   TO   & - &   ? \\
\cdashline{2-14}
  &    5/1/B   &  8  &  8 &  9  &  83  &  29  &  38  &  120  &  164  &  6605   &   126.9   & IC3 &   \cmark \\
  &    5/2/B   &  9  &  9 &  10  &  93  &  32  &  42  &  133  &  180  &  7709   &   580.0   & IC3 &   \cmark \\
  &    5/3/B   &  10  &  10 &  11  &  103  &  35  &  46  &  146  &  196  &  8893   &   TO   & - &   ? \\
  &    5/4/B   &  11  &  11 &  12  &  113  &  38  &  50  &  159  &  212  &  10157   &   TO   & - &   ? \\
  &    5/5/B   &  12  &  12 &  13  &  123  &  41  &  54  &  172  &  228  &  11501   &   TO   & - &   ? \\
\cline{2-14}
&   1/1/U   & 6 & 6 & 9 & 63 & 25 & 36 & 100 & 136 & 5535 &   1.6   & BMC2 &   \xmark \\
&   1/2/U   & 7 & 7 & 10 & 73 & 28 & 40 & 113 & 152 & 6579 &   2.7   & BMC3 &   \xmark \\
&   1/3/U   & 8 & 8 & 11 & 83 & 31 & 44 & 126 & 168 & 7703 &   6.3   & BMC2 &   \xmark \\
&   1/4/U   & 9 & 9 & 12 & 93 & 34 & 48 & 139 & 184 & 8907 &   21.1   & BMC3 &   \xmark \\
&   1/5/U   & 10 & 10 & 13 & 103 & 37 & 52 & 152 & 200 & 10191 &   30.6   & BMC2 &   \xmark \\
\cdashline{2-14}
&   2/1/U   & 7 & 7 & 10 & 73 & 28 & 40 & 113 & 152 & 6579 &   2.3   & BMC3 &   \xmark \\
&   2/2/U   & 8 & 8 & 11 & 83 & 31 & 44 & 126 & 168 & 7703 &   4.2   & BMC3 &   \xmark \\
&   2/3/U   & 9 & 9 & 12 & 93 & 34 & 48 & 139 & 184 & 8907 &   18.4   & BMC3 &   \xmark \\
&   2/4/U   & 10 & 10 & 13 & 103 & 37 & 52 & 152 & 200 & 10191 &   23.7   & BMC &   \xmark \\
&   2/5/U   & 11 & 11 & 14 & 113 & 40 & 56 & 165 & 216 & 11555 &   105.2   & BMC2 &   \xmark \\
\cdashline{2-14}
&   3/1/U   & 8 & 8 & 11 & 83 & 31 & 44 & 126 & 168 & 7703 &   8.6   & BMC2 &   \xmark \\
&   3/2/U   & 9 & 9 & 12 & 93 & 34 & 48 & 139 & 184 & 8907 &   14.4   & BMC2 &   \xmark \\
&   3/3/U   & 10 & 10 & 13 & 103 & 37 & 52 & 152 & 200 & 10191 &   22.6   & BMC3 &   \xmark \\
&   3/4/U   & 11 & 11 & 14 & 113 & 40 & 56 & 165 & 216 & 11555 &   70.4   & BMC2 &   \xmark \\
&   3/5/U   & 12 & 12 & 15 & 123 & 43 & 60 & 178 & 232 & 12999 &   180.9   & BMC3 &   \xmark \\
\cdashline{2-14}
&   4/1/U   & 9 & 9 & 12 & 93 & 34 & 48 & 139 & 184 & 8907 &   15.2   & BMC3 &   \xmark \\
&   4/2/U   & 10 & 10 & 13 & 103 & 37 & 52 & 152 & 200 & 10191 &   49.0   & BMC3 &   \xmark \\
&   4/3/U   & 11 & 11 & 14 & 113 & 40 & 56 & 165 & 216 & 11555 &   75.6   & BMC3 &   \xmark \\
&   4/4/U   & 12 & 12 & 15 & 123 & 43 & 60 & 178 & 232 & 12999 &   144.4   & BMC3 &   \xmark \\
&   4/5/U   & 13 & 13 & 16 & 133 & 46 & 64 & 191 & 248 & 14523 &   883.7   & BMC2 &   \xmark \\
\cdashline{2-14}
&   5/1/U   & 10 & 10 & 13 & 103 & 37 & 52 & 152 & 200 & 10191 &   38.4   & BMC2 &   \xmark \\
&   5/2/U   & 11 & 11 & 14 & 113 & 40 & 56 & 165 & 216 & 11555 &   87.3   & BMC3 &   \xmark \\
&   5/3/U   & 12 & 12 & 15 & 123 & 43 & 60 & 178 & 232 & 12999 &   182.8   & BMC3 &   \xmark \\
&   5/4/U   & 13 & 13 & 16 & 133 & 46 & 64 & 191 & 248 & 14523 &   945.1   & BMC3 &   \xmark \\
&   5/5/U   & 14 & 14 & 17 & 143 & 49 & 68 & 204 & 264 & 16127 &   TO   & - &   ? \\
\cline{2-14}
&   1/1/M  & 6 & 9 & 11 & 63 & 30 & 42 & 106 & 146 & 6908 &   8.1   & BMC3 &   \xmark \\
&   1/2/M  & 7 & 10 & 12 & 73 & 33 & 46 & 119 & 162 & 8081 &   15.4   & BMC2 &   \xmark \\
&   1/3/M  & 8 & 11 & 13 & 83 & 36 & 50 & 132 & 178 & 9334 &   59.1   & BMC3 &   \xmark \\
&   1/4/M  & 9 & 12 & 14 & 93 & 39 & 54 & 145 & 194 & 10667 &   128.5   & BMC3 &   \xmark \\
&   1/5/M  & 10 & 13 & 15 & 103 & 42 & 58 & 158 & 210 & 12080 &   373.2   & BMC &   \xmark \\
\cdashline{2-14}
&   2/1/M  & 7 & 10 & 12 & 73 & 33 & 46 & 119 & 162 & 8081 &   11.7   & BMC3 &   \xmark \\
&   2/2/M  & 8 & 11 & 13 & 83 & 36 & 50 & 132 & 178 & 9334 &   45.1   & BMC3 &   \xmark \\
&   2/3/M  & 9 & 12 & 14 & 93 & 39 & 54 & 145 & 194 & 10667 &   149.5   & BMC2 &   \xmark \\
&   2/4/M  & 10 & 13 & 15 & 103 & 42 & 58 & 158 & 210 & 12080 &   457.5   & BMC &   \xmark \\
&   2/5/M  & 11 & 14 & 16 & 113 & 45 & 62 & 171 & 226 & 13573 &   TO   & - &   ? \\
\cdashline{2-14}
&   3/1/M  & 8 & 11 & 13 & 83 & 36 & 50 & 132 & 178 & 9334 &   47.3   & BMC3 &   \xmark \\
&   3/2/M  & 9 & 12 & 14 & 93 & 39 & 54 & 145 & 194 & 10667 &   129.9   & BMC2 &   \xmark \\
&   3/3/M  & 10 & 13 & 15 & 103 & 42 & 58 & 158 & 210 & 12080 &   347.0   & BMC2 &   \xmark \\
&   3/4/M  & 11 & 14 & 16 & 113 & 45 & 62 & 171 & 226 & 13573 &   1433.3   & BMC3 &   \xmark \\
&   3/5/M  & 12 & 15 & 17 & 123 & 48 & 66 & 184 & 242 & 15146 &   TO   & - &   ? \\
\cdashline{2-14}
&   4/1/M  & 9 & 12 & 14 & 93 & 39 & 54 & 145 & 194 & 10667 &   121.6   & BMC3 &   \xmark \\
&   4/2/M  & 10 & 13 & 15 & 103 & 42 & 58 & 158 & 210 & 12080 &   370.6   & BMC2 &   \xmark \\
&   4/3/M  & 11 & 14 & 16 & 113 & 45 & 62 & 171 & 226 & 13573 &   1449.6   & BMC2 &   \xmark \\
&   4/4/M  & 12 & 15 & 17 & 123 & 48 & 66 & 184 & 242 & 15146 &   TO   & - &   ? \\
&   4/5/M  & 13 & 16 & 18 & 133 & 51 & 70 & 197 & 258 & 16799 &   8.8   & BMC3 &   \xmark \\
\cdashline{2-14}
&   5/1/M  & 10 & 13 & 15 & 103 & 42 & 58 & 158 & 210 & 12080 &   462.7   & BMC2 &   \xmark \\
&   5/2/M  & 11 & 14 & 16 & 113 & 45 & 62 & 171 & 226 & 13573 &   1740.1   & BMC2 &   \xmark \\
&   5/3/M  & 12 & 15 & 17 & 123 & 48 & 66 & 184 & 242 & 15146 &   TO   & - &   ? \\
&   5/4/M  & 13 & 16 & 18 & 133 & 51 & 70 & 197 & 258 & 16799 &   TO   & - &   ? \\
&   5/5/M  & 14 & 17 & 19 & 143 & 54 & 74 & 210 & 274 & 18532 &   TO   & - &   ? \\
\cline{2-14}
  	&    1/1/C   & 6 &  9  &  11  &  70  &  30  &  42  &  113  &  151  &  7023   &   63.1   &    IC3 &   \cmark \\
  	&    1/2/C   & 7 &  10  &  12  &  80  &  33  &  46  &  126  &  167  &  8196   &   183.5   &    IC3 &   \cmark \\
  	&    1/3/C   & 8 &  11  &  13  &  90  &  36  &  50  &  139  &  183  &  9449   &   301.7   &    IC3 &   \cmark \\
  	&    1/4/C   & 9 &  12  &  14  &  100  &  39  &  54  &  152  &  199  &  10782   &   565.8   &    IC3 &   \cmark \\
  	&    1/5/C   & 10&  13  &  15  &  110  &  42  &  58  &  165  &  215  &  12195   &   TO   &    - &   ? \\
\cdashline{2-14}
  	&    2/1/C   & 7 &  10  &  12  &  80  &  33  &  46  &  126  &  167  &  8196   &   105.9   &    IC3 &   \cmark \\
  	&    2/2/C   & 8 &  11  &  13  &  90  &  36  &  50  &  139  &  183  &  9449   &   693.5   &    IC3 &   \cmark \\
  	&    2/3/C   & 9 &  12  &  14  &  100  &  39  &  54  &  152  &  199  &  10782   &   689.3   &    IC3 &   \cmark \\
  	&    2/4/C   & 10&  13  &  15  &  110  &  42  &  58  &  165  &  215  &  12195   &   TO   &    - &   ? \\
  	&    2/5/C   & 11&  14  &  16  &  120  &  45  &  62  &  178  &  231  &  13688   &   885.8   &    IC3 &   \cmark \\
\cdashline{2-14}
  	&    3/1/C   & 8 &  11  &  13  &  90  &  36  &  50  &  139  &  183  &  9449   &   451.5   &    IC3 &   \cmark \\
  	&    3/2/C   & 9 &  12  &  14  &  100  &  39  &  54  &  152  &  199  &  10782   &   1252.4   &    IC3 &   \cmark \\
  	&    3/3/C   & 10&  13  &  15  &  110  &  42  &  58  &  165  &  215  &  12195   &   1218.0   &    IC3 &   \cmark \\
  	&    3/4/C   & 11&  14  &  16  &  120  &  45  &  62  &  178  &  231  &  13688   &    TO   &    - &   ? \\
  	&    3/5/C   & 12&  15  &  17  &  130  &  48  &  66  &  191  &  247  &  15261   &    TO   &    - &   ? \\
\cdashline{2-14}
    &    4/1/C   & 9 &  12  &  14  &  100  &  39  &  54  &  152  &  199  &  10782   &   1717.1   &    IC3 &   \cmark \\
  	&    4/2/C   & 10&  13  &  15  &  110  &  42  &  58  &  165  &  215  &  12195   &   775.1   &    IC3 &   \cmark \\
  	&    4/3/C   & 11&  14  &  16  &  120  &  45  &  62  &  178  &  231  &  13688   &    TO   &    - &   ? \\
  	&    4/4/C   & 12&  15  &  17  &  130  &  48  &  66  &  191  &  247  &  15261   &    TO   &    - &   ? \\
  	&    4/5/C   & 13&  16  &  18  &  140  &  51  &  70  &  204  &  263  &  16914   &    TO   &    - &   ? \\
\cdashline{2-14}
  	&    5/1/C   & 10&  13  &  15  &  110  &  42  &  58  &  165  &  215  &  12195   &    TO   &    - &   ? \\
  	&    5/2/C   & 11&  14  &  16  &  120  &  45  &  62  &  178  &  231  &  13688   &    1641.3   &    IC3 &   \cmark \\
  	&    5/3/C   & 12&  15  &  17  &  130  &  48  &  66  &  191  &  247  &  15261   &    TO   &    - &   ? \\
  	&    5/4/C   & 13&  16  &  18  &  140  &  51  &  70  &  204  &  263  &  16914   &    TO   &    - &   ? \\
  	&    5/5/C   & 14&  17  &  19  &  150  &  54  &  74  &  217  &  279  &  18647   &    TO   &    - &   ? \\
\cline{1-14}
RU
  &    Arpanet196912T  & 4  & 14 & 10 & 117 & 31 & 39 & 154 & 188 & 7483  &   22.7   & IC3  &  \cmark \\ 
  &    Arpanet196912F  & 4  & 14 & 10 & 117 & 31 & 39 & 154 & 188 & 7483  &   2.0    & BMC3 &   \xmark \\
  &    NapnetT         & 6  & 23 & 17 & 199 & 48 & 64 & 254 & 292 & 15875 &   95.1  & IC3 &   \cmark \\ 
  &    NapnetF         & 6  & 23 & 17 & 199 & 48 & 64 & 254 & 292 & 15875 &   4.7   & BMC3 &   \xmark \\ 
  &    EpochT          & 6  & 23 & 17 & 199 & 48 & 64 & 254 & 292 & 15875 &   169.2   & IC3 &   \cmark \\ 
  &    EpochF          & 6  & 23 & 17 & 199 & 48 & 64 & 254 & 292 & 15875 &   18.6   & BMC3&   \xmark \\
  &    TelecomserbiaT  & 6  & 24 & 18 & 205 & 47 & 59 & 254 & 300 & 14779 & 492.3 & IC3 & \cmark \\ 
  &    TelecomserbiaF  & 6  & 24 & 18 & 205 & 47 & 59 & 254 & 300 & 14779 & 51.2 & BMC3 & \xmark \\ 
  &    Layer42T        & 6  & 24 & 18 & 209 & 49 & 65 & 264 & 304 & 16465 & 597.4  & IC3 &   \cmark \\ 
  &    Layer42F        & 6  & 24 & 18 & 209 & 49 & 65 & 264 & 304 & 16465 &   14.6   & BMC3 &   \xmark \\
  &    SanrenT         & 7  & 24 & 17 & 199 & 50 & 64 & 254 & 296 & 16177 & 1353.9  & IC3 &   \cmark \\ 
  &    SanrenF         & 7  & 24 & 17 & 199 & 50 & 64 & 254 & 296 & 16177 &   137.0   & BMC3 &   \xmark \\
  &    GetnetT         & 7  & 25 & 18 & 213 & 53 & 71 & 274 & 312 & 18601 & 1418.8  & IC3 &   \cmark \\ 
  &    GetnetF         & 7  & 25 & 18 & 213 & 53 & 71 & 274 & 312 & 18601 &   13.3   & BMC3 &   \xmark \\
  &    NetrailT        & 7  & 30 & 23 & 271 & 62 & 88 & 344 & 380 & 26101 & 145.3 & IC3 &   \cmark \\ 
  &    NetrailF        & 7  & 30 & 23 & 271 & 62 & 88 & 344 & 380 & 26101 &  58.3 & BMC3 &   \xmark \\
  &    Arpanet19706T  & 9  & 33 & 24 & 281 & 67 & 89 & 354 & 400 & 27619 & 507.8 & IC3 &   \cmark \\ 
  &    Arpanet19706F  & 9  & 33 & 24 & 281 & 67 & 89 & 354 & 400 & 27619 & 49.7 & BMC3 &   \xmark \\
  &    NsfcnetT        & 10 & 31 & 22 & 261 & 65 & 87 & 334 & 376 & 26181 &    304.8   & IC3 &  \cmark  \\ 
  &    NsfcnetF        & 10 & 31 & 22 & 261 & 65 & 87 & 334 & 376 & 26181 &   8.4   & BMC3 &   \xmark \\
\cline{2-14}
  &    HeanetT         & 7  & 33 & 26 & 305 & 67 & 97 & 384 & 420 & 30675 &    TO   & - &    ? \\ 
  &    HeanetF         & 7  & 33 & 26 & 305 & 67 & 97 & 384 & 420 & 30675 & 76.8 & BMC3 &   \xmark \\
  &    Nordu2005T      & 9  & 30 & 21 & 247 & 62 & 80 & 314 & 360 & 23269 & TO & - &   ? \\ 
  &    Nordu2005F      & 9  & 30 & 21 & 247 & 62 & 80 & 314 & 360 & 23269 & 48.5 & BMC3 &   \xmark \\
  &    ClaranetT       & 15 & 55 & 40 & 473 & 111 & 153 & 594 & 648 & 70127 &    TO   & - &    ? \\ 
  &    ClaranetF       & 15 & 55 & 40 & 473 & 111 & 153 & 594 & 648 & 70127 & 213.8 & BMC3 &   \xmark \\
  &    Garr199901T     & 16 & 56 & 40 & 473 & 113 & 153 & 594 & 652 & 70785 &    TO   & - &    ? \\ 
  &    Garr199901F     & 16 & 56 & 40 & 473 & 113 & 153 & 594 & 652 & 70785 & 795.4 & BMC3 &   \xmark \\
  &    FatmanT         & 17 & 62 & 45 & 535 & 126 & 176 & 674 & 728 & 89701 &    TO   & - &    ? \\ 
  &    FatmanF         & 17 & 62 & 45 & 535 & 126 & 176 & 674 & 728 & 89701 & 139.3 & BMC3 &   \xmark \\
  &    Nordu2010T      & 18 & 56 & 38 & 449 & 106 & 133 & 552 & 618 & 57950 &    TO   & - &    ? \\ 
  &    Nordu2010F      & 18 & 56 & 38 & 449 & 106 & 133 & 552 & 618 & 57950 & 136.1 & BMC2 &   \xmark \\
  &    PacificWaveT    & 18 & 68 & 50 & 589 & 135 & 187 & 734 & 796 & 101171 &    TO   & - &    ? \\ 
  &    PacificWaveF    & 18 & 68 & 50 & 589 & 135 & 187 & 734 & 796 & 101171 & 1707.4 & BMC2 &   \xmark \\
  &    TwarenT         & 20 & 65 & 45 & 531 & 130 & 170 & 664 & 736 & 87493 &    TO   & - &    ? \\ 
  &    TwarenF         & 20 & 65 & 45 & 531 & 130 & 170 & 664 & 736 & 87493 & 461.5 & BMC3 &   \xmark \\
  &    MarnetT         & 20 & 77 & 57 & 679 & 156 & 224 & 854 & 908 & 138103 &    TO   & - &    ? \\ 
  &    MarnetF         & 20 & 77 & 57 & 679 & 156 & 224 & 854 & 908 & 138103 & 746.1 & BMC3 &   \xmark \\
  &    JanetlenseT     & 20 & 91 & 71 & 847 & 184 & 280 & 1064 & 1104 & 203595 &    TO   & - &    ? \\ 
  &    JanetlenseF     & 20 & 91 & 71 & 847 & 184 & 280 & 1064 & 1104 & 203595 & 514.2 & BMC2 &   \xmark \\
  &    HarnetT         & 21 & 71 & 50 & 593 & 143 & 193 & 744 & 812 & 108415 &    TO   & - &    ? \\ 
  &    HarnetF         & 21 & 71 & 50 & 593 & 143 & 193 & 744 & 812 & 108415 & 919.0 & BMC3 &   \xmark \\
  &    Belnet2009T     & 21 & 71 & 50 & 597 & 145 & 199 & 754 & 816 & 113397 &    TO   & - &    ? \\ 
  &    Belnet2009F     & 21 & 71 & 50 & 597 & 145 & 199 & 754 & 816 & 113397 & 1163.3 & BMC2 &   \xmark \\
  &    Garr200404T     & 22 & 73 & 51 & 607 & 148 & 200 & 764 & 832 & 115567 &    TO   & - &    ? \\ 
  &    Garr200404F     & 22 & 73 & 51 & 607 & 148 & 200 & 764 & 832 & 115567 &  79.4 & BMC3 &   \xmark \\
  &    IstarT          & 23 & 73 & 50 & 593 & 147 & 193 & 744 & 820 & 110051 &    TO   & - &    ? \\ 
  &    IstarF          & 23 & 73 & 50 & 593 & 147 & 193 & 744 & 820 & 110051 & 276.6 & BMC3 &   \xmark \\
  &    Garr199905T     & 23 & 77 & 54 & 641 & 155 & 209 & 804 & 876 & 125707 &    TO   & - &    ? \\ 
  &    Garr199905F     & 23 & 77 & 54 & 641 & 155 & 209 & 804 & 876 & 125707 & 157.7 & BMC3 &   \xmark \\
  &    Garr199904T     & 23 & 77 & 54 & 641 & 155 & 209 & 804 & 876 & 125707 &    TO   & - &    ? \\ 
  &    Garr199904F     & 23 & 77 & 54 & 641 & 155 & 209 & 804 & 876 & 125707 & 460.2 & BMC2 &   \xmark \\
  &    UranT           & 24 & 56 & 38 & 449 & 106 & 133 & 552 & 618 & 57950 &    TO   & - &    ? \\ 
  &    UranF           & 24 & 56 & 38 & 449 & 106 & 133 & 552 & 618 & 57950 & 143.2 & BMC3 &   \xmark \\
  &    KentmanFeb2008T & 26 & 82 & 56 & 669 & 167 & 223 & 844 & 920 & 142291 &    TO   & - &    ? \\ 
  &    KentmanFeb2008F & 26 & 82 & 56 & 669 & 167 & 223 & 844 & 920 & 142291 &  111.2 & BMC3 &   \xmark \\
  &    Garr200212T     & 27 & 86 & 59 & 703 & 174 & 232 & 884 & 964 & 153509 &    TO   & - &    ? \\ 
  &    Garr200212F     & 27 & 86 & 59 & 703 & 174 & 232 & 884 & 964 & 153509 & 324.2 & BMC3 &   \xmark \\
  &    IinetT          & 31 & 104 & 73 & 871 & 210 & 288 & 1094 & 1176 & 227153 &    TO   & - &    ? \\ 
  &    IinetF          & 31 & 104 & 73 & 871 & 210 & 288 & 1094 & 1176 & 227153 & 1244.5 & BMC3 &   \xmark \\
  &    KentmanJan2011T & 38 & 117 & 79 & 943 & 236 & 312 & 1184 & 1288 & 269943 &    TO   & - &    ? \\ 
  &    KentmanJan2011F & 38 & 117 & 79 & 943 & 236 & 312 & 1184 & 1288 & 269943 &  112.6 & BMC3 &   \xmark \\
\cline{2-13}
 \end{longtable}

\end{document}